\documentclass[a4paper]{article}
\pdfoutput=1
\usepackage[dvipsnames]{xcolor} % For code listings
\definecolor{linkcolor}{rgb}{0,0.1,0}
\usepackage[backref=page,
             colorlinks,
             linkcolor=blue,
             citecolor=linkcolor,
             linkcolor=linkcolor,
             urlcolor=linkcolor
             ]{hyperref}
\usepackage{amssymb,amsmath}
\usepackage{amsthm}
\usepackage[T1]{fontenc}
\usepackage[english]{babel}
\usepackage[a4paper]{geometry}
\usepackage{microtype}
\usepackage{authblk}

\usepackage{xparse}
\usepackage{bbm}
\usepackage{stmaryrd}
\usepackage{mathpartir}
\usepackage{tikz-cd}
\usepackage{listings}
\usepackage{mathtools}

\usepackage{authblk}% For affiliations
\bibliographystyle{plainurl}% the recommended bibstyle

\title{Guarded Cubical Type Theory}

\author[1]{Lars Birkedal}
\author[1]{Ale\v{s} Bizjak}
\author[1]{Ranald Clouston}
\author[1]{Hans~Bugge~Grathwohl}
\author[1]{Bas~Spitters}
\author[2]{Andrea Vezzosi}
\affil[1]{Department of Computer Science, Aarhus University, Denmark}
\affil[2]{Department of Computer Science and Engineering, Chalmers University of Technology, Sweden}

\theoremstyle{plain}
\newtheorem{theorem}{Theorem}[section]
\newtheorem{lemma}[theorem]{Lemma}
\newtheorem{corollary}[theorem]{Corollary}
\theoremstyle{definition}
\newtheorem{definition}[theorem]{Definition}
\newtheorem{example}[theorem]{Example}
\theoremstyle{remark}

\newenvironment{diagram}{\begin{tikzcd}[sep=large]}{\end{tikzcd}}

\DeclareDocumentCommand{\later}{ o m }{
  \IfNoValueTF{#1}
  {\mathord{\triangleright}#2}
  {\mathord{\triangleright}#1 . #2}}

\DeclareDocumentCommand{\pure}{ o m }{
  \IfNoValueTF{#1}
  {\term{next}#2}
  {\term{next} #1 . \, #2}}

\DeclareDocumentCommand{\gstream}{ o m }
{\IfNoValueTF{#1}
  {\ensuremath{\term{Str}_{#2}}}
  {\ensuremath{\term{Str}^{#1}_{#2}}}}

\DeclareDocumentCommand{\eqjudg}{ m m m o }{
    \IfNoValueTF{#4}
    {\ensuremath{#1 \vdash #2 = #3}}
    {\ensuremath{#1 \vdash #2 = #3 : #4}}
}

\DeclareDocumentCommand{\dfix}{ o m m }{
    \IfNoValueTF{#1}
    {\dfixEmp #2 . #3}
    {\dfixEmp^{#1} #2 . #3}
}

\DeclareDocumentCommand{\fix}{ o m m }{
    \IfNoValueTF{#1}
    {\fixEmp #2 . #3}
    {\fixEmp^{#1} #2 . #3}
}

\newcommand{\LfixEmp}{\operatorname{fix}}
\DeclareDocumentCommand{\Lfix}{ m m }{
  \LfixEmp #1 . #2
}

\newcommand{\CC}{\mathbb{C}}
\newcommand{\DD}{\mathbb{D}}
\newcommand{\Face}{\ensuremath{\mathbbm{F}}}
\newcommand{\Glue}{\term{Glue}}
\newcommand{\Id}{\term{Id}}
\newcommand{\Ineg}[1]{\ensuremath{1-#1}}
\newcommand{\I}{\ensuremath{\mathbbm{I}}}
\newcommand{\LCompTy}[2]{\Phi(#1;#2)}
\newcommand{\LElCompEmp}{\operatorname{Comp}}
\newcommand{\LElComp}[1]{\LElCompEmp(#1)}
\newcommand{\LElEmp}{\operatorname{El}}
\newcommand{\LEl}[1]{\LElEmp(#1)}
\newcommand{\LFace}{\mathbbm{F}}
\newcommand{\LGlueEmp}{\operatorname{Glue}}
\newcommand{\LGlue}[4]{\LGlueEmp \, \left[#1 \mapsto (#2,#3)\right] \, #4}
\newcommand{\LglueEmp}{\operatorname{glue}}
\newcommand{\Lglue}[3]{\LglueEmp \, \left[#1 \mapsto #2\right] \, #3}
\newcommand{\Lunglue}{\operatorname{unglue}}
\newcommand{\LIdop}{\operatorname{Id}}
\newcommand{\LId}[3]{\LIdop_{#1} (#2, #3 )}

\newcommand{\LUf}{\LU_{f}}

\newcommand{\LU}{\mathcal{U}}
\newcommand{\Lfaceto}[2]{\ensuremath{\Lface{#2}\to #1}}
\newcommand{\Lface}[1]{\left[#1\right]}
\newcommand{\Lhastype}[3]{\ensuremath{#1 \vdash #2 : #3}}
\newcommand{\Llater}{\operatorname{\triangleright}}
\newcommand{\Lwftype}[2]{\ensuremath{#1 \vdash #2}}
\newcommand{\Nat}{\term{N}}
\newcommand{\OneF}{\ensuremath{1_{\Face}}}
\newcommand{\Path}{\term{Path}}
\newcommand{\UG}{\mathfrak{U}}
\newcommand{\U}{\term{U}}
\newcommand{\Ycom}{\term{Y}}
\newcommand{\ZeroF}{\ensuremath{0_{\Face}}}

\newcommand{\abs}[2]{\langle #1 \rangle #2}
\newcommand{\app}{\circledast}
\newcommand{\bfc}{\mathbf{c}}
\newcommand{\bff}{\mathbf{f}}
\newcommand{\bfl}{\boldsymbol\ell}
\newcommand{\bisim}{\term{bisim}}
\newcommand{\bnfeq}{::=}
\newcommand{\cL}{\ensuremath{\mathcal{L}}}
\newcommand{\code}[1]{\widehat{#1}}
\newcommand{\comp}[4]{\term{comp}^#1\,#2~#3~#4}
\newcommand{\ctt}{\ensuremath{\mathsf{CTT}}}
\newcommand{\cube}{\mathcal{C}}
\newcommand{\dM}[1]{\mathbf{DM}\left(#1\right)}
\newcommand{\defeq}{\triangleq}
\newcommand{\den}[1]{\left\llbracket #1 \right\rrbracket}
\newcommand{\dfixEmp}{\term{dfix}}
\newcommand{\dsubst}[3]{\ensuremath{\vdash #1 : #2\rightarrowtriangle #3}}
\newcommand{\Elem}{\ensuremath{\term{El}}}
\newcommand{\Equiv}{\ensuremath{\term{Equiv}}}
\newcommand{\finset}{\operatorname{Fin}}
\newcommand{\fixEmp}{\term{fix}}
\newcommand{\fold}{\term{fold}}
\newcommand{\gctt}{\ensuremath{\mathsf{GCTT}}}
\newcommand{\gdtt}{\ensuremath{\mathsf{GDTT}}}
\newcommand{\hastype}[3]{\ensuremath{#1 \vdash #2 : #3}}
\newcommand{\hd}{\term{hd}}
\newcommand{\Hom}[3]{\ensuremath{\mathbf{Hom}_{#1}\left(#2,#3\right)}}
\newcommand{\hrt}[1]{\left[ #1 \right]}
\newcommand{\IDD}{\I^{\DD}}
\newcommand{\II}{\I}
\newcommand{\interpret}[1]{\llbracket #1 \rrbracket}
\newcommand{\isetsep}{\;\ifnum\currentgrouptype=16 \middle\fi|\;}

\newcommand{\Iw}{\I^{\omega}}
\newcommand{\join}{\vee}
\newcommand{\KK}{\mathbb{K}}
\newcommand{\latercode}{\code{\triangleright}}
\DeclareDocumentCommand{\Leq}{ m m m o }{
    \IfNoValueTF{#4}
    {\ensuremath{#1 \vdash #2 = #3}}
    {\ensuremath{#1 \vdash #2 = #3 : #4}}
}
\newcommand{\LEq}[2]{{#1 = #2}}
\newcommand{\LFaceDD}{\LFace^{\DD}}
\newcommand{\LFillTy}[2]{\Psi(#1,#2)}
\newcommand{\LNat}{\mathbb{N}}
\newcommand{\Lnext}{\operatorname{next}}
\newcommand{\LPathop}{\operatorname{Path}}
\newcommand{\LPath}[3]{\LPathop_{#1} \, #2 ~ #3}
\newcommand{\Lrefl}{\operatorname{refl}}
\newcommand{\LTrue}{\operatorname{\top}}
\newcommand{\LUDD}{\mathcal{U}^\DD}
\newcommand{\LUfDD}{\LU_{f}^\DD}
\newcommand{\Lwfcxt}[1]{\ensuremath{#1 \vdash}}
\newcommand{\meet}{\wedge}
\newcommand{\natrec}{\term{natrec}}
\newcommand{\op}{^{\operatorname{op}}}
\newcommand{\ott}{\ensuremath{\mathsf{OTT}}}
\newcommand{\pathlambda}[1]{\mathop{\ensuremath{\langle #1 \rangle}}}
\newcommand{\prev}{\operatorname{prev}}
\newcommand{\Prop}{\Omega}
% the pullback "angle" marking
\newcommand{\pullbacktip}{\arrow[dr, phantom, "\scalebox{2}{\ensuremath{\lrcorner}}", very near start]}
\newcommand{\psh}[1]{\ensuremath{\widehat{#1}}}
\newcommand{\restr}[2]{#1_{\restriction_{#2}}}
\newcommand{\Sets}{\operatorname{Set}}
\newcommand{\subst}[2]{[#1/#2]}
\newcommand{\term}[1]{\ensuremath{\operatorname{\mathsf{#1}}}}
\newcommand{\tl}{\term{tl}}
\newcommand{\Tl}{\ensuremath{\mathcal{T}}}
\newcommand{\Tm}{\operatorname{Tm}}
\newcommand{\totcube}{{\widehat{\cube \times \omega}}}
\newcommand{\two}{\ensuremath{\mathbbm{2}}}
\newcommand{\Ty}{\operatorname{Ty}}
\newcommand{\unfold}{\term{unfold}}
\newcommand{\vrt}[1]{\left[\begin{array}{l} #1 \end{array}\right]}
\newcommand{\wfcxt}[1]{\ensuremath{#1 \vdash}}
\newcommand{\wftype}[2]{\ensuremath{#1 \vdash #2}}
\newcommand{\zero}{\term{0}}

\renewcommand{\phi}{\varphi}
\renewcommand{\succ}{\term{s}}

\theoremstyle{plain}
\newtheorem{proposition}[theorem]{Proposition}

\newcommand{\gctturl}{\url{http://github.com/hansbugge/cubicaltt/tree/gcubical}}

\theoremstyle{definition}
\newtheorem{assumption}{Assumption}

\makeatletter
\newcommand{\pushright}[1]{\ifmeasuring@#1\else\omit\hfill$\displaystyle#1$\fi\ignorespaces}
\makeatother

\begin{document}

\maketitle

\begin{abstract}
This paper improves the treatment of equality in guarded dependent type theory ($\gdtt$), by combining it with cubical type theory ($\ctt$).
$\gdtt$ is an extensional type theory with guarded recursive types, which are useful for building models of program logics, and for programming and reasoning with coinductive types.
We wish to implement $\gdtt$ with decidable type checking, while still supporting non-trivial equality proofs that reason about the extensions of guarded recursive constructions.
$\ctt$ is a variation of Martin-L\"of type theory in which the identity type is replaced by abstract paths between terms.
$\ctt$ provides a computational interpretation of functional extensionality, enjoys canonicity for the natural numbers type, and is conjectured to support decidable type-checking.
Our new type theory, guarded cubical type theory ($\gctt$), provides a computational interpretation of extensionality for guarded recursive types.
This further expands the foundations of $\ctt$ as a basis for formalisation in mathematics and computer science.
We present examples to demonstrate the expressivity of our type theory, all of which have
been checked using a prototype type-checker implementation.
We show that $\ctt$ can be given semantics in presheaves on $\cube \times \DD$, where $\cube$ is the cube category,
and $\DD$ is any small category with an initial object. We then show that the category of presheaves on
$\cube \times \omega$ provides semantics for $\gctt$.
\end{abstract}

%%%%%%%%%%%%%%%%%%%%%%%%%%%%%%%%%%%%%%%%%%%%%%%%%%%%%%%%%%%

\section{Introduction}
\label{sec:intro}

Guarded recursion is a technique for defining and reasoning about infinite objects. Its applications include the definition of
productive operations on data structures more commonly defined via coinduction, such as streams, and the construction
of models of program logics for modern programming languages with features such as higher-order store and
concurrency~\cite{Birkedal:Step}. This is done via the type-former $\later$, called `later', which distinguishes
data which is available immediately from data only available after some computation, such as the unfolding of a
fixed-point. For example, guarded recursive streams are defined by the equation
\[
  \gstream{A} \;=\; A \times \later{\gstream{A}}
\]
rather than the more standard $\gstream{A} = A \times\gstream{A}$, to specify that the head is available now but
the tail only later. The type for guarded fixed-point combinators is then $(\later A\to A)\to A$, rather than the logically inconsistent
$(A\to A)\to A$, disallowing unproductive definitions such as taking the fixed-point of the identity function.

Clouston et al.~\cite{Clouston:Guarded} developed guarded recursive types in a simply-typed setting, following
earlier work~\cite{Nakano:Modality,Atkey:Productive,Abel:Formalized}, with semantics in the presheaf category
$\psh{\omega}$ known as the \emph{topos of trees}, and also presented a logic for reasoning about programs with
guarded recursion.
For large examples, such as models of program logics, we would like to be able to formalise such reasoning. A major
approach to formalisation is via \emph{dependent types}, used for example in the proof assistants Coq~\cite{Coq:manual} and Agda~\cite{Norell:thesis}.
Bizjak et al.~\cite{Bizjak-et-al:GDTT}, following earlier
work~\cite{Birkedal+:topos-of-trees,Mogelberg:tt-productive-coprogramming}, introduced guarded dependent type theory
($\gdtt$), integrating the $\later$ type-former into a dependently typed calculus, and supporting the definition of guarded
recursive types as fixed-points of functions on universes, and guarded recursive operations on these types.

We wish to formalise non-trivial theorems about equality between guarded recursive constructions, but such
arguments often cannot be accommodated within \emph{intensional} Martin-L\"{o}f type theory. For example, we may need
to be able to reason about the extensions of streams in order to prove the equality of different stream functions. Hence
$\gdtt$ includes an equality reflection rule, which is well known to make type checking undecidable.
This problem is close to well-known problems with functional
extensionality~\cite[Section~3.1.3]{Hofmann:Extensional}, and indeed this analogy can be developed. Just as
functional extensionality involves mapping terms of type $(x:A)\to\Id B\, (fx)\, (gx)$ to proofs of $\Id\,(A\to B)\, f\, g$, extensionality
for guarded recursion requires an extensionality principle for later types, namely the ability to map terms of type
$\later\Id A\,t\,u$ to proofs of
$\Id\,(\later A)\,(\pure{t})\,(\pure{u})$, where $\pure{}$ is the constructor for $\later$. These types are isomorphic in the topos of trees, and so in $\gdtt$ their equality was asserted as an axiom.
But in a calculus without equality reflection we cannot merely assert such axioms without losing canonicity.

\emph{Cubical type theory} ($\ctt$)~\cite{Cubical}, for which we give a brief introduction in Section~\ref{sec:ctt}, is a new
type theory with a computational interpretation of functional
extensionality but without equality reflection, and hence is a candidate for extension with guarded recursion, so that we may
formalise our arguments without incurring the disadvantages of fully extensional identity
types. $\ctt$ was developed primarily to provide a computational interpretation of Voevodsky's
univalence axiom in Homotopy Type Theory~\cite{hottbook}. The most important novelty of $\ctt$ is the replacement of inductively defined identity types by \emph{paths}, which can be seen as maps from an abstract interval, and are introduced and eliminated much like functions. $\ctt$ can be extended with identity types which model all rules of  intensional equality in Martin-L\"of type
theory~\cite[Sec. 9.1]{Cubical}, but these are logically equivalent to path types, and in our paper it suffices to work with
path types only.
$\ctt$ has sound denotational semantics in (fibrations in) \emph{cubical
  sets}, a presheaf category that is used to model homotopy types. $\ctt$ enjoys
canonicity for terms of natural number type~\cite{Huber:canonicity} and is conjectured to have decidability of
type-checking. Moreover, a type-checker has been implemented%
\footnote{\url{https://github.com/mortberg/cubicaltt}}.

In Section~\ref{sec:type-theory-examples} of this paper we propose \emph{guarded cubical type theory} ($\gctt$), a
combination of the two type theories%
\footnote{with the exception of the \emph{clock quantification} of $\gdtt$, which we leave to future work.}
which supports non-trivial proofs about guarded recursive types via path equality, while retaining the potential for good
syntactic properties such as canonicity for base types and decidable type-checking. In particular, just as a term can be defined in $\ctt$ to
witness functional extensionality, a term can be defined in $\gctt$ to witness extensionality for later types.
Further, we use elements of the interval of $\ctt$ to annotate fixed-points, and hence control their unfoldings.
This ensures that fixed-points are path equal, but not judgementally equal, to their unfoldings, and hence prevents
infinite unfoldings, an obvious source of non-termination in any calculus with infinite constructions.
The resulting calculus is shown via examples to be useful for reasoning about guarded recursive operations;
we also view it as potentially significant from the point of view of $\ctt$, extending its expressivity as a basis for formalisation.

In Section~\ref{sec:semantics} we give semantics to
this type theory via the presheaf category over the product of the categories used to define semantics for $\gdtt$ and $\ctt$.
Defining semantics in this new category is non-trivial because we must check that all novel features of the two type
theories can still be soundly interpreted.
To achieve this we first define, in Section~\ref{sec:internal}, an extension of dependent predicate logic in which
the constructs of $\ctt$ may be interpreted, then show that this logic may be interpreted in a certain class of
presheaf categories, including our intended category.
We then show that this category also interprets the constructs of $\gdtt$.
In particular we must ensure that the `later' functor $\later$, which models
the type-former of the same name, preserves the (Kan) composition operations which are central to the cubical model.
In the conference version of this paper~\cite{CSL} the development of the semantics was presented only very briefly for
space reasons; the technical appendix of that paper is integrated into the text of this paper, and forms the bulk of this
paper's contribution.

Moreover, we have implemented a prototype type-checker for this extended type theory%
\footnote{\gctturl},
extending the implemented type-checker for $\ctt$,
which provides confidence in our type theory's syntactic properties. All constructions using the type theory
$\gctt$ presented in this paper,
and many others, have been formalised in this type-checker.

%%%%%%%%%%%%%%%%%%%%%%%%%%%%%%%%%%%%%%%%%%%%%%%%%%%%%%%%%%%

\section{Cubical Type Theory}
\label{sec:ctt}

This section gives a brief overview of \emph{cubical type theory}  ($\ctt$)%
\footnote{\url{http://www.cse.chalmers.se/~coquand/selfcontained.pdf} provides a concise presentation of
$\ctt$.};
for full details we refer to Cohen et al.~\cite{Cubical}.

We start with a standard dependent type theory with $\Pi$, $\Sigma$, natural numbers, and a Russell-style universe, but
without identity types:
\[
  \begin{array}{lcl@{\hspace{.2\linewidth}}l}
    \Gamma, \Delta & \bnfeq & () ~ | ~ \Gamma, x : A & \text{Contexts} \\[1ex]
    t,u,A,B  & \bnfeq & x &\text{Variables} \\
    & | & \lambda x : A . t ~|~ t\, u ~|~ (x : A) \to B &\text{$\Pi$-types} \\
    & | & (t,u) ~|~ t.1 ~|~ t.2 ~|~ (x:A) \times B &\text{$\Sigma$-types} \\
    & | & \zero ~|~ \succ t ~|~ \natrec t \, u ~|~ \Nat & \text{Natural numbers} \\
    & | & \U & \text{Universe}
  \end{array}
\]
We adhere to the usual conventions of considering terms and types up to $\alpha$-equality, and writing $A \to B$, respectively
$A \times B$, for non-dependent $\Pi$- and $\Sigma$-types. We use the symbol `$=$' for judgemental equality.

$\ctt$ extends this basic type theory with the constructs below:
\[
  \begin{array}{lcl@{\hspace{.12\linewidth}}l}
    r,s & \bnfeq & 0 ~|~ 1 ~|~ i ~|~ \Ineg{r} ~|~ r \meet s ~|~ r \join s & \text{The interval, $\I$} \\[1ex]
    \phi,\psi & \bnfeq & \ZeroF ~|~ \OneF ~|~ (i=0) ~|~ (i=1) ~|~ \phi \meet \psi ~|~ \phi \join \psi
      & \text{The face lattice, $\Face$} \\[1ex]
    \Gamma, \Delta & \bnfeq & \cdots ~|~ \Gamma, i:\I ~|~ \Gamma,\phi & \text{Contexts} \\[1ex]
    t,u,A,B  & \bnfeq & \cdots \\
    & | & \pathlambda{i} t ~|~ t\,r ~|~  \Path A ~ t ~ u &\text{Path types} \\
    & | & [ \phi_1~t_1,\ldots,\phi_n~t_n ] &\text{Systems} \\
    & | & \comp{i}{A}{[\phi\mapsto u]}{t} &\text{Compositions} \\
    & | & \Lglue{\phi}{t}{u} ~|~ \Lunglue\,t ~|~ \LGlue{\phi}{B}{t}{A} &\text{Glueing}
  \end{array}
\]
We now briefly discuss these constructs.

The central novelty of $\ctt$ is its treatment of equality.
Instead of the inductively defined identity types of intensional Martin-L\"of type theory~\cite{Martin-Lof-1973}, $\ctt$ has
\emph{paths}.
The paths between two terms $t,u$ of type $A$ form a sort of function space, intuitively that of continuous maps from some interval $\I$ to $A$, with endpoints $t$ and $u$.
Rather than defining the interval $\I$ concretely as the unit interval $[0,1] \subseteq
\mathbb{R}$, it is defined as the \emph{free De Morgan algebra}
on a discrete infinite set of names $\{i, j, k, \dots \}$ with endpoints $0$ and $1$. A De Morgan algebra is a bounded distributive lattice with an involution $\Ineg{\cdot}$ satisfying the De Morgan laws
\begin{align*}
  \Ineg{(i\meet j)} &= (\Ineg{i}) \join (\Ineg{j}), & \Ineg{(i\join j)} &= (\Ineg{i}) \meet (\Ineg{j}).
\end{align*}
The interval $[0,1]  \subseteq \mathbb{R}$, with $\term{min}$, $\term{max}$ and $\Ineg{\cdot}$, is an example of a De Morgan algebra.

The judgement $\hastype{\Gamma}{r}{\I}$ means that $r$ draws its names from $\Gamma$.
Despite this notation, $\I$ is not a first-class type.

Path types and their elements are defined by
the rules in Figure~\ref{fig:path-typing-rules}.
\emph{Path abstraction}, $\pathlambda{i} t$, and \emph{path application}, $t \, r$, are analogous to $\lambda$-abstraction and function application, and support the familiar $\beta$-equality $(\pathlambda{i} t)\, r = t\subst{r}{i}$ and $\eta$-equality $\pathlambda{i} t\, i = t$.
There are two additional judgemental equalities for paths, regarding their endpoints: given $p : \Path A ~ t ~ u$ we have $p\, 0 = t$ and $p\, 1 = u$.

\begin{figure}[t]
  \begin{mathpar}
    \inferrule{%
      \wftype{\Gamma}{A} \\
      \hastype{\Gamma}{t}{A} \\
      \hastype{\Gamma}{u}{A}
    }{%
      \wftype{\Gamma}{\Path A ~ t ~ u}}
  \end{mathpar}
  \begin{mathpar}
    \inferrule{%
      \wftype{\Gamma}{A} \\
      \hastype{\Gamma, i:\I}{t}{A}
    }{%
      \hastype{\Gamma}{\pathlambda{i} t}{\Path A ~t\subst{0}{i} ~ t\subst{1}{i}}}
    \and
    \inferrule{%
      \hastype{\Gamma}{t}{\Path A ~ u ~ s} \\
      \hastype{\Gamma}{r}{\I}
    }{%
      \hastype{\Gamma}{t\, r}{A}}
  \end{mathpar}
  \caption{Typing rules for path types.}
  \label{fig:path-typing-rules}
\end{figure}

Paths provide a notion of identity which is more extensional than
that of intensional Martin-L\"of identity types, as exemplified by the proof term for functional extensionality:\label{funext}
\[
  \term{funext} \, f\, g \defeq \lambda p . \pathlambda{i} \lambda x .\, p \, x \, i
  ~ : ~
  \left((x : A) \to \Path B ~ (f\, x) ~ (g\, x)\right) \to \Path~ (A\to B) ~ f ~ g.
\]

The rules above suffice to ensure that path equality is reflexive, symmetric, and a congruence, but we also need it to be
transitive and, where the underlying type is the universe, to support a notion of transport.
This is done via \emph{(Kan) composition operations}.

To define these we need the \emph{face lattice}, $\Face$, defined as the free distributive lattice on the symbols $(i=0)$ and $(i=1)$ for all names $i$, quotiented by the relation $(i=0) \meet (i=1) = \ZeroF$.
As with the interval, $\Face$ is not a first-class type, but the judgement $\hastype{\Gamma}{\phi}{\Face}$ asserts
that $\phi$ draws its names from $\Gamma$.
We also have the judgement $\eqjudg{\Gamma}{\phi}{\psi}[\Face]$ which asserts the equality of $\phi$ and $\psi$ in the
face lattice. Contexts can be restricted by elements of $\Face$.
Such a restriction affects equality judgements so that, for example, $\eqjudg{\Gamma,\phi}{\psi_1}{\psi_2}[\Face]$ is
equivalent to $\eqjudg{\Gamma}{\phi \meet \psi_1}{\phi \meet \psi_2}[\Face]$

We write $\hastype{\Gamma}{t}{A[\phi \mapsto u]}$ as an abbreviation for the two judgements $\hastype{\Gamma}{t}{A}$
and $\eqjudg{\Gamma,\phi}{t}{u}[A]$, noting the restriction with $\phi$ in the equality judgement.
Now the composition operator is defined by the typing and equality rule
\begin{mathpar}
  \inferrule{%
    \hastype{\Gamma}{\phi}{\Face} \\
    \wftype{\Gamma, i:\I}{A} \\
    \hastype{\Gamma, \phi, i : \I}{u}{A} \\
    \hastype{\Gamma}{a_0}{A\subst{0}{i}[\phi\mapsto u\subst{0}{i}]}
  }{%
    \hastype{\Gamma}{\comp{i}{A}{[\phi\mapsto u]}{a_0}}{A\subst{1}{i}[\phi\mapsto u\subst{1}{i}]}
  }.
\end{mathpar}
There are further equations for composition that depend on the type $A$ they are applied to; we omit these from this short
overview.

A simple use of composition is to implement the transport operation for $\Path$ types\label{transport-term}
\[
  \term{transp}^i \, A ~ a
  ~\defeq~
  \comp{i}{A}{[\ZeroF \mapsto []]}{a}
  ~:~
  A\subst{1}{i},
\]
where $a$ has type $A\subst{0}{i}$.
The notation $[]$ stands for the empty \emph{system}.  In general a system is a list
of pairs of faces and terms, and it defines an element of a type by giving the individual
components at each face.
Below we present two of the rules for systems; in particular the first rule ensures that for a system to be well-typed, all
cases must be covered, and the components must agree where the faces overlap:
\begin{mathpar}
  \inferrule{%
    \wftype{\Gamma}{A}\\
    \eqjudg{\Gamma}{\phi_1\join\ldots\join\phi_n}{\OneF}[\Face] \\
    \hastype{\Gamma, \phi_i}{t_i}{A} \\
    \eqjudg{\Gamma,\phi_i\meet\phi_j}{t_i}{t_j}[A]\\ i,j=1 \ldots n \\
  }{%
    \hastype{\Gamma}{[ \phi_1~t_1,\ldots,\phi_n~t_n ]}{A}
  }
  \and
  \inferrule{%
    \hastype{\Gamma}{[ \phi_1~t_1,\ldots,\phi_n~t_n ]}{A}\\
    \eqjudg{\Gamma}{\phi_i}{\OneF}[\Face]
  }{%
    \eqjudg{\Gamma}{[ \phi_1~t_1,\ldots,\phi_n~t_n ]}{t_i}[A]
  }
\end{mathpar}
We will write $[ \phi_1\mapsto t_1,\ldots,\phi_n\mapsto t_n ]$ as an abbreviation for
$[\phi_1\join\ldots\join\phi_n \mapsto [ \phi_1~t_1,\ldots,\phi_n~t_n ]]$.

A non-trivial example of the use of systems is the proof that $\Path$ is transitive; given $p\,:\,\Path A~a~b$ and
$q\,:\,\Path A~b~c$ we can define   
\[
\term{transitivity}\,p\,q \defeq \pathlambda{i} \comp{j}{A}{[(i=0) \mapsto a, (i=1) \mapsto q\,j]}{(p \, i)} \, : \, \Path A~a~c.
\]
This builds a path between the appropriate endpoints because we have the equalities $\comp{j}{A}{[\OneF \mapsto a]}{(p \, 0)} = a$ and $\comp{j}{A}{[\OneF \mapsto q\,j]}{(p \, 1)} = q\,1 = c$.

The \emph{glueing} construction~\cite[Sec.~6]{Cubical} is necessary to define the interaction of the universe with
compositions, and hence to provide a computational interpretation of univalence.
It has the following type-formation and typing rules:
\begin{mathpar}
  \inferrule{%
    \wftype{\Gamma}{A} \\
    \wftype{\Gamma, \phi}{T} \\
    \hastype{\Gamma, \phi}{f}{\Equiv\,T\,A}}{%
    %%%
    \wftype{\Gamma}{\LGlue{\phi}{T}{f}{A}}}
  \and
  \inferrule{%
    \hastype{\Gamma}{b}{\LGlue{\phi}{T}{f}{A}}}{%
    %%%
    \hastype{\Gamma}{\Lunglue b}{A[\phi \mapsto f\, b]}}
  \and
  \inferrule{%
    \hastype{\Gamma, \phi}{f}{\Equiv\,T\,A} \\
    \hastype{\Gamma, \phi}{t}{T} \\
    \hastype{\Gamma}{a}{A[\phi \mapsto f\, t]}}{%
    %%%
    \hastype{\Gamma}{\Lglue{\phi}{t}{a}}{\LGlue{\phi}{T}{f}{A}}}
\end{mathpar}
where $\Equiv\,T\,A$ is the type of equivalence between types $T$ and $A$, whose formal definition we omit.
We also have the following equations:
\begin{align*}
  \LGlue{\OneF}{T}{f}{A} &=, T \\
  \Lglue{\OneF}{t}{a} &= t, \\
  \Lglue{\phi}{b}{(\Lunglue b)} &= b, \\
  \Lunglue (\Lglue{\phi}{t}{a}) & = a.
\end{align*}

%%%%%%%%%%%%%%%%%%%%%%%%%%%%%%%%%%%%%%%%%%%%%%%%%%%%%%%%%%%

\section{Guarded Cubical Type Theory}
\label{sec:type-theory-examples}

The section introduces constructs from guarded dependent type theory ($\gdtt$) to $\ctt$, to define guarded cubical type
theory ($\gctt$):
\[
  \begin{array}{lcl@{\hspace{.12\linewidth}}l}
    \xi & \bnfeq & \cdot ~|~ \xi\hrt{x \gets t} & \text{Delayed substitutions} \\[1ex]
    t,u,A,B  & \bnfeq & \cdots \\
    & | & \pure[\xi]{t} ~|~\dfix[r]{x}{t} ~|~ \later[\xi]{A} &\text{Later types}
  \end{array}
\]
recalling that $r$ is an element of the interval.
This section will also present examples that show how $\gctt$ can be used to prove properties of guarded recursive
constructions.

\subsection{Later Types}\label{sec:later}

In Figure~\ref{fig:typing-rules-later} we present the `later' types of guarded dependent type theory
($\gdtt$)~\cite{Bizjak-et-al:GDTT}, with judgemental equalities in Figures~\ref{fig:ty-eq-rules-later}
and~\ref{fig:tm-eq-rules-later}. Note that we do not add any new equation for the interaction of compositions with $\later$:
while $\comp{i}{\later[\xi]{A}}{[\phi \mapsto u]}{t}$ is a valid term which allows us to transport at $\later$ types, 
any extra equation for it would be necessary only if we were to add the `previous' eliminator $\prev$ for $\later$, but this extension (which involves
clock quantifiers) is left to further work.
We delay the presentation of the fixed-point construction until the next subsection.

The typing rules use the \emph{delayed substitutions} of $\gdtt$, as defined in
Figure~\ref{fig:del-substs}. The notation $\Gamma\rightarrowtriangle\Gamma'$ for the delayed
substitution is suggestive for its intended
semantics as $\Gamma\to\later(\Gamma,\Gamma')$.
Delayed substitutions resemble Haskell-style do-notation, or a delayed form of let-binding.
If we have a term $t:\later{A}$, we cannot access its contents `now', but if we are defining a type or term that itself has
some part that is available `later', then this part \emph{should} be able to use the contents of $t$.
Therefore delayed substitutions allow terms of type $\later{A}$ to be unwrapped by $\later$ and $\pure$.
As observed by Bizjak et al.~\cite{Bizjak-et-al:GDTT}, these constructions generalise the \emph{applicative
functor}~\cite{McBride:Applicative} structure of `later' types, by the definitions $\term{pure} \, t \defeq \pure{t}$, and $f \app t \defeq
\pure[\hrt{f' \gets f, t' \gets t}]{f' \, t'}$, and also generalise the $\app$ operation from simple functions to
$\Pi$-types. We here make the new observation that delayed substitutions can express the function $\latercode:\later{\U}
\to\U$, introduced by Birkedal and M{\o}gelberg~\cite{Mogelberg:2013} to express guarded recursive types as fixed-points
on universes, as $\lambda u.\later[[u'\gets u]]{u'}$; see for example the definition of streams in
Section~\ref{sec:streams}.

\begin{figure}
  \begin{mathpar}
    \inferrule{%
      \wfcxt{\Gamma}}{%
      \dsubst{\cdot}{\Gamma}{\cdot}}
    \and
    \inferrule{%
      \dsubst{\xi}{\Gamma}{\Gamma'} \\
      \hastype{\Gamma}{t}{\later[\xi]{A}}}{%
      \dsubst{\xi\hrt{x \gets t}}{\Gamma}{\Gamma', x:A}}
  \end{mathpar}
  \caption{Formation rules for delayed substitutions.}
  \label{fig:del-substs}
\end{figure}

\begin{figure}
  \begin{mathpar}
    \inferrule{%
      \wftype{\Gamma,\Gamma'}{A} \\
      \dsubst{\xi}{\Gamma}{\Gamma'}}{%
      \wftype{\Gamma}{\later[\xi]{A}}}
    \and
    \inferrule{%
      \hastype{\Gamma, \Gamma'}{A}{\U} \\
      \dsubst{\xi}{\Gamma}{\Gamma'}}{%
      \hastype{\Gamma}{\later[\xi]{A}}{\U}}
    \and
    \inferrule{%
      \hastype{\Gamma,\Gamma'}{t}{A} \\
      \dsubst{\xi}{\Gamma}{\Gamma'}}{%
      \hastype{\Gamma}{\pure[\xi]{t}}{\later[\xi]{A}}}
  \end{mathpar}
  \caption{Typing rules for later types.}
  \label{fig:typing-rules-later}
\end{figure}

\begin{figure}
    \begin{mathpar}
    \inferrule{%
      \dsubst{\xi\hrt{x\gets t}}{\Gamma}{\Gamma',x:B} \\
      \wftype{\Gamma,\Gamma'}{A}}{%
      \eqjudg{\Gamma}{\later[\xi\hrt{x \gets t}]{A}}{\later[\xi]{A}}}
    \and
    \inferrule{%
      \dsubst{\xi\hrt{x\gets t,y\gets u}\xi'}{\Gamma}{\Gamma',x:B,y:C,\Gamma''} \\
      \wftype{\Gamma,\Gamma'}{C} \\
      \wftype{\Gamma,\Gamma',x:B,y:C,\Gamma''}{A}}{%
      \eqjudg{\Gamma}{\later[\xi\hrt{x\gets t,y\gets u}\xi']{A}}
      {\later[\xi\hrt{y\gets u,x\gets t}\xi']{A}}}
    \and
    \inferrule{%
      \dsubst{\xi}{\Gamma}{\Gamma'} \\
      \wftype{\Gamma,\Gamma', x:B}{A} \\
      \hastype{\Gamma,\Gamma'}{t}{B}}{%
      \eqjudg{\Gamma}{\later[\xi\hrt{x\gets \pure[\xi]{t}}]{A}}
      {\later[\xi]{A\subst{t}{x}}}}
  \end{mathpar}

  \caption{Type equality rules for later types (congruence and equivalence rules are omitted).}
  \label{fig:ty-eq-rules-later}
\end{figure}

\begin{figure}
    \begin{mathpar}
    \inferrule{%
      \dsubst{\xi\hrt{x\gets t}}{\Gamma}{\Gamma', x:B} \\
      \hastype{\Gamma,\Gamma'}{u}{A} }{%
      \eqjudg{\Gamma}{%
        \pure[\xi\hrt{x\gets t}]{u}}{%
        \pure[\xi]{u}}[%
      \later[\xi]{A}]}
    \and
    \inferrule{%
      \dsubst{\xi\hrt{x\gets t,y\gets u}\xi'}{\Gamma}{\Gamma',x:B,y:C,\Gamma''} \\
      \wftype{\Gamma,\Gamma'}{C} \\
      \hastype{\Gamma,\Gamma',x:B,y:C,\Gamma''}{v}{A} }{%
      \eqjudg{\Gamma}{%
        \pure[\xi\hrt{x\gets t,y\gets u}\xi']{v}}{%
        \pure[\xi\hrt{y\gets u,x\gets t}\xi']{v}}[%
      \later[\xi\hrt{x\gets t,y\gets u}\xi']{A}]}
    \and
    \inferrule{%
      \dsubst{\xi}{\Gamma}{\Gamma'} \\
      \hastype{\Gamma,\Gamma',x:B}{u}{A} \\
      \hastype{\Gamma,\Gamma'}{t}{B}}{%
      \eqjudg{\Gamma}%
      {\pure[\xi\hrt{x \gets \pure[\xi]{t}}]{u}}%
      {\pure[\xi]{u\subst{t}{x}}}%
      [\later[\xi]{A\subst{t}{x}}] }
    \and
    \inferrule{%
      \hastype{\Gamma}{t}{\later[\xi]{A}} }{%
      \eqjudg{\Gamma}%
      {\pure[\xi\hrt{x\gets t}]{x}}%
      {t}%
      [\later[\xi]{A}]}
  \end{mathpar}
  \caption{Term equality rules for later types. We omit congruence and equivalence rules, and the rules for terms of type
  $\U$, which reflect the type equality rules of Figure~\ref{fig:ty-eq-rules-later}.}
  \label{fig:tm-eq-rules-later}
\end{figure}

\begin{example}

In $\gdtt$ it is essential that we can convert terms of type $\later[\xi]{\term{Id}_A \, t~u}$ into terms of type $\term{Id}_{\later[\xi]{A}} \, (\pure[\xi]{t})~(\pure[\xi]{u})$, so that we may perform \emph{L\"ob induction}, the
technique of proof by guarded recursion where we assume $\later{p}$, deduce $p$, and hence may conclude $p$ with no
assumptions.
This is achieved in $\gdtt$ by postulating as an axiom the following judgemental equality:
\begin{equation}\label{eq:Id_and_later}
  \term{Id}_{\later[\xi]{A}} \, (\pure[\xi]{t})~(\pure[\xi]{u})
  \;=\;
  \later[\xi]{\term{Id}_A \, t~u}
\end{equation}
A term from left-to-right of \eqref{eq:Id_and_later} can be defined using the $\term{J}$-eliminator for identity types, but
the more useful direction is right-to-left, as proofs of equality by L\"ob induction involve assuming that we later have an
equality, then converting this into an equality on later types.
In fact with the paths of $\gctt$ we can define a term with the desired type:
\begin{equation}\label{eq:later_ext}
  \lambda p.\abs{i}{\pure[\xi[p'\gets p]]{p'\, i}} \;:\; (\later[\xi]{\Path A\, t\, u})\to
    \Path\,(\later[\xi]{A})\,(\pure[\xi]{t})\,(\pure[\xi]{u}).
\end{equation}
Note the similarity of this term and type with that of $\term{funext}$, for functional extensionality, presented on
page~\pageref{funext}. Indeed we claim that \eqref{eq:later_ext} provides a computational interpretation of extensionality
for later types.
\end{example}

\subsection{Fixed Points}
\label{sec:fix}

In this section we complete the presentation of $\gctt$ by addressing fixed points.
In $\gdtt$ there are fixed-point constructions $\fix{x}{t}$ with the judgemental equality $\fix{x}{t} =
t\subst{\pure{\fix{x}{t}}}{x}$.
In $\gctt$ we want decidable type checking, including decidable judgemental equality, and so
we cannot admit such an unrestricted unfolding rule.
Our solution is that fixed points should not be judgementally equal to their unfoldings, but merely \emph{path} equal.
We achieve this by decorating the fixed-point combinator with an interval element which specifies the position on this path.
The $0$-endpoint of the path is the stuck fixed-point term, while the $1$-endpoint is the same term unfolded once.
However this threatens canonicity for base types: if we allow stuck fixed-points in our calculus, we could have stuck closed terms $\fix[i]{x}{t}$ inhabiting $\Nat$.
To avoid this, we introduce the \emph{delayed} fixed-point combinator $\dfixEmp$, inspired by Sacchini's guarded
unfolding operator~\cite{Sacchini:Well}, which produces a term `later' instead of a term `now'.
Its typing rule, and notion of equality, is given in Figure~\ref{fig:dfix}.
We will write $\fix[r]{x}{t}$ for $t\subst{\dfix[r]{x}{t}}{x}$, $\fix{x}{t}$ for $\fix[0]{x}{t}$, and $\dfix{x}{t}$ for $\dfix[0]{x}{t}$.

\begin{figure}
\begin{mathpar}
  \inferrule{%
    \hastype{\Gamma}{r}{\I} \\
    \hastype{\Gamma, x : \later{A}}{t}{A}
  }{%
    \hastype{\Gamma}{\dfix[r]{x}{t}}{\later{A}}}
  \and
  \inferrule{%
    \hastype{\Gamma, x : \later{A}}{t}{A} }{%
    \eqjudg{\Gamma}%
    {\dfix[1]{x}{t}}%
    {\pure{t\subst{\dfix[0]{x}{t}}{x}}}%
    [\later{A}]}.
\end{mathpar}
  \caption{Typing and equality rules for the delayed fixed-point}
  \label{fig:dfix}
\end{figure}

\begin{lemma}[Canonical unfold lemma]
  \label{prop:unfold-lemma}
  For any term $\hastype{\Gamma, x : \later{A}}{t}{A}$ there is a path between $\fix{x}{t}$ and $t\subst{\pure{\fix{x}{t}}}{x}$, given by the term $\pathlambda{i} \fix[i]{x}{t}$.
\end{lemma}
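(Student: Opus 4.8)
The plan is to check directly that $\pathlambda{i}\fix[i]{x}{t}$ is a well-typed path with the two stated endpoints, using only the typing and equality rules for $\dfixEmp$ in Figure~\ref{fig:dfix} together with the rules for path abstraction from Figure~\ref{fig:path-typing-rules}.

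First I would establish that the body is well-typed in the extended context. The hypothesis $\hastype{\Gamma, x : \later{A}}{t}{A}$ presupposes $\wftype{\Gamma}{A}$, so weakening yields $\wftype{\Gamma, i : \I}{A}$ and $\hastype{\Gamma, i : \I, x : \later{A}}{t}{A}$, while $\hastype{\Gamma, i : \I}{i}{\I}$ holds by the interval rules. The typing rule for $\dfixEmp$ then gives $\hastype{\Gamma, i : \I}{\dfix[i]{x}{t}}{\later{A}}$, and substituting this for $x$ in $t$ gives $\hastype{\Gamma, i : \I}{\fix[i]{x}{t}}{A}$, recalling that $\fix[i]{x}{t}$ abbreviates $t\subst{\dfix[i]{x}{t}}{x}$. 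Path abstraction then produces $\hastype{\Gamma}{\pathlambda{i}\fix[i]{x}{t}}{\Path A~((\fix[i]{x}{t})\subst{0}{i})~((\fix[i]{x}{t})\subst{1}{i})}$.

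It remains to compute the endpoints. Since $i$ is fresh for $t$, the substitution $\subst{r}{i}$ acts only inside the interval decoration of $\dfixEmp$, so $(\fix[i]{x}{t})\subst{r}{i} = t\subst{\dfix[r]{x}{t}}{x} = \fix[r]{x}{t}$. For $r = 0$ this is by definition $\fix{x}{t}$. For $r = 1$ it is $t\subst{\dfix[1]{x}{t}}{x}$; the equality rule for $\dfixEmp$ gives $\eqjudg{\Gamma}{\dfix[1]{x}{t}}{\pure{t\subst{\dfix[0]{x}{t}}{x}}}[\later{A}]$, that is, $\dfix[1]{x}{t} = \pure{\fix{x}{t}}$, and congruence of substitution under judgemental equality turns this into $t\subst{\dfix[1]{x}{t}}{x} = t\subst{\pure{\fix{x}{t}}}{x}$. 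Hence the path has type $\Path A~(\fix{x}{t})~(t\subst{\pure{\fix{x}{t}}}{x})$, as claimed.

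I do not expect a genuine obstacle: the argument is a direct instantiation of the rules. The one place that needs a moment's care is the commutation of the delayed substitution $\subst{\dfix[i]{x}{t}}{x}$ with the endpoint substitutions $\subst{0}{i}$ and $\subst{1}{i}$, which is valid precisely because $t$ lives in a context that does not mention $i$. Conceptually, the whole point is that annotating $\dfixEmp$ with an interval variable promotes the single unfolding step witnessed by the equality rule for $\dfixEmp$ to a path, and $\pathlambda{i}\fix[i]{x}{t}$ is exactly that path.
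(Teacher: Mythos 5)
Your proof is correct and is exactly the argument the paper has in mind: the paper states the witnessing term $\pathlambda{i}\fix[i]{x}{t}$ directly in the lemma and gives no further proof, leaving the reader to perform precisely the typing derivation and endpoint computation you carry out. The one step worth flagging carefully, as you do, is that $(\fix[i]{x}{t})\subst{r}{i}=\fix[r]{x}{t}$ hinges on $i$ being fresh for $t$ (which holds since $t$ is typed in $\Gamma, x:\later{A}$ with no interval variable), after which the $\dfixEmp$ equality rule and congruence of substitution give the $1$-endpoint.
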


Transitivity of paths (via compositions) ensures that $\fix{x}{t}$ is path equal to any number of fixed-point unfoldings of itself.

A term $a$ of type $A$ is said to be a \emph{guarded fixed point} of a function $f:\later{A}\to A$ if there is a
path from $a$ to $f(\pure{a})$.

\begin{proposition}[Unique guarded fixed points]
  \label{prop:unique-fix}
  Any guarded fixed-point $a$ of a term $f : \later{A} \to A$ is path equal to $\fix{x}{f\, x}$.
\end{proposition}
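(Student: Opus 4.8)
The plan is to exploit L\"ob induction together with the canonical unfold lemma (Lemma~\ref{prop:unfold-lemma}) and transitivity of paths (via compositions). Write $\fixEmp$ for $\fix{x}{f\,x}$, so by Lemma~\ref{prop:unfold-lemma} we have a path $\pathlambda{i}\fix[i]{x}{f\,x}$ from $\fixEmp$ to $f(\pure{\fixEmp})$. We are given a guarded fixed point $a$, i.e.\ a path $q : \Path A\,a\,(f(\pure{a}))$. The goal is a path $\Path A\,a\,\fixEmp$.

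First I would establish, by L\"ob induction, a term of type $\later{\Path A\,a\,\fixEmp}$: that is, I assume $h:\later{\Path A\,a\,\fixEmp}$ and construct a proof of $\Path A\,a\,\fixEmp$, which the fixed-point combinator then discharges. Given $h$, the extensionality term for later types from \eqref{eq:later_ext} (instantiated with empty delayed substitution) turns $h$ into a path $\Path\,(\later A)\,(\pure{a})\,(\pure{\fixEmp})$. Applying $f$ congruently — concretely, $\pathlambda{i} f(\,(\text{that path})\,i)$ — yields a path $f(\pure{a}) \rightsquigarrow f(\pure{\fixEmp})$ in $A$. Now compose three paths by transitivity: $a \rightsquigarrow f(\pure{a})$ (this is $q$), then $f(\pure{a}) \rightsquigarrow f(\pure{\fixEmp})$ (just constructed), then $f(\pure{\fixEmp}) \rightsquigarrow \fixEmp$ (the reverse of Lemma~\ref{prop:unfold-lemma}'s path, obtained by symmetry, which holds since the rules for paths already give symmetry). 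The result is the desired element of $\Path A\,a\,\fixEmp$. Wrapping this whole construction in $\fix{h}{\cdots}$ — noting that the body, under the hypothesis $h$ of the recursion variable, produces an element of $A' \defeq \Path A\,a\,\fixEmp$ — gives a closed proof of $\Path A\,a\,\fixEmp$, which is exactly the claim.

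The one subtlety, and the step I expect to need the most care, is making sure the L\"ob-induction fixed point is set up so that the \emph{recursion variable} is used exactly where the construction genuinely only has access to the induction hypothesis `later' — that is, underneath the $\pure{}$ applied to $\fixEmp$ and to $a$. Concretely, the recursion variable has type $\later{A'}$, and \eqref{eq:later_ext} is precisely the device that lets us push it under $\pure{}$; one must check the delayed-substitution bookkeeping (the term $\lambda p.\abs{i}{\pure[[p'\gets p]]{p'\,i}}$) lines up so that the endpoints of the resulting path in $\later A$ are literally $\pure{a}$ and $\pure{\fixEmp}$, which is where the equality $\pure[\xi\hrt{x\gets t}]{x} = t$ and the type-equality rules of Figure~\ref{fig:ty-eq-rules-later} are used. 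Everything else — the two transitivity compositions and the symmetry — is routine given the material in Section~\ref{sec:ctt}. An alternative, slicker packaging is to observe that $a$ and $\fixEmp$ are both guarded fixed points of $f$, so it suffices to prove any two guarded fixed points equal; but this is essentially the same L\"ob-induction argument with $\fixEmp$ replaced by a second arbitrary fixed point, so I would present the concrete version above.
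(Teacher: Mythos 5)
Your proof is correct and follows essentially the same route as the paper's: Löb induction with hypothesis $\later{(\Path A\,a\,\fixEmp)}$, then composition of the given path $a\rightsquigarrow f(\pure{a})$ with a congruence path $f(\pure{a})\rightsquigarrow f(\pure{\fixEmp})$ obtained via the later-extensionality term, and finally with the reverse of the canonical unfold path. The only difference is presentational: the paper writes the middle path directly as $\pathlambda{i}\,f(\pure[\hrt{q\gets\term{ih}}]{q\,i})$ and notes that extensionality for later is used implicitly, whereas you factor it through an explicit application of \eqref{eq:later_ext} followed by a congruence under $f$, which $\beta$-reduces to the same term.
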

\begin{proof}
  Given $p : \Path A ~ a ~ (f \, (\pure{a}))$,
  we proceed by L\"ob induction, i.e., by assuming
  \[
    \term{ih} : \later{(\Path A ~ a ~ (\fix{x}{f\, x}))}.
  \]  
  We define a path 
  \[
    s\defeq \pathlambda{i} f (\pure[\hrt{q \gets \term{ih}}]{q\, i}) 
    ~:~
    \Path A ~ (f (\pure{a})) ~ (f (\pure{\fix{x}{f\, x}})),
  \]
  which is well-typed because the type of the variable $q$ ensures that $q\, 0$ is judgementally equal to $a$, resp. $q\,1$
  and $\fix{x}{f\, x}$.
  Note that we here implicitly use the extensionality principle for later \eqref{eq:later_ext}.
  We compose $s$ with $p$, and then with the inverse of the canonical unfold lemma of Lemma~\ref{prop:unfold-lemma},
  to obtain our path from $a$ to $\fix{x}{f\, x}$.
  We can write out our full proof term, where $p^{-1}$ is the inverse path of $p$, as
  \[
    \fix{\term{ih}}{\pathlambda{i} \comp{j}{A}{[(i=0) \mapsto p^{-1}, (i=1) \mapsto f (\dfix[\Ineg{j}]{x}{f\, x})]}{(f (\pure[\hrt{q \gets \term{ih}}]{q\, i}) )}}.
    \qedhere
  \]
\end{proof}

\subsection{Programming and Proving with Guarded Recursive Types}
In this section we show some simple examples of programming with guarded recursion, and prove properties of our
programs using L\"ob induction and univalence.

\paragraph*{Streams}\label{sec:streams}.
The type of guarded recursive streams in $\gctt$, as with $\gdtt$, are defined as fixed points on the universe:
\[
  \gstream{A} \;\defeq\; \fix{x}{A\times \later[[y\gets x]]{y}}
\]
Note the use of a delayed substitution to transform a term of type $\later{\U}$ to one of type $\U$, as discussed at the start
of Section~\ref{sec:later}. Desugaring to restate this in terms of $\dfixEmp$, we have
\[
  \gstream{A} \;=\;
  A\times\later[[y\gets\dfix[0]{x}{A\times\later[[y\gets x]]{y}}]]{y}
\]
The head function $\hd:\gstream{A}\to A$ is the first projection.
The tail function, however, cannot be the second projection, since this yields a term of type
\begin{equation}\label{eq:streamtail0}
\later[\hrt{y \gets \dfix[0]{x}{A\times\later[\hrt{y\gets x}]{y}}}]{y}
\end{equation}
rather than the desired $\later{\gstream{A}}$.
However we are not far off; $\later{\gstream{A}}$ is judgementally equal to
\begin{align*}
  \later[\hrt{y \gets \dfix[1]{x}{A\times\later[\hrt{y\gets x}]{y}}}]{y},
\end{align*}
which is
the same term as \eqref{eq:streamtail0}, apart from endpoint $1$ replacing $0$.
The canonical unfold lemma (Lemma~\ref{prop:unfold-lemma}) tells us that we can build a path in $\U$ from $\gstream{A}$ to
$A \times \later{\gstream{A}}$; call this path $\abs{i}{\gstream{A}^i}$. Then we can transport between these types:
\[
  \term{unfold} \, s \defeq \term{transp}^i \, \gstream{A}^i \, s
  \qquad\qquad\qquad
  \term{fold} \, s \defeq \term{transp}^i \, \gstream{A}^{\Ineg{i}} \, s
\]
Note that the compositions of these two operations are path equal to identity functions, but not judgementally equal.
We can now obtain the desired tail function $\term{tl} : \gstream{A} \to \later{\gstream{A}}$ by composing the second projection with $\term{unfold}$, so $\term{tl} \, s \defeq (\term{unfold} \, s).2$.
Similarly we can define the stream constructor $\term{cons}$ (usually written infix as $::$) by using $\term{fold}$:
\[
  \term{cons} \defeq \lambda a, s . \term{fold} \, (a, s)
  ~:~
  A \to \later{\gstream{A}} \to \gstream{A}.
\]

We now turn to higher order functions on streams. We define $\term{zipWith} : (A \to B \to C) \to \gstream{A} \to
\gstream{B} \to \gstream{C}$, the stream function which maps a binary function on two input streams to produce an output stream, as
\begin{align*}
  \term{zipWith} \, f \defeq
  \fix{z}{\lambda s_1, s_2 . f \, (\term{hd} \, s_1) \, (\term{hd} \, s_2) \,::\,
  \pure[\vrt{z' \gets z \\ t_1 \gets \term{tl} \, s_1 \\ t_2 \gets \term{tl} \, s_2}]{z' \, t_1 \, t_2}}.
\end{align*}
Of course $\term{zipWith}$ is definable even with simple types and $\later$, but in $\gctt$ we can go further and prove properties
about the function:
\begin{proposition}[$\term{zipWith}$ preserves commutativity]\label{prop:zipwith-preserves-comm}
  If $f : A \to A \to B$ is commutative, then $\term{zipWith}\, f : \gstream{A} \to \gstream{A} \to \gstream{B}$ is commutative.
\end{proposition}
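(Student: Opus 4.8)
The plan is to follow the same pattern as the proof of Proposition~\ref{prop:unique-fix}: L\"ob induction, reduce the statement about streams to statements about heads and tails, discharge the head obligation using commutativity of $f$, and discharge the tail obligation from the induction hypothesis via the extensionality principle for later types~\eqref{eq:later_ext}.

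Write $W \defeq \term{zipWith}\,f$ and let $G$ be the $\lambda$-abstraction with $W = \fix{z}{G}$. Unfolding the definition, $W\,s_1\,s_2$ is judgementally equal, for all $s_1, s_2 : \gstream{A}$, to
\[
  f\,(\hd\,s_1)\,(\hd\,s_2)\,::\,\pure[\vrt{z' \gets \dfix[0]{z}{G} \\ t_1 \gets \tl\,s_1 \\ t_2 \gets \tl\,s_2}]{z'\,t_1\,t_2}.
\]
Let $C \defeq (s_1\,s_2 : \gstream{A}) \to \Path\,\gstream{B}\,(W\,s_1\,s_2)\,(W\,s_2\,s_1)$ be the statement to prove. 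By L\"ob induction it suffices to build a term of type $C$ given $\term{ih} : \later{C}$. Fix $s_1, s_2$. Since $W\,s_1\,s_2$ and $W\,s_2\,s_1$ are both of the form $\term{cons}\,(\text{head})\,(\text{tail})$, and applying the function $\term{cons}$ to a pair of paths yields a path between the corresponding conses, it is enough to produce a path between the two heads in $B$ and a path between the two tails in $\later\,\gstream{B}$.

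The head path is delivered directly by commutativity of $f$, namely the witness of type $\Path\,B\,(f\,(\hd\,s_1)\,(\hd\,s_2))\,(f\,(\hd\,s_2)\,(\hd\,s_1))$. For the tails, the first job is to straighten the $\dfixEmp$-decoration: the path $\pathlambda{j}\,\dfix[j]{z}{G}$ (the $\dfixEmp$-level analogue of Lemma~\ref{prop:unfold-lemma}, whose endpoints are $\dfix[0]{z}{G}$ and $\dfix[1]{z}{G} = \pure{W}$ by Figure~\ref{fig:dfix}), substituted congruentially for the bound variable $z'$ and then simplified with the delayed-substitution rule $\pure[\xi\hrt{z' \gets \pure[\xi]{W}}]{z'\,t_1\,t_2} = \pure[\xi]{W\,t_1\,t_2}$, connects the tail of $W\,s_1\,s_2$ to $\pure[\hrt{t_1 \gets \tl\,s_1,\, t_2 \gets \tl\,s_2}]{W\,t_1\,t_2}$; symmetrically, the tail of $W\,s_2\,s_1$ is connected to $\pure[\hrt{t_1 \gets \tl\,s_1,\, t_2 \gets \tl\,s_2}]{W\,t_2\,t_1}$ after an $\alpha$-renaming of the bound variables followed by the exchange rule of Figure~\ref{fig:tm-eq-rules-later}. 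Now \eqref{eq:later_ext} reduces the remaining obligation to producing a term of type $\later[\hrt{t_1 \gets \tl\,s_1,\, t_2 \gets \tl\,s_2}]{\Path\,\gstream{B}\,(W\,t_1\,t_2)\,(W\,t_2\,t_1)}$, and this is exactly $\pure[\vrt{t_1 \gets \tl\,s_1 \\ t_2 \gets \tl\,s_2 \\ h \gets \term{ih}}]{h\,t_1\,t_2}$ — the extra binding $h \gets \term{ih}$ being invisible in the type, by the first rule of Figure~\ref{fig:ty-eq-rules-later}.

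Assembling: applying $\term{cons}$ to the head path and the tail path gives a path between the two straightened streams, and composing it with the two straightening paths (by transitivity of $\Path$, as in the transitivity example of Section~\ref{sec:ctt}) yields the desired path $\Path\,\gstream{B}\,(W\,s_1\,s_2)\,(W\,s_2\,s_1)$; closing off the L\"ob induction with $\fixEmp$ then produces an explicit closed proof term. I expect the only real difficulty to be the bookkeeping of the previous paragraph: the recursive occurrence of $\term{zipWith}\,f$ appearing inside its own one-step unfolding is the decorated term $\dfix[0]{z}{G}$, which is merely \emph{path} equal — not judgementally equal — to $\pure{\term{zipWith}\,f}$, so $\term{ih}$ cannot be applied on the nose and must instead be threaded through the delayed substitution, with care taken to keep straight which of the many auxiliary equalities (delayed-substitution laws, the $\dfixEmp$ endpoint rule, the $\beta$/$\eta$/endpoint rules for paths) are judgemental and which are themselves paths that have to be composed in. The head component and the overall L\"ob-induction skeleton are routine and essentially repeat the pattern of Proposition~\ref{prop:unique-fix}.
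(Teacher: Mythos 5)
Your proof follows the same route as the paper's: L\"ob induction with the same induction hypothesis, build a path between the one-step unfoldings (head component from commutativity of $f$, tail component from $\term{ih}$ via later-extensionality), then cancel the unfolding by composing with the canonical unfold paths. The paper packages the middle leg as a single interpolant term $v$ with a free interval variable $i$ and closes with one square composition using $\pathlambda j\,\term{zipWith}^{1-j}$ for the vertical legs, whereas you spell out the same ingredients more explicitly --- $\term{cons}$ applied pointwise to the head and tail paths, the $\dfixEmp$-level straightening path $\pathlambda{j}\,\dfix[j]{z}{G}$ (which is the tail of the paper's $\term{zipWith}^j$ up to judgemental equality), and a named invocation of~\eqref{eq:later_ext} rather than writing $i$ free under $\pure{}$ --- but the resulting proof term is essentially the same.
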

\begin{proof}
  Let $\term{c} : (a_1 : A) \to (a_2 : A) \to \Path B ~ (f\, a_1 \, a_2) ~ (f\, a_2 \, a_1)$ witness commutativity of $f$.
  We proceed by L\"ob induction, i.e., by assuming
  \[
    \term{ih} : \later{\left((s_1:\gstream{A})\to(s_2:\gstream{A})\to \Path B 
      ~ (\term{zipWith}\, f \, s_1 \, s_2)
      ~ (\term{zipWith}\, f \, s_2 \, s_1)\right)}.
  \]
  Let $i:\I$ be a fresh name, and $s_1, s_2 : \gstream{A}$.
  Our aim is to construct a stream which is $\term{zipWith}\, f \, s_1 \, s_2$ when substituting $0$ for $i$, and $\term{zipWith}\, f \, s_2 \, s_1$ when substituting $1$ for $i$.
  An initial attempt at this proof is the term
  \[
    v \,\defeq\,
    \term{c} \, (\term{hd}\, s_1) \, (\term{hd}\, s_2) \, i ~::~ 
    \pure[\vrt{q \gets \term{ih} \\
      t_1 \gets \term{tl}\, s_1 \\
      t_2 \gets \term{tl} \, s_2}]%
    {q \, t_1 \, t_2 \, i}
    ~:~
    \gstream{B},
  \]
  which is equal to 
  \[
    f \, (\term{hd} \, s_1) \, (\term{hd} \, s_2) ~::~
    \pure[\vrt{t_1 \gets \term{tl} \, s_1 \\ t_2 \gets \term{tl} \, s_2}]%
    {\term{zipWith} \, f \, t_1 \, t_2}
  \]
  when substituting $0$ for $i$, which is $\term{zipWith}\, f\, s_1\, s_2$, but \emph{unfolded once}.
  Similarly, $v\subst{1}{i}$ is $\term{zipWith}\, f\, s_2\, s_1$ unfolded once.
  Let $\abs{j}{\term{zipWith}^j}$ be the canonical unfold lemma associated with $\term{zipWith}$ (see Lemma~\ref{prop:unfold-lemma}).
  We can now finish the proof by composing $v$ with (the inverse of) the canonical unfold lemma.
  Diagrammatically, with $i$ along the horizontal axis and $j$ along the vertical:
  \begin{mathpar}
    \begin{tikzcd}[column sep=huge, row sep=huge]
        \term{zipWith}\, f\, s_1\, s_2 \arrow[r,dashed]
        & \term{zipWith}\, f\, s_2\, s_1 \\
        \begin{array}{c}
          f \, (\term{hd} \, s_1) \, (\term{hd} \, s_2) ~:: \\
          \pure[\vrt{t_1 \gets \term{tl} \, s_1 \\ t_2 \gets \term{tl} \, s_2}]%
          {\term{zipWith} \, f \, t_1 \, t_2}
        \end{array}
        \arrow[u,"\term{zipWith}^{\Ineg{j}}\, f\, s_1\, s_2"]
        \arrow[r,"v" below]
        &
        \begin{array}{c}
          f \, (\term{hd} \, s_2) \, (\term{hd} \, s_1) ~:: \\
          \pure[\vrt{t_2 \gets \term{tl} \, s_2 \\ t_1 \gets \term{tl} \, s_1}]%
          {\term{zipWith} \, f \, t_2 \, t_1}
        \end{array}
        \arrow[u,"\term{zipWith}^{\Ineg{j}}\, f\, s_2\, s_1" right]
    \end{tikzcd}
  \end{mathpar}
  The complete proof term, in the language of the implemented type-checker, can be found in Appendix~\ref{app:zipwith}.
\end{proof}

\paragraph{Bisimularity equals equality} Two (guarded) streams are bisimilar when both
their heads and tails are equal. In \gctt\ we can prove that bisimilar streams are
equal, and moreover that the type of bisimilar streams is equal to the type of equal
streams.
\begin{proposition} For all $s,t:\gstream A$, there is a term of type $\Path_{\U}
  (\bisim_A s\, t) (\Path_{\gstream A} s\, t)$.
\end{proposition}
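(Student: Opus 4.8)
The plan is to construct the required path in $\U$ by L\"ob induction, obtaining it as a transitive composite of shorter paths, each link being either an instance of the canonical unfold lemma (Lemma~\ref{prop:unfold-lemma}), an application of univalence (via $\Glue$) to a definitional isomorphism, or the inductive hypothesis transported along $\later$. First I would fix notation: writing
\[
  \Phi\, b\, s\, t \;\defeq\; \Path_A\,(\hd\, s)\,(\hd\, t)\;\times\;\later[\hrt{b'\gets b,\, u_1\gets\tl\, s,\, u_2\gets\tl\, t}]{b'\, u_1\, u_2},
\]
bisimilarity is $\bisim_A \defeq \fix{b}{\lambda s\, t.\,\Phi\, b\, s\, t}$, and by Lemma~\ref{prop:unfold-lemma} and the delayed-substitution equalities of Figures~\ref{fig:ty-eq-rules-later}--\ref{fig:tm-eq-rules-later} the type $\bisim_A\, s\, t$ is path equal in $\U$ to $\Path_A\,(\hd\, s)\,(\hd\, t)\times\later[\xi]{\bisim_A\, u_1\, u_2}$, where from now on $\xi \defeq \hrt{u_1\gets\tl\, s,\, u_2\gets\tl\, t}$.

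Next I would fix $s,t:\gstream{A}$ and assume $\term{ih}:\later{\big((s\, t:\gstream{A})\to\Path_\U\,(\bisim_A\, s\, t)\,(\Path_{\gstream{A}}\, s\, t)\big)}$ (working in a universe large enough to state this), then chain the following path equalities in $\U$. \textbf{(i)} $\bisim_A\, s\, t$ is path equal to $\Path_A\,(\hd\, s)\,(\hd\, t)\times\later[\xi]{\bisim_A\, u_1\, u_2}$, as just noted. \textbf{(ii)} Feeding $\term{ih}$ through $\xi$ gives $w \defeq \pure[\hrt{g\gets\term{ih},\, u_1\gets\tl\, s,\, u_2\gets\tl\, t}]{g\, u_1\, u_2}$ of type $\later[\xi]{\Path_\U\,(\bisim_A\, u_1\, u_2)\,(\Path_{\gstream{A}}\, u_1\, u_2)}$ (the binding of $g$ drops out since the body mentions only $u_1,u_2$); then $\pathlambda{k}\later[\xi\hrt{P\gets w}]{P\, k}$ is a path in $\U$ from $\later[\xi]{\bisim_A\, u_1\, u_2}$ to $\later[\xi]{\Path_{\gstream{A}}\, u_1\, u_2}$, its endpoints computed from $P\, 0 = \bisim_A\, u_1\, u_2$, $P\, 1 = \Path_{\gstream{A}}\, u_1\, u_2$ and the rule dropping an unused delayed binding. \textbf{(iii)} $\later[\xi]{\Path_{\gstream{A}}\, u_1\, u_2}$ is path equal in $\U$ to $\Path_{\later{\gstream{A}}}\,(\tl\, s)\,(\tl\, t)$. \textbf{(iv)} Using the unfold lemma for $\gstream{A}$ (the path $\abs{i}{\gstream{A}^i}$ of Section~\ref{sec:streams}), the fact that transport along a path of types induces an equivalence on the corresponding path types, the standard cubical isomorphism $\Path_{B\times C}\, v\, w\cong(\Path_B\, v.1\, w.1)\times(\Path_C\, v.2\, w.2)$ (whose round-trips are judgemental by $\eta$ for pairs and for paths), the definition $\tl\, s = (\unfold\, s).2$, and the fact that $(\unfold\, s).1$ is path equal to $\hd\, s$, one obtains that $\Path_A\,(\hd\, s)\,(\hd\, t)\times\Path_{\later{\gstream{A}}}\,(\tl\, s)\,(\tl\, t)$ is path equal to $\Path_{\gstream{A}}\, s\, t$. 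Composing (i)--(iv), applying the congruence $\Path_A\,(\hd\, s)\,(\hd\, t)\times(-)$ to (ii)--(iii), and reversing if necessary, yields the term. Alternatively, (i)--(iii) exhibit $\Path_{\gstream{A}}$, via $\term{funext}$, as a guarded fixed point of $\lambda b\, s\, t.\,\Phi\, b\, s\, t$, so one may instead invoke Proposition~\ref{prop:unique-fix} in place of the explicit L\"ob induction.

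I expect steps (i), (ii) and (iv) to be routine: (i) is unfolding, (ii) is delayed-substitution bookkeeping of the kind already illustrated in Section~\ref{sec:later}, and (iv) is standard cubical reasoning. The hard part will be step (iii), which amounts to the \emph{equivalence} $\Path_{\later{X}}\, l_1\, l_2 \;\simeq\; \later[\hrt{u_1\gets l_1,\, u_2\gets l_2}]{\Path_X\, u_1\, u_2}$ for arbitrary $l_1,l_2:\later{X}$ --- a generalisation of \eqref{eq:later_ext}, which is essentially its special case $l_i=\pure{t_i}$ in the left-to-right direction. The forward map is $\lambda q.\,\pathlambda{i}\pure[\hrt{u_1\gets l_1,\, u_2\gets l_2,\, q'\gets q}]{q'\, i}$, whose endpoints come out correctly from the term equalities for $\pure$; but producing an inverse together with the two round-trip paths is in effect the statement that $\later$ commutes with interval abstraction, i.e.\ that $(\later X)^{\I}\cong\later{(X^{\I})}$. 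Semantically this is clear --- the interval lives in the $\cube$-direction, which is orthogonal to the $\omega$-direction on which $\later$ acts --- but exhibiting the witnessing terms and their coherences inside $\gctt$, carrying all the delayed substitutions through, will be the delicate point of the argument.
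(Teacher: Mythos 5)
Your proof follows essentially the same chain of equivalences as the paper's: unfold $\bisim$, rewrite under $\later$ via the inductive hypothesis, pull the path out of $\later$ by a strengthened later-extensionality, and close with $\term{fold}/\term{unfold}$ for streams, finishing by univalence. You also correctly isolate the crux: the paper merely asserts the strengthened equivalence $\Path_{\later{A}}\, a\, b \simeq \later[\hrt{a'\gets a,\, b'\gets b}]{\Path_A\, a'\, b'}$ (your step~(iii)) without proof, and your diagnosis that this amounts to $\later$ commuting with interval abstraction --- semantically clear from the orthogonality of the $\cube$- and $\omega$-directions, but delicate to witness term-by-term in $\gctt$ --- pinpoints exactly the step the paper glosses over.
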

\begin{proof}
We may strengthen
extensionality for later (\autoref{eq:later_ext}), to get that
%\todo{This part could be moved to ex 3.1}
%footnote{\url{https://github.com/hansbugge/cubicaltt/blob/gcubical/gctt-examples/later_extensionality.ctt}
\[
\Path_{\later A} a b \equiv \later[[ (a'\gets a, b'\gets b ]] \Path_A a'\, b'.\label{eq:later}
\]
This strengthening may be compared to the strong version of functional extensionality which
states an equivalence of the equality type on function types and the type of pointwise equality~\cite[2.9]{hottbook}.

For $s, t : \gstream A$, we have the following chain of equivalences:
\begin{eqnarray*}
\bisim s\, t &\defeq &\Path (\hd s) (\hd t) \times \later [[s' \gets \tl s, t'\gets \tl t]] \bisim s' t'\\
       &\stackrel{by\ ind.}\equiv& \Path (\hd s) (\hd t) \times \later[[s' \gets \tl s, t'\gets \tl t]] \Path s' t'\\
       &\stackrel{(\ref{eq:later})}\equiv& \Path (\hd s) (\hd t) \times \Path (\tl s) (\tl t)\\
       &\equiv &\Path s\, t  
\end{eqnarray*}
The last equivalence is constructed from the $\fold$ and $\unfold$ functions
for streams. %\todo{This equivalence could be stated more clearly there.}
The statement then follows from univalence.
\end{proof}

\paragraph{Guarded recursive types with negative variance.}
A key feature of guarded recursive types are that they support \emph{negative} occurrences of recursion variables.
This is important for applications to models of program logics~\cite{Birkedal:Step}.
Here we consider a simple example of a negative variance recursive type, namely
$  \term{Rec}_A \defeq \fix{x}{(\later[[x'\gets x]]{x'})\to A} $,
which is path equal to $\later{\term{Rec}_A} \to A$.
As a simple demonstration of the expressiveness we gain from negative guarded recursive types, we define a guarded variant of Curry's Y combinator:
\[
  \begin{array}{lclcl}
    \Delta &\defeq& \lambda x.f(\pure[[x'\gets x]]{((\unfold x') x})) &:& \later\term{Rec}_A\to A \\
    \Ycom &\defeq& \lambda f.\Delta(\pure\fold\Delta) &:& (\later A\to A)\to A,
  \end{array}
\]
where $\term{fold}$ and $\term{unfold}$ are the transports along the path between $\term{Rec}_A$ and $\later{\term{Rec}_A} \to A$.
As with $\term{zipWith}$, $\term{Y}$ can be defined with simple types and $\later$~\cite{Abel:Formalized}; what is new to $\gctt$ is that we can also prove properties about it:

\begin{proposition}[$\term{Y}$ is a guarded fixed-point combinator]
  \label{prop:Y-is-fixed-point-combinator}
  $\term{Y} f$ is path equal to $f \, (\pure({\term{Y} f}))$, for any $f : \later{A} \to A$. Therefore, by Proposition~\ref{prop:unique-fix}, $\term{Y}$ is path equal to $\term{fix}$.
\end{proposition}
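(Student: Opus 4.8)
The plan is to show that $\Ycom\, f$ and $f\,(\pure{\Ycom\, f})$ differ only by the failure of $\unfold \circ \fold$ to be \emph{judgementally} equal to the identity — a gap that is exactly a path — and then to read off the second claim from Proposition~\ref{prop:unique-fix} together with functional extensionality.

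Write $g \defeq \fold\, \Delta : \term{Rec}_A$, so that $\Ycom\, f = \Delta\, (\pure{g})$ by definition of $\Ycom$. First I would reduce this term as far as judgemental equality permits. Unfolding $\Delta$ and $\beta$-reducing yields $f\,(\pure[\hrt{x' \gets \pure{g}}]{(\unfold\, x')\, (\pure{g})})$, and then the delayed-substitution rule $\pure[\xi\hrt{x \gets \pure[\xi]{t}}]{u} = \pure[\xi]{u\subst{t}{x}}$ of Figure~\ref{fig:tm-eq-rules-later} (with $\xi$ empty) collapses the inner term, giving
\[
  \Ycom\, f \;=\; f\,(\pure{(\unfold\, g)\, (\pure{g})}).
\]
On the other hand $f\,(\pure{\Ycom\, f}) = f\,(\pure{\Delta\, (\pure{g})})$, again by definition of $\Ycom$. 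Hence it suffices to build a path in $A$ from $(\unfold\, g)\, (\pure{g})$ to $\Delta\, (\pure{g})$ and push it through $\lambda a.\, f\,(\pure{a})$.

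For this path, recall — as already noted for streams — that $\unfold$ and $\fold$ are $\term{transp}$ along the canonical unfold path of $\term{Rec}_A$ (Lemma~\ref{prop:unfold-lemma}) and along its reverse, so their composite is path equal to the identity on $\later{\term{Rec}_A} \to A$. Instantiating at $\Delta$ gives a path $\rho : \Path\,(\later{\term{Rec}_A}\to A)\,(\unfold\, g)\,\Delta$ (recall $g = \fold\, \Delta$), and then
\[
  \pathlambda{i}\, f\,(\pure{(\rho\, i)\, (\pure{g})})
\]
has, by the two judgemental computations above, endpoints $\Ycom\, f$ and $f\,(\pure{\Ycom\, f})$. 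This proves the first claim and exhibits $\Ycom\, f$ as a guarded fixed-point of $f$; Proposition~\ref{prop:unique-fix} then supplies, for each $f : \later{A}\to A$, a path from $\Ycom\, f$ to $\fix{x}{f\, x}$, and feeding this family to $\term{funext}$ (page~\pageref{funext}) yields a path from $\Ycom$ to $\lambda f.\, \fix{x}{f\, x}$, that is, to $\term{fix}$.

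The step I expect to demand the most care is the bookkeeping around $\fold$ and $\unfold$: one must check that every equality used in computing $\Ycom\, f$ really is judgemental — the $\beta$-rule for $\lambda$-abstraction and the delayed-substitution equation of Figure~\ref{fig:tm-eq-rules-later} — so that the \emph{only} appeal to path equality is isolated in the construction of $\rho$. Everything else is a routine reading-off of endpoints, mirroring the shape of the proof of Proposition~\ref{prop:unique-fix}.
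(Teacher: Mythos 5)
Your proposal is correct and follows essentially the same route as the paper's proof: both simplify $\Ycom\, f$ judgementally to $f\,(\pure{(\unfold\,(\fold\,\Delta))\,(\pure{\fold\,\Delta})})$, observe that $\unfold\,(\fold\,\Delta)$ is path equal to $\Delta$, and push that path through the congruence $\lambda a.\, f\,(\pure{a})$. The only difference is that you spell out the intermediate delayed-substitution reduction and make explicit the appeal to $\term{funext}$ in deriving the second claim from Proposition~\ref{prop:unique-fix}, steps the paper leaves implicit.
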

\begin{proof}
  $\term{Y} f$ simplifies to $f\, (\pure{(\term{unfold} \, (\term{fold} \Delta) \, (\pure{\term{fold} \Delta}))})$, and $\term{unfold}\, (\term{fold} \Delta)$ is path equal to $\Delta$.
A congruence over this path yields our path between $\term{Y} f$ and
$f (\pure{(\term{Y} f)})$.
\end{proof}

%%%%%%%%%%%%%%%%%%%%%%%%%%%%%%%%%%%%%%%%%%%%%%%%%%%%%%%%%%%

\section{Semantics}
\label{sec:semantics}

In this section we provide sound semantics of $\gctt$, and hence prove the consistency of $\gctt$.
The semantics is based on the category $\totcube$ of presheaves on the category $\cube \times \omega$, where $\cube$ is the \emph{category of cubes}~\cite{Cubical} and $\omega$ is the poset of natural numbers.

Given a countably infinite set of names $i, j, k, \ldots$, the category $\cube$ has as objects finite sets of names $I,J$,
and as morphisms $I \to J$, functions $J \to \dM{I}$, where $\dM{I}$ is the free De Morgan algebra with generators $I$.
Equivalently, the category of cubes is the opposite of the Kleisli category of the free De Morgan algebra monad on finite
sets.
Hence in particular it has products, which are given by disjoint union, a fact used extensively below.

As is standard, contexts of $\gctt$ are interpreted as objects of $\totcube$.
Following the approach of Cohen et al.~\cite{Cubical} types in context $\Gamma$ are interpreted as pairs $(A, c_A)$ of a presheaf $A$ on the category of elements of $\Gamma$ and a chosen \emph{composition} structure $c_A$.
We call such a pair a \emph{fibrant} type.

Semantics of type theory in presheaf categories is well-known.
When interpreting type constructions, such as dependent products, the type part of the pair $(A,c_A)$ is interpreted as usual in presheaf models.
What is new is the addition of composition structure, and much of the work we do in this section is to show that
composition structure is preserved by the various type constructors.
It is complex both to define composition structure, and to show that all types can be equipped with this structure.
To aid with this we describe the composition structure in the internal language of the presheaf topos.
More precisely, in Section~\ref{sec:internal} we use \emph{dependent predicate logic}
extended with four assumptions, of which the most important asserts the existence of an
interval type, as the internal language.
A formulation of compositions in this manner, along with similarly internal descriptions of fillings and faces, appeared (in slightly different form) in an unpublished note by Coquand~\cite{Cubical:internal}.
We recall the precise definitions of these in the following sections, and provide details of some constructions which were omitted in \emph{op. cit}.
The advantage of this approach is that we can show entirely in the internal language that constructions such as dependent
products and sums have compositions satisfying the necessary properties, provided their constituent types do.

Working at this level, the notion of a model of $\ctt$ can be generalised from the category $\widehat{\cube}$ of cubical
sets to any topos whose internal logic satisfies the four assumptions. In particular, these assumptions hold in the presheaf
category $\widehat{\cube\times \DD}$ for any small category $\DD$ with an initial object. The category $\totcube$ is obviously
such a category; we will show that it is one that also allows the constructions of guarded recursion introduced in
Section~\ref{sec:type-theory-examples} to be modelled.

The notion of a model of $\gctt$ is then formulated as follows: a type of $\gctt$ in context $\Gamma$ is interpreted as a
pair of a type $\Lwftype{\Gamma}{A}$ in the internal language of $\totcube$, and a composition structure $c_A$, where $c_A$ is a term in the internal language of a specific type $\LCompTy{\Gamma}{A}$ which we define below after introducing the necessary constructs.
A term of $\gctt$ is then interpreted simply as a term of the internal language.
We use \emph{categories with families}~\cite{dybjer1996internal} as our notion of a model.

This section is organised as follows: Section~\ref{sec:internal} presents the general intermediate language $\cL$ which we use to interpret \gctt\ in.
Section~\ref{sec:model-ctt} models \ctt\ in $\cL$.
Section~\ref{sec:concrete-model-of-L} models $\cL$ in the category of cubical sets.
Section~\ref{sec:model-of-L} considers more general models of $\cL$.
Section~\ref{sec:model-gctt} models $\gctt$ in an extension of $\cL$.
Section~\ref{sec:summary-of-the-model} gives a summary of the semantics.

\subsection{The Dependent Predicate Logic $\cL$}\label{sec:internal}

Instead of formulating our model directly using regular mathematics, we will specify a type-theoretic language $\cL$, tailor-made for the purpose of our model, and inspired by
the internal logic of the presheaf topos of cubical sets, $\widehat{\cube}$.

\begin{figure}
  \begin{align*}
    &\Lwfcxt{\Gamma} && \text{well-formed context} \\
    &\Lwftype{\Gamma}{A} && \text{well-formed type} \\
    &\Lhastype{\Gamma}{t}{A} && \text{typing judgement} \\
    &\Leq{\Gamma}{A}{B} && \text{type equality} \\
    &\Leq{\Gamma}{t}{u}[A] && \text{term equality}
    % &\Lprop{\Gamma}{\phi} && \text{proposition validity}
  \end{align*}
  \caption{Judgements of the dependent predicate logic $\cL$.}
  \label{fig:L-judgements}
\end{figure}

$\cL$ is Phoa's \emph{dependent predicate logic}~\cite[Appendix~I]{phoa1992introduction} (see also
Johnstone~\cite[D4.3,4.4]{elephant}) extended with four assumptions, detailed in this section.
Figure~\ref{fig:L-judgements} contains an overview of the types of judgements.
We write $\Prop$ for the type of propositions, $\top$ for true and $\bot$ for false.

In addition to the equality proposition $\LEq{t}{u} : A$, we also have the extensional identity type $\LId{A}{t}{u}$ with equality reflection:
\begin{mathpar}
  \inferrule{
    \Lwftype{\Gamma}{A} \\
    \Lhastype{\Gamma}{t,u}{A}}
  { \Lwftype{\Gamma}{\LId{A}{t}{y}}}
  \and
  \inferrule{
    \Leq{\Gamma}{t}{u}[A]}
  { \Lhastype{\Gamma}{\Lrefl}{\LId{A}{t}{u}}}
  \and
  \inferrule{
    \Lhastype{\Gamma}{p}{\LId{A}{t}{u}}}
  { \Leq{\Gamma}{t}{u}[A]}
\end{mathpar}
$\LIdop$ (the type) and $\cdot = \cdot$ (the proposition) are equally expressive, but for presentation purposes it is practical to have both: Using $\LIdop$ we can easily express the type of \emph{partial elements}
(elements of a type $B$ which are defined only when $t=u$ in $A$) as $\LId{A}{t}{u}\to B$.
Terms of this type, however, are unwieldy to work with since one needs to carry around an explicit equality proof (which will be equal to $\Lrefl$ anyway by the extensionality of the identity type).
Therefore we will implicitly convert back and forth between the type theoretic and the logical representation, and will often
elide proofs, for example writing the context $\Gamma,p:\LId{\Prop}{\phi}{\top}$ as $\Gamma,\phi$.

Following Cohen et al.~\cite{Cubical}, our syntax in Section~\ref{sec:ctt} was \emph{\`a
  la} Russell, i.e.\ it did not contain explicit codes.  The interpretation
in \emph{op. cit.} however contains a special form of Tarski-style universes with an explicit coding function which
commutes with the decoding function $\LElEmp$. These universes can be interpreted in presheaf
models. 
To facilitate the interpretation
of the fibrant universe (in Section~\ref{sec:assumption-3-general}) we assume that our
intermediate language $\cL$ contains an explicit ``elements-of'' operation $\LElEmp$ for a universe $\LU$ of small types.

We now turn to the first of our four assumptions necessary for modelling $\ctt$.

\begin{assumption}[Interval type]
  \label{assumption:interval-type}
  In $\cL$ we have a type $\I$ with
  \begin{mathpar}
    0,1 : \I \and \meet,\join : \I \to \I \to \I \and \Ineg{\cdot} : \I \to \I
  \end{mathpar}
  which is a \emph{De Morgan algebra} which enjoys the \emph{(finitary) disjunction property}:
  \begin{align*}
    0 &\;\;\neq\;\; 1\\
    i \join j = 1 &\implies i = 1 \join j = 1. \tag*{$\blacklozenge$}
  \end{align*}
\end{assumption}
  
\subsubsection{Constructions definable from the interval type}
\label{sec:definable-concepts}

This section will show that the interval type assumption above is sufficient for modelling all of $\ctt$ except for glueing and
the universe, as we can use the interval type to define the face lattice, and hence systems, compositions, fillings, and
paths. While some of the constructions of this section are complex to state, they are mostly fairly obvious translations of the
type-theoretic constructions sketched in Section~\ref{sec:ctt} to the language $\cL$.

We will see three further assumptions, for modelling glueing and the universe, in Section~\ref{subsec:assump_U}.

\paragraph{Faces.}

Using the interval we define the type $\LFace$ as the image of the function $\cdot = 1 : \II \to \Omega$.
More precisely, $\LFace$ is the subset type
\begin{align*}
  \LFace \defeq \left\{ p : \Omega \isetsep \exists (i : \II), p = (i = 1) \right\}
\end{align*}
We will implicitly use the inclusion $\LFace \to \Omega$.
The following lemma in particular states that the inclusion is compatible with all the lattice operations, hence omitting it is unambiguous.
\begin{lemma}\leavevmode
  \begin{itemize}
  \item $\Face$ is a lattice for operations inherited from $\Omega$.
  \item The corestriction $\cdot = 1 : \II \to \Face$ is a lattice homomorphism.
  \item $\Face$ inherits the disjunction property from $\II$.
  \end{itemize}
\end{lemma}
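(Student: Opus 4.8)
The plan is to reduce all three claims to two logical equivalences in $\Omega$, valid for all $i,j:\I$:
\[
  (i=1)\meet(j=1) \;=\; (i\meet j=1),
  \qquad\qquad
  (i=1)\join(j=1) \;=\; (i\join j=1).
\]
Granting these, the lemma follows almost mechanically. First I would record the two degenerate cases, $(1=1)=\top$ by reflexivity and $(0=1)=\bot$ since Assumption~\ref{assumption:interval-type} gives $0\neq1$, so that $\top,\bot\in\Face$. For closure of $\Face$ under binary meets and joins, given $\phi,\psi\in\Face$ one extracts witnesses $i,j:\I$ with $\phi=(i=1)$ and $\psi=(j=1)$ — permissible because the target statements $\phi\meet\psi\in\Face$ and $\phi\join\psi\in\Face$ are propositions, so the existential eliminations are legitimate — and then the two displayed identities exhibit $i\meet j$, resp.\ $i\join j$, as the required interval witness. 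Being a sub-poset of $\Omega$ that is closed under $\meet$ and $\join$ and contains $\top,\bot$, $\Face$ is then automatically a bounded distributive lattice, with associativity, commutativity, idempotency, absorption and distributivity all inherited from $\Omega$.

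It remains to prove the two equivalences. For the meet identity, the forward direction is immediate since $1\meet1=1$; for the converse, if $i\meet j=1$ then $i\leq1=i\meet j\leq i$ in the De Morgan algebra $\I$, so $i=i\meet j=1$, and symmetrically $j=1$. For the join identity, the forward direction is immediate since $1\join x=1$; the converse, $i\join j=1\implies(i=1)\join(j=1)$, is \emph{exactly} the finitary disjunction property assumed of $\I$ in Assumption~\ref{assumption:interval-type}. This is the only place a non-trivial hypothesis enters, and it is precisely why the disjunction property is bundled into the interval assumption.

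The second bullet is then a restatement of the above: writing $f$ for the corestriction $\cdot=1:\I\to\Face$, we have $f(0)=\bot$, $f(1)=\top$, $f(i\meet j)=f(i)\meet f(j)$ and $f(i\join j)=f(i)\join f(j)$, so $f$ is a homomorphism of bounded lattices. For the third bullet, suppose $\phi\join\psi=\top$ with $\phi,\psi\in\Face$; extracting witnesses $i,j$ as before, the join identity gives $i\join j=1$ in $\I$, whence $i=1$ or $j=1$ by the disjunction property of $\I$, that is, $\phi=\top$ or $\psi=\top$.

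I do not anticipate a genuine obstacle: the argument is short, and its only real content is the disjunction property, whose role is to supply the one direction of the join identity that cannot be established by lattice manipulation alone. The points demanding a little discipline are that all reasoning takes place in the intuitionistic internal logic of $\cL$, so one must resist an ad hoc case split on $i,j$ and instead appeal to the assumed disjunction property, and that the existential eliminations extracting interval witnesses are used only while proving propositions, as noted above. It is also worth flagging, to justify the phrase ``operations inherited from $\Omega$'', that $\Face$ is \emph{not} closed under the Heyting implication or negation of $\Omega$; only the bounded-lattice operations restrict to $\Face$.
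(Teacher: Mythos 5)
The paper states this lemma without proof, treating it as routine, so there is no authorial argument to compare against; your proof fills the gap and is correct. Your reduction of all three bullets to the two equivalences $(i{=}1)\meet(j{=}1)=(i\meet j{=}1)$ and $(i{=}1)\join(j{=}1)=(i\join j{=}1)$ is the right organizing idea: the meet direction is pure lattice algebra using that $1$ is top, the join direction is exactly Assumption~\ref{assumption:interval-type}'s disjunction law, and then closure under the operations, the homomorphism property of the corestriction, and the transferred disjunction property all fall out mechanically. Your remarks on discipline — extracting existential witnesses only toward propositional goals, avoiding case splits on interval elements, and flagging that Heyting implication and negation do \emph{not} restrict to $\Face$ — are exactly the points a reader needs warned about.

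One small addition worth making explicit: since the paper bundles $0\neq 1$ into the ``(finitary) disjunction property,'' a complete proof of the third bullet should also observe that $\ZeroF\neq\OneF$ in $\Face$, which is immediate because the inclusion $\Face\to\Omega$ is monic and $\bot\neq\top$ in $\Omega$ (equivalently, because your identification $\bot=(0{=}1)$ and $\top=(1{=}1)$ would otherwise contradict $0\neq1$).
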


To define partial elements we first define, given a proposition $\hastype{\Gamma}{\phi}{\LFace}$, the subsingleton $\Lface{\phi}$ as
\[
  \Lface{\phi} \defeq \LId{\LFace}{\phi}{\top}.
\]

For this type we have the logical equivalence $\left(\exists! p : [\phi], \top\right) \Leftrightarrow \phi$ which we use below when passing between type-theoretic and logical views in constructions of compositions.

\paragraph{Partial elements.}
Given $\wftype{\Gamma}{A}$ and $\hastype{\Gamma}{\phi}{\LFace}$ we say that a term $t$ is a \emph{partial element} of $A$ of \emph{extent} $\phi$, if $\hastype{\Gamma}{t}{\Pi(p : \Lface{\phi}).A}$.
If we are in a context with $p : \Lface{\phi}$, then we treat such a partial element $t$ as a term of type $A$, leaving implicit the application to the proof $p$, i.e., we write $t$ for $t\,p$.
We similarly will often write $\Gamma, \Lface{\phi}$ for $\Gamma, p : \Lface{\phi}$, and $\Lface{\phi}\to B$ for the
dependent function space $\Pi(p : \Lface{\phi}).B$, leaving the proof variable $p$ implicit.

If we have a term $\Gamma, p:\Lface{\phi} \vdash u : A$ (a partial element), then we define
\begin{align}
  \label{eq:def:elements-extending-u}
  A[\phi \mapsto u] \defeq \Sigma (a:A). \Lfaceto{\left(\LId{A}{a}{u}\right)}{\phi}
\end{align}
as the type of elements of $a$ which equal the partial element $u$ on extent $\phi$.
Note that the second component of the pair is uniquely determined (up to judgemental equality) by equality reflection.
Thus often to construct terms of this type we construct a term of type $A$ and show, in the logic, that it is equal to the partial element $u$ on extent $\phi$.
We do not construct the second component explicitly.

\paragraph{Systems.}
Given $\Lwftype{\Gamma}{A}$, assume we have the following:
\begin{align*}
  \Gamma &\vdash \phi_1, \dots,\phi_n : \LFace \\
  \Gamma &\vdash \phi_1 \join \cdots \join \phi_n = \top \\
  \Gamma, \Lface{\phi_1} &\vdash t_1 : A \\
         & \vdotswithin{\vdash} \\[4pt]
  \Gamma, \Lface{\phi_n} &\vdash t_n : A \\
  \Gamma, \Lface{\phi_i \meet \phi_j} &\vdash t_i = t_j : A, \quad \text{for all $i,j$}.
\end{align*}
In other words: We have $n$ partial elements of $A$ which agree with each other on the intersection of their extents.
We can use the \emph{axiom of definite description} to define the term
\[
  [\phi_1t_1,\dots,\phi_nt_n] \defeq \text{the $x^A$ such that $\chi(x)$}
\]
where
\[
  \chi(x) \defeq (\phi_1 \land (x = t_1)) \lor \cdots \lor (\phi_n \land (x = t_n)).
\]
We call this term a \emph{system}.
The condition for using definite description is a proof (in the logic) of the \emph{unique existence} of such a term.
Given the assumptions above, unique existence of the term follows easily.

Using systems, we generalise the earlier definition \eqref{eq:def:elements-extending-u}:
We define
\[
  A[\phi_1 \mapsto t_1, \dots, \phi_n \mapsto t_n] \defeq A[\phi_1\join\cdots\join\phi_n \mapsto [\phi_1t_1,\dots,\phi_nt_n]],
\]
where the type on the right hand side is using the definition \eqref{eq:def:elements-extending-u}.
Note that $A[\phi \mapsto t]$ is unambiguous, as we have $\Gamma, [\phi] \vdash [\phi t] = t : A$.

\paragraph{Compositions.}
Given $\Lwftype{\Gamma}{A}$, we can define the type of \emph{compositions}:
\begin{align*}
  \LCompTy{\Gamma}{A} \defeq \Pi
  &(\gamma : \I \to \Gamma)\\
  &(\phi : \LFace)\\
  &(u : \Pi (i:\I). \Lface{\phi} \to A(\gamma(i))) . \\
  & A(\gamma(0))[ \phi \mapsto u(0) ] \to A(\gamma(1))[ \phi \mapsto u(1) ].
\end{align*}
Here we treat the context $\Gamma$ as a closed type.
This is justified because there is a canonical bijection between contexts and closed types of the internal language.
The notation $A(\gamma(i))$ means substitution along the (uncurried) $\gamma$, by which we mean the following.
Given some term $\gamma$ of type $\I \to \Gamma$ in some context $\Gamma'$, there is the ``uncurried'' term
$\Lhastype{\Gamma', i : \I}{\gamma(i)}{\Gamma}$ which arises by application of $\gamma$ to $i$.
Finally, we assume the variable $i$ appearing in the type of $u$ is fresh for $\phi$, $\gamma$ and $A$.

Note that there is an important difference between the type of compositions in $\cL$ as defined above and the form of the rule for compositions in $\ctt$.
In the latter the type $A$ depends on $\I$, whereas it seemingly does not in the type of compositions.
This difference however is only superficial since the first argument in the type of compositions is a \emph{path} in $\Gamma$, which gives a dependence of $A$ on $\I$.

Recall that we call a pair of a type $\Gamma \vdash A$ in $\cL$ together with a term $\vdash \bfc : \LCompTy{\Gamma}{A}$ a \emph{fibrant type}.

\paragraph{Fillings.}
Given $\Lwftype{\Gamma}{A}$, we can define the type of \emph{(Kan) fillings}:
\begin{align*}
  \LFillTy{\Gamma}{A} \defeq
  \Pi & (\gamma : \I \to \Gamma) \\
      & (\phi : \LFace) \\
      & (u : \Pi(i: \I) . \Lface{\phi} \to A (\gamma(i))) \\
      & (a_0 : A(\gamma(0))[\phi \mapsto{} u(0)]) \\
      & (i : \I) . \\
      & A(\gamma(i))[\phi \mapsto u(i), (1-i) \mapsto \pi_1 a_0].
\end{align*}
If we have a filling operation $\bff : \LFillTy{\Gamma}{A}$ then we can get a \emph{path lifting} operation which states that given a path $\gamma$ and an element $a_0$ in $A$ \emph{over} $\gamma(0)$ we get a path in $A$ which starts at $a_0$.
Concretely, path lifting is the term $\bfl$ of the following type
\begin{align*}
  \bfl : \Pi&(\gamma : \I \to \Gamma) \\
            &(a_0 : A(\gamma(0))) \\
            &(i:\I). \\
            &A(\gamma(i))[(1-i) \mapsto a_0].
\end{align*}
It is defined as a degenerate case of $\bff$ where $\phi$ is $\bot$, and $u$ therefore is uniquely determined (since it is a partial function defined where $\bot$ holds).
Path lifting is used when constructing compositions for dependent products and sums.

\begin{lemma}[Fillings from compositions]
  If we have a fibrant type $\Lwftype{\Gamma}{A}$ with $\bfc_A : \LCompTy{\Gamma}{A}$, then we have a filling operation $\vdash \bff : \Psi(\Gamma,A)$.
\end{lemma}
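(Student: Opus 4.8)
The plan is to construct the filling operation $\bff : \Psi(\Gamma, A)$ directly from the composition structure $\bfc_A$ by the standard cubical trick of re-parametrising the interval so that a filler becomes a composition along a ``bent'' path. Concretely, given the data of a filling problem — a path $\gamma : \I \to \Gamma$, a face $\phi : \LFace$, a partial element $u : \Pi(i : \I).\Lface{\phi} \to A(\gamma(i))$, a term $a_0 : A(\gamma(0))[\phi \mapsto u(0)]$, and a point $j : \I$ — I need to produce a term of type $A(\gamma(j))[\phi \mapsto u(j), (1-j) \mapsto \pi_1 a_0]$. The idea is to apply $\bfc_A$ not along $\gamma$ but along the path $i \mapsto \gamma(i \meet j)$, which runs from $\gamma(0)$ to $\gamma(j)$, and to extend the partial element by the extra face $(i = 0)$ carrying $\pi_1 a_0$.

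The key steps, in order, are as follows. First I would fix the filling data and introduce the reparametrised path $\gamma' \defeq \lambda i. \gamma(i \meet j) : \I \to \Gamma$, noting $\gamma'(0) = \gamma(0)$ and $\gamma'(1) = \gamma(j)$ using the De Morgan algebra laws of Assumption~\ref{assumption:interval-type}. Second, I would assemble the enlarged partial element $u' : \Pi(i:\I). \Lface{\phi \join (i = 0)} \to A(\gamma'(i))$ defined by the system $[\phi \mapsto u(i \meet j),\ (i=0) \mapsto \pi_1 a_0]$; this is well-typed because on the overlap $\phi \meet (i=0)$ both sides reduce, using that $a_0$ lies in $A(\gamma(0))[\phi \mapsto u(0)]$ and that $i \meet j = 0$ when $i = 0$, so $u(i\meet j) = u(0) = \pi_1 a_0$ on extent $\phi$. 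Third, I would observe that $a_0$ itself, viewed in $A(\gamma'(0))[\phi \join (0=0) \mapsto u'(0)]$, is a valid ``bottom'' for this composition problem, since $u'(0)$ is forced to equal $\pi_1 a_0$ by the $(i=0)$ component. Fourth, I would set $\bff(\gamma, \phi, u, a_0, j) \defeq \bfc_A(\gamma', \phi \join (j=0)\ \text{re-expressed appropriately},\ u', a_0)$ — more precisely I apply $\bfc_A$ at the path $\gamma'$ with face $\phi \join (j = 0)$ (note $i = 0$ becomes a condition on $j$ once we are at the $1$-endpoint) — and then check that the output inhabits $A(\gamma(j))$ and restricts correctly: on $\phi$ it equals $u(1 \meet j) = u(j)$, and on $(1-j) = (j = 0)$ the path $\gamma'$ is constantly $\gamma(0)$ and the $(i=0)$ component pins the value to $\pi_1 a_0$. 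Finally, path lifting $\bfl$ is recovered, as the excerpt indicates, by specialising $\phi$ to $\bot$, where the partial element $u$ is uniquely determined.

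The main obstacle will be step four: getting the face annotations to line up so that the output of $\bfc_A$ genuinely restricts to $\pi_1 a_0$ on extent $(1-j)$. The subtlety is that the extra face $(i = 0)$ used inside the composition lives over the interval variable $i$ that the composition integrates out, and after integration one must verify — using the disjunction property and the identities $\gamma(i \meet j)|_{j=0} = \gamma(0)$, $u(i \meet j)|_{j = 0} = u(0)$ — that the resulting term over $\gamma(j)$ agrees with $\pi_1 a_0$ exactly when $j = 0$, and with $u(j)$ on $\phi$, with no clash on the overlap $\phi \meet (j=0)$. This is a routine but finicky manipulation of De Morgan and face-lattice equalities, and all of it can be carried out in the internal logic $\cL$ using only Assumption~\ref{assumption:interval-type}; I would present it as a short internal-language calculation rather than spelling out every equality proof, since by equality reflection the second components of the extension types are determined automatically.
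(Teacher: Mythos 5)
Your core idea --- bend the path with $i \meet j$, apply $\bfc_A$ along the bent path, and pin the starting value by adding a face component carrying $\pi_1 a_0$ --- is exactly the paper's construction. But there is a real confusion about which interval variable the extra face lives on, and as you state it the composition would not type-check.

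In steps 2--3, and again in your final paragraph (``the extra face $(i=0)$ used inside the composition lives over the interval variable $i$ that the composition integrates out''), you place the extra face on the \emph{composition} variable $i$, i.e.\ the variable bound in $\gamma'$ and $u'$. But the face argument of $\bfc_A$ is a $\phi : \LFace$ quantified \emph{before} the $\Pi(i:\I)$ in the partial-element argument of $\LCompTy{\Gamma}{A}$, so it cannot depend on that bound variable, and nothing in the definition lets you ``integrate'' a face in $i$ into a condition on $j$ at the end. The paper instead keeps the auxiliary face on the \emph{filling} variable --- what the paper calls $i$ and you call $j$. The face passed to $\bfc_A$ is the fixed element $\phi \join (1-j)$, and the partial element is $\lambda i.\,[\phi\ u(i\meet j),\ (1-j)\ \pi_1 a_0]$; this is well-typed because when $j=0$ we have $\gamma(i\meet j)=\gamma(0)$ for \emph{all} $i$, so the constant $\pi_1 a_0 : A(\gamma(0))$ is valid along the entire bent path, and on the overlap $\phi\meet(1-j)$ we have $u(i\meet 0)=u(0)=\pi_1 a_0$. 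The output then restricts to $[\phi\ u(j),\ (1-j)\ \pi_1 a_0]$ on $\phi\join(1-j)$ directly by the defining property of $\bfc_A$ --- no appeal to the disjunction property is needed. Your step 4 does briefly name the right face $\phi\join(j=0)$, but the surrounding remarks show the misconception persists; committing to the face being on the filling variable throughout closes the argument.
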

\begin{proof}
  We introduce the variables of appropriate types:
  \begin{align*}
    & \gamma : \I \to \Gamma, \\
    & \varphi : \Face, \\
    & u : \Pi(i: \I) . [\phi] \to A (\gamma(i)), \\
    & a_0 : A(\gamma(0))[\phi \mapsto u(0)], \\
    & i : \I.
  \end{align*}
  We need to find a term of type
  \[
    A(\gamma(i))[\phi\mapsto u(i), (i=0) \mapsto \pi_1a_o].
  \]
  We check that the following system is well-defined (in a context with $\phi\join (i=0)$):
  \[
    [\phi u(i\meet j), (i=0)\pi_1 a_0].
  \]
  \begin{itemize}
  \item If $\phi$, then $u(i\meet j) : A(\gamma(i\meet j))$.
  \item If $i=0$, then $\pi_1a_0 : A(\gamma(0)) = A(\gamma(i\meet j))$.
  \item If $\phi$ and $i=0$, then $\pi_1a_0 = u(0) = u(i\meet j)$.
  \end{itemize}
  Note also that this means that
  \[
    A(\gamma(0))[\phi \mapsto u(0)] = A(\gamma(0))[\phi \mapsto u(0),(i=0) \mapsto \pi_1a_0],
  \]
  and therefore we can write the following term:
  \[
    \bfc_A ~ (\lambda j.\gamma(i\meet j)) ~ (\phi \join (i=0))
    ~ (\lambda j. [\phi u(i\meet j), (i=0)\pi_1 a_0]) ~ a_0
  \]
  which has the type
  \[
    A(\gamma(i))[\phi\mapsto u(i), (i=0) \mapsto \pi_1a_o],
  \]
  as was needed.
\end{proof}

\paragraph{Path types.}
Given $\Lwftype{\Gamma}{A}$ and terms $\Lhastype{\Gamma}{t,u}{A}$, we can define the
\emph{Path type}
\[
  \LPath{A}{t}{u} \defeq \Pi(i:\I).A[(1-i)\mapsto t, i \mapsto u]
\]
as the type of paths in $A$, i.e., terms of type $\I \to A$, which start at $t$ and end at $u$.

\subsubsection{Assumptions for glueing and the universe}
\label{subsec:assump_U}

\begin{assumption}[Glueing]
  \label{sec:assumptions:glueing-type}
  There is a type for \emph{glueing} with the following type formation and typing rules
  \begin{mathpar}
    \inferrule{%
      \Lwftype{\Gamma}{A} \\
      \Lwftype{\Gamma, \Lface{\phi}}{T} \\
      \Lhastype{\Gamma, \Lface{\phi}}{f}{T \to A}}{%
    %%%
      \wftype{\Gamma}{\LGlue{\phi}{T}{f}{A}}} \and \inferrule{%
      \Lhastype{\Gamma}{b}{\LGlue{\phi}{T}{f}{A}}}{%
    %%%
      \Lhastype{\Gamma}{\Lunglue b}{A[\phi \mapsto f\, b]}} \and \inferrule{%
      \Lhastype{\Gamma, \Lface{\phi}}{f}{T \to A} \\
      \Lhastype{\Gamma, \Lface{\phi}}{t}{T} \\
      \Lhastype{\Gamma}{a}{A[\phi \mapsto f\, t]}}{%
    %%%
      \Lhastype{\Gamma}{\Lglue{\phi}{t}{a}}{\LGlue{\phi}{T}{f}{A}}}
  \end{mathpar}
  Satisfying the following judgemental equalities:
  \begin{align*}
    \Lglue{1}{t}{a} &= t, \\
    \Lglue{\phi}{b}{(\Lunglue b)} &= b, \\
    \Lunglue (\Lglue{\phi}{t}{a}) & = a. \tag*{$\blacklozenge$}
  \end{align*}
\end{assumption}

The assumption above is essentially the same as the rules for the glueing type in $\ctt$.
One difference is that in the formation rule for $\LGlueEmp$ we do \emph{not} require $f$ to be an equivalence.
We need only additionally assume that $f$ is an equivalence, which is stated in terms of the $\Path$ type, when proving that
glueing is fibrant in Lemma~\ref{lem:glueing-has-composition}.

\begin{assumption}[Fibrant universe]
  \label{sec:assumptions:fibrant-universe}
  There is a \emph{fibrant universe} $\LUf$ which contains pairs of a code in $\LU$
  with an associated composition operator:

  \begin{mathpar}
    \inferrule{%
      \Lhastype{\Gamma}{a}{\LU} \\
      \Lhastype{}{\bfc}{\LCompTy{\Gamma}{\LEl{a}}}}{%
      \Lhastype{\Gamma}{\llparenthesis a, \bfc \rrparenthesis}{\LUf}}
    \and
    \inferrule{%
      \Lhastype{\Gamma}{a}{\LUf}}{%
      %%% 
      \Lwftype{\Gamma}{\LEl{a}}}
    \and
    \inferrule{%
      \Lhastype{\Gamma}{a}{\LUf}}{%
      %%% 
      \Lhastype{}{\LElComp{a}}{\LCompTy{\Gamma}{\LEl{a}}}}
  \end{mathpar}
  satisfying
  \begin{align*}
    \LEl{\llparenthesis a, \bfc \rrparenthesis} &= \LEl{a} \\
    \LElComp{\Gamma}{\llparenthesis a, \bfc \rrparenthesis} &= \bfc \\
    \llparenthesis \LEl{p}, \LElComp{\Gamma}{p} \rrparenthesis &= p. \tag*{$\blacklozenge$}
  \end{align*}
\end{assumption}

\begin{assumption}[$\forall$]
  \label{sec:assumptions:forall}
  We assume that the map $\phi \mapsto \lambda\_.\phi : \LFace \to (\II \to \LFace)$
  between posets has an internal right adjoint $\forall$. Concretely this means that for
  any $\phi : \LFace$ and any $f : \II \to \LFace$ we assume
  \begin{align*}
    \left(\forall (i : \II), \phi \Rightarrow f(i)\right) \Leftrightarrow \left(\phi \Rightarrow \forall(f)\right).
    \tag*{$\blacklozenge$}
  \end{align*}
\end{assumption}

\subsection{A Model of $\ctt$ in fibrant types in $\cL$}
\label{sec:model-ctt}

In this section we show how to use the assumptions from the preceding section to interpret $\ctt$.
In the following sections we show how to extend the interpretation to $\gctt$.
We fix a presheaf category which models $\cL$ and define a \emph{category with families}~\cite{dybjer1996internal} by specifying the type and term functors $\Ty$ and $\Tm$.
The base category of the category with families, the category of contexts, is the chosen presheaf category.
We use the language $\cL$ as the internal language of the presheaf category to describe the objects and morphisms.
Thus to construct the model of $\ctt$ we reuse the types and terms of the language $\cL$, but we only take the \emph{fibrant} types, i.e., the ones with associated composition operators.
The type and term functors are as defined as
\begin{align*}
  \Ty(\Gamma) &\defeq \left\{ ([A],[\bfc_A]) ~\middle|
                \begin{array}{l}
                  \Lwftype{\Gamma}{A} \\
                  \Lhastype{}{\bfc_A}{\LCompTy{\Gamma}{A}}
                \end{array} \right\} \\[1em]
  \Tm(\Gamma,([A], [\bfc_A])) &\defeq \left\{ [t] \mid \Lhastype{\Gamma}{t}{A} \right\}.
\end{align*}
where we use $[A]$ and $[t]$ respectively for the equivalence classes of $A$ and $t$ modulo judgemental equality of $\cL$.
Note that if $A$ and $B$ are equivalent types then $\LCompTy{\Gamma}{A}$ and $\LCompTy{\Gamma}{B}$ are also equivalent, hence the type functor is well-defined.
In constructions and proofs we will omit the mention of equivalence classes and work with representatives.
This is justified since all operations in $\cL$ respect judgemental equality.

Note that the context $\Gamma$ need not correspond to a type, i.e.\ it need not be fibrant.
Context extension and projections can be taken directly from the internal language: $\Gamma.A \defeq \Sigma \Gamma A$, $\mathsf{p} \defeq \pi_1$, and $\mathsf{q} \defeq \pi_2$.

The main challenge addressed in this section is showing that the category with families supports dependent sums, dependent products and universes.
This involves showing that these types of the internal language can be equipped with compositions.
Additionally compositions need to satisfy certain judgemental equalities~\cite[Section $4.5$]{Cubical}.
Checking these equalities is routine from construction of compositions at different types.
Thus we only construct compositions and leave showing judgemental equalities to the reader.

\subsubsection{Interpreting composition}
The following composition term is interpreted in terms of the composition in $\cL$.
\begin{mathpar}
  \inferrule{%
    \hastype{\Gamma}{\phi}{\Face} \\
    \wftype{\Gamma, i:\I}{A} \\
    \hastype{\Gamma, \phi, i : \I}{u}{A} \\
    \hastype{\Gamma}{a_0}{A\subst{0}{i}[\phi\mapsto u\subst{0}{i}]}
  }{%
    \hastype{\Gamma}{\comp{i}{A}{[\phi\mapsto u]}{a_0}}{A\subst{1}{i}[\phi\mapsto u\subst{1}{i}]}
  }.
\end{mathpar}
By assumption we have $c_A$ of type $\LCompTy{\Gamma, i : \I}{A}$ and $u$ and $a_0$ are interpreted as terms in the internal language of the corresponding types.
The interpretation of composition is then the term
\begin{align*}
  \Lhastype{\gamma : \Gamma}{c_A \left(\lambda (i : \I) . (\gamma, i)\right)
                                 \phi
                                 \left(\lambda (i : \I) (p : \Lface{\phi}) . u\right)
                                 a_0}
                            {A(\gamma(1))[ \phi \mapsto u(1)]}
\end{align*}
where we have omitted writing the proof $u(0) = a_0$ on $\Lface{\phi}$.
This proof is constructed from the third premise of the rule.

\subsubsection{Interpreting dependent function types}

Assume that $\interpret{\wftype{\Gamma}{A'}} = (A, \bfc_A)$ and $\interpret{\wftype{\Gamma, x: A'}{B'}} = (B, \bfc_B)$.
We define
\[
  \interpret{\wftype{\Gamma}{(x:A')\to B'}} \defeq (\Pi(x:A).B, \bfc)
\]
where $\bfc_{\Pi(x:A).B} : \LCompTy{\Gamma}{\Pi(x:A).B}$ comes from the following lemma.

\begin{lemma}[\emph{cf.} {\cite[Proposition~$0.3$]{Cubical:internal}}]
  $\Pi$-types preserve compositions: if we have composition terms $\bfc_A : \LCompTy{\Gamma}{A}$ and $\bfc_B : \LCompTy{\Gamma.A}{B}$, then we can form a new composition $\bfc_{\Pi(x:A).B} : \Phi(\Gamma, \Pi(x:A).B)$.
\end{lemma}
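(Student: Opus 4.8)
The plan is to unfold the definition of $\LCompTy{\Gamma}{\Pi(x:A).B}$ and exhibit the composition term explicitly. Such a term must take $\gamma : \I \to \Gamma$, a face $\phi : \LFace$, a partial element $u : \Pi(i:\I).\,\Lface{\phi} \to \Pi(x:A(\gamma(i))).B(\gamma(i),x)$, and a term $w_0$ in $(\Pi(x:A(\gamma(0))).B(\gamma(0),x))[\phi \mapsto u(0)]$, and it must return an element of $(\Pi(x:A(\gamma(1))).B(\gamma(1),x))[\phi \mapsto u(1)]$. Using $\eta$ for $\Pi$-types I would reduce the problem to the following: given an arbitrary $x_1 : A(\gamma(1))$, produce an element of $B(\gamma(1),x_1)$ that coincides with $u(1)(x_1)$ on $\Lface{\phi}$.

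The key idea is to transport $x_1$ \emph{backwards} along $\gamma$ before composing in $B$. From $\bfc_A$ I first obtain, by the preceding lemma, a filling operation for $A$, and hence a path lifting operation $\bfl_A$. Applying $\bfl_A$ to the reversed path $\delta \defeq \lambda i.\,\gamma(\Ineg{i})$ and to $x_1 : A(\delta(0))$, and reindexing, gives a path $x(i) \defeq \bfl_A\,\delta\,x_1\,(\Ineg{i})$ in $A$ lying over $\gamma$, with $x(1) = x_1$ by the endpoint property of $\bfl_A$. Then $i \mapsto (\gamma(i),x(i))$ is a path in the extended context $\Gamma.A$; along it, $i, p \mapsto u(i)(x(i))$ is a partial element of $B$ of extent $\phi$; and $w_0(x(0))$ is an element of $B(\gamma(0),x(0))$ that agrees with $u(0)(x(0))$ on $\Lface{\phi}$ because $w_0$ extends $u(0)$. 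Hence $\bfc_B$ applies, and I would define
\[
  \bfc_{\Pi(x:A).B}\,\gamma\,\phi\,u\,w_0 \;\defeq\; \lambda(x_1:A(\gamma(1))).\;\bfc_B\,(\lambda i.(\gamma(i),x(i)))\,\phi\,(\lambda i\,p.\,u(i)(x(i)))\,(w_0(x(0))),
\]
where $x(i)$ is understood to depend on the bound variable $x_1$ as above.

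It then remains to check well-typedness and the restriction. By the typing of $\bfc_B$ the body lies in $B(\gamma(1),x(1))[\phi \mapsto u(1)(x(1))] = B(\gamma(1),x_1)[\phi \mapsto u(1)(x_1)]$, so the abstraction inhabits $\Pi(x:A(\gamma(1))).B(\gamma(1),x)$; and on $\Lface{\phi}$ the composition reduces to $u(1)(x(1)) = u(1)(x_1)$, so by $\eta$ the abstraction equals $u(1)$ there, as required. Finally, the judgemental equalities that a composition structure for $\Pi$-types must satisfy (see \cite[Section~4.5]{Cubical}) follow by directly unfolding the definition above; in keeping with the convention adopted earlier I would leave these routine verifications to the reader. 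I expect the one genuine subtlety to be the backwards path lifting of $x_1$ together with its bookkeeping: this is what lets the composition in $B$ run over a path in $\Gamma.A$ whose second component tracks the transported argument, and aligning the endpoint conditions of $\bfl_A$ with the orientation of $\gamma$ is where care is needed.
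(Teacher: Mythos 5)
Your proof is correct and follows essentially the same approach as the paper's: fix $x_1 : A(\gamma(1))$, lift it backwards along $\gamma$ via the path-lifting operation derived from $\bfc_A$, and then apply $\bfc_B$ along the resulting path in $\Gamma.A$ with the partial element $u(i)(x(i))$. You are in fact slightly more careful than the paper's displayed term, which omits the final argument $w_0(x(0))$ to $\bfc_B$ (evidently a typo), and you spell out the endpoint check at $\phi$ more explicitly.
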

\begin{proof}
  Recall that $\Pi$-types commutes with substitution:
  \[
    (\Pi(x:A).B)(\gamma) = \Pi(x:A(\gamma)).B(\gamma),
  \]
  where $B(\gamma)$ is a type in the context with $A$.
  We introduce the variables:
  \begin{align*}
    &\gamma : \I \to \Gamma, \\
    &\phi : \LFace, \\
    &u : \Pi(i : \I) . [\phi] \to \Pi (a:A(\gamma(i))) . B(\gamma(i)), \\
    &c_0 : (\Pi(a : A(\gamma(0))). B(\gamma(0)))[\phi \mapsto u(0)].
  \end{align*}
  We need to find an element in
  \[
    \Pi(a:A(\gamma(1))). B(\gamma(1)),
  \]
  along with a proof that it is $u(1)$ when $\phi=1$.

  Let $a_1 : A(\gamma(1))$ be given. We define $a(i) : A(\gamma(i))[i \mapsto a_1]$ by
  using path lifting on $a_1$, i.e.,
  \[
    a(i) \defeq \bfl ~ (\lambda i . \gamma(1-i)) ~ a_1 ~ (1-i);
  \]
where $\bfl$ is the filling operation defined earlier. Then
  \[
    b_1 \defeq \bfc_B ~(\lambda i . \left< \gamma(i), a(i) \right>) ~ \phi ~ (\lambda i. u(i)(a(i)))
  \]
  will have the type $B(\gamma(1))[\phi \mapsto u(1)a_1]$. So $\lambda a_1. \pi_1 b_1$ has
  the type we are looking for. Now assume $\phi = \top$; then $\lambda a_1 . b_1 = \lambda
  a_1 . u(i)a_1 = u(i)$, which is what we needed.
\end{proof}

\subsubsection{Interpreting dependent sum types}
Dependent sum types $(x : A) \times B$ are interpreted by $\Sigma$-types from $\cL$, along with the composition operation that comes from the following lemma:
\begin{lemma}
  $\Sigma$-types preserve compositions: if we have composition terms $\bfc_A : \LCompTy{\Gamma}{A}$ and $\bfc_B : \LCompTy{\Gamma.A}{B}$, then we can form a new composition $\bfc_{\Sigma(x:A).B} : \Phi(\Gamma, \Sigma(x:A).B)$.
\end{lemma}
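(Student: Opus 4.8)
The plan is to build the $\Sigma$-composition out of the compositions $\bfc_A$ and $\bfc_B$ we are given, by splitting the composition data into its first and second components, using a \emph{filler} in $A$ to transport the base point, and then composing $B$ along that filler. So I would first introduce the data of a composition for $\Sigma(x:A).B$:
\begin{align*}
  &\gamma : \I \to \Gamma, \\
  &\phi : \LFace, \\
  &u : \Pi(i : \I) . [\phi] \to \Sigma(a : A(\gamma(i))). B(\gamma(i),a), \\
  &w_0 : \bigl(\Sigma(a : A(\gamma(0))). B(\gamma(0),a)\bigr)[\phi \mapsto u(0)],
\end{align*}
and split them: put $u_1(i) \defeq \pi_1(u(i))$, $u_2(i) \defeq \pi_2(u(i))$, $a_0 \defeq \pi_1 w_0$ and $b_0 \defeq \pi_2 w_0$, so that $u_1$ is a partial element of $A$ along $\gamma$, $a_0 : A(\gamma(0))[\phi \mapsto u_1(0)]$, and $b_0 : B(\gamma(0),a_0)$ with $b_0 = u_2(0)$ on $[\phi]$ (the types agree there because $a_0 = u_1(0)$ on $[\phi]$). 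We must produce a term of $\bigl(\Sigma(a:A(\gamma(1))).B(\gamma(1),a)\bigr)[\phi \mapsto u(1)]$.

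The key step is to use the filling operation $\bff$ obtained from $\bfc_A$ (the fillings-from-compositions lemma) to connect $a_0$ to the $A$-composite by a path:
\[
  a(i) \defeq \bff ~ \gamma ~ \phi ~ u_1 ~ a_0 ~ i ~:~ A(\gamma(i))[\phi \mapsto u_1(i),\ (1-i) \mapsto a_0].
\]
Then $a(0) = a_0$; the endpoint $a(1)$ lies in $A(\gamma(1))[\phi \mapsto u_1(1)]$ and equals the $A$-composite $\bfc_A ~ \gamma ~ \phi ~ u_1 ~ a_0$ by construction of $\bff$; and on $[\phi]$ we have $a(i) = u_1(i)$ for every $i$. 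This last fact lets me reread $u_2(i) : [\phi] \to B(\gamma(i), u_1(i))$ as a partial element of $B(\gamma(i), a(i))$ of extent $\phi$, and view $b_0$ as a term of $B(\gamma(0), a(0))[\phi \mapsto u_2(0)]$ (using $a(0) = a_0$). I would then compose $B$ along the path $\lambda i . \langle \gamma(i), a(i) \rangle : \I \to \Gamma.A$:
\[
  b_1 \defeq \bfc_B ~ (\lambda i . \langle \gamma(i), a(i) \rangle) ~ \phi ~ u_2 ~ b_0 ~:~ B(\gamma(1), a(1))[\phi \mapsto u_2(1)].
\]
The composite I return is $\langle a(1), b_1 \rangle$; it lives in $\Sigma(a:A(\gamma(1))).B(\gamma(1),a)$, and on $[\phi]$ we have $a(1) = u_1(1)$ and $b_1 = u_2(1)$, so $\langle a(1), b_1 \rangle = \langle \pi_1(u(1)), \pi_2(u(1)) \rangle = u(1)$, i.e.\ it extends $u(1)$ as required. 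Discharging $\gamma, \phi, u, w_0$ yields the desired term $\bfc_{\Sigma(x:A).B} : \LCompTy{\Gamma}{\Sigma(x:A).B}$.

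The step I expect to be the main obstacle — though it is more a matter of care than of genuine difficulty — is the bookkeeping forced by the dependency of $B$ on $A$. Every partial element of $B$ that I form, and the application of $\bfc_B$ itself, are only well-typed after checking that the relevant $A$-components coincide: at extent $\phi$, so that $u_2$ is genuinely a partial section of $B$ over the filler $a$ and so that $\langle a(1), b_1 \rangle$ restricts to $u(1)$; and at $i = 0$, so that $b_0$ is a legitimate base point over $a(0)$. All of these coincidences are exactly the two constraints $\phi \mapsto u_1(i)$ and $(1-i) \mapsto a_0$ recorded in the type of $a(i)$, together with equality reflection in $\cL$. Finally, as in the $\Pi$-type case, the judgemental equalities demanded of compositions (agreement with $u$ on $[\phi]$, and the uniformity and degeneracy equations) follow routinely from this construction, and we leave them to the reader.
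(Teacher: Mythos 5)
Your proof is correct and follows exactly the route the paper alludes to when it says the $\Sigma$-case ``proceeds similarly'' to the $\Pi$-case: build a filler $a(i)$ for the first component (you use the full filler $\bff$ from $a_0$ forward, rather than the degenerate path-lifting $\bfl$ from $a_1$ backward as in the $\Pi$-case, which is the right adaptation since the $\Sigma$-filler must track $u_1$ on $[\phi]$), then compose $B$ along $\lambda i.\langle\gamma(i),a(i)\rangle$, and pair the results. The bookkeeping you flag — that the $A$-constraints in the filler's type are precisely what make $u_2$ a partial section of $B$ over $a$, make $b_0$ a legitimate starting point at $i=0$, and make the final pair restrict to $u(1)$ on $[\phi]$ — is exactly what one must check, and you check it.
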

The proof proceeds similarly to the previous proof that dependent products have compositions.
 
\subsubsection{Interpreting systems}
We interpret the systems of $\ctt$ by using the systems of $\cL$, and by using the fact that systems preserve compositions: If we have a system $\Lwftype{\Gamma}{[\phi_1 A_1, \dots, \phi_n A_n]}$, then we can define a new composition using a system consisting of the compositions of all the components:
\[
  \bfc \defeq
  \lambda \gamma, \psi, u, a_0 .
  [\phi_1(\gamma \, 1) (\bfc_{A_1} \, \gamma_1 \, \psi \, u \, a_0), \dots, \phi_n(\gamma \, 1) (\bfc_{A_n} \, \gamma_n \, \psi \, u \, a_0)]
  ~ : ~
  \LCompTy{\Gamma}{[\phi_1 A_1, \dots, \phi_n A_n]},
\]
where $\gamma_m : \I \to \Gamma, [\phi_m]$ is the context map $\gamma$ extended with the witness of $[\phi_m]$.

\subsubsection{Interpreting path types}
We interpret the path types:
\[
  \interpret{\Lwftype{\Gamma}{\Path A ~ t~s}} \defeq
  (\LPath{A'}{\interpret{t}}{\interpret{s}}, \bfc),
\]
where $\interpret{A} = (A', \bfc_A)$ and $\bfc : \LCompTy{\Gamma}{\LPath{A'}{\interpret{t}}{\interpret{s}}}$ comes from Lemma~\ref{lem:path-has-comp}.

\begin{lemma}
  \label{lem:path-has-comp}
  Path-types preserve composition: if $\Lwftype{\Gamma}{A}$ is fibrant, then for any $\Lhastype{\Gamma}{t,s}{A}$, we have a composition operator $\bfc : \LCompTy{\Gamma}{\LPath{A}{t}{s}}$.
\end{lemma}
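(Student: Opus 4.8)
The plan is to follow the same strategy used for $\Pi$- and $\Sigma$-types, reducing a composition problem in $\LPath{A}{t}{s}$ to a composition problem in the underlying fibrant type $A$. Recall that, by definition, a path in $\LPath{A}{t}{s}$ over $\gamma$ is a function $\I \to A(\gamma(-))$ that restricts to $t$ at $0$ and $s$ at $1$; hence composing a family of such paths amounts to composing their values pointwise in $A$, while being careful to keep the two endpoint constraints satisfied.

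First I would introduce the variables
\[
  \gamma : \I \to \Gamma, \quad
  \psi : \LFace, \quad
  u : \Pi(i:\I).\,\Lface{\psi} \to \LPath{A(\gamma(i))}{t(\gamma(i))}{s(\gamma(i))}, \quad
  p_0 : \LPath{A(\gamma(0))}{t(\gamma(0))}{s(\gamma(0))}[\psi \mapsto u(0)],
\]
and fix a fresh name $j : \I$, the new path coordinate. The goal is to produce an element of $\LPath{A(\gamma(1))}{t(\gamma(1))}{s(\gamma(1))}$ extending $u(1)$ on $\psi$. The natural candidate is $\pathlambda{j}\, b_1(j)$, where $b_1(j) : A(\gamma(1))$ is obtained by applying $\bfc_A$ at the context path $\gamma$ with an enlarged extent that additionally pins down the endpoints. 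Concretely, I would set
\[
  b_1(j) \defeq \bfc_A ~ \gamma ~ (\psi \join (j=0) \join (j=1)) ~
    \bigl(\lambda i.\, [\psi\mapsto u(i)(j),\ (j=0)\mapsto t(\gamma(i)),\ (j=1)\mapsto s(\gamma(i))]\bigr) ~ (p_0 \, j),
\]
after checking that this system is well-defined: on $\psi \meet (j=0)$ we have $u(i)(0) = t(\gamma(i))$ since each $u(i)$ is a path with the right left endpoint, similarly on $\psi \meet (j=1)$, and the $(j=0)$/$(j=1)$ cases are disjoint by the disjunction property of $\I$; one also checks $p_0\,j$ lies in $A(\gamma(0))$ and agrees with the system at $i=0$. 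The result $b_1(j)$ then has type $A(\gamma(1))[\psi \mapsto u(1)(j),\ (j=0)\mapsto t(\gamma(1)),\ (j=1)\mapsto s(\gamma(1))]$, so $\pathlambda{j}\,b_1(j)$ is a genuine element of the path type with the correct endpoints, and its restriction to $\psi$ is $u(1)$ by the first clause of the system, as required.

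The main obstacle, and the point that needs genuine care rather than routine bookkeeping, is verifying that the system fed to $\bfc_A$ is well-typed: one must confirm the pairwise agreement conditions among the three cases $\psi$, $(j=0)$, $(j=1)$, crucially using that $(j=0)\meet(j=1) = \ZeroF$ so those two never overlap, and that $u(i)$'s endpoint equations hold \emph{judgementally} on the relevant faces so the overlaps with $\psi$ are genuine equalities; one also needs that the $a_0$-argument $p_0\,j$ lives in $A(\gamma(0))$ with the matching restriction, which follows from $p_0$ being a partial-element-extending term in the path type together with the endpoint equalities of $p_0$. Everything else — that the endpoint projections of $\pathlambda{j}\,b_1(j)$ come out as $t(\gamma(1))$ and $s(\gamma(1))$, and the $\psi$-restriction — is then immediate from the defining equations of systems and of $A[\phi\mapsto u]$. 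Finally, one should remark (and leave to the reader, as elsewhere in this section) that the additional judgemental equalities required of compositions, e.g.\ their behaviour under substitution and the degeneracy law, follow from the corresponding properties of $\bfc_A$.
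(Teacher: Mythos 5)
Your proposal is correct and follows essentially the same route as the paper: compose pointwise at each $j:\I$ by calling $\bfc_A$ at the enlarged extent $\psi\join(j=0)\join(j=1)$, feeding it the system that combines $u$ on $\psi$ with the endpoint constraints $t(\gamma(-))$ and $s(\gamma(-))$, and using $p_0\,j$ as the starting element; then abstract over $j$ to get the path. The variable names are swapped relative to the paper (you use $j$ for the path coordinate and $i$ for the composition index; the paper does the reverse, writing the faces $(j=0)$, $(j=1)$ as $(1-i)$ and $i$), but the construction is the same.

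One small correction: the disjointness $(j=0)\meet(j=1)=\ZeroF$ that you invoke to dismiss the overlap between the two endpoint cases is \emph{not} a consequence of the disjunction property of $\I$; it comes from the axiom $0\neq 1$ in Assumption~\ref{assumption:interval-type} (equivalently, from the defining relation of the face lattice). The disjunction property is the separate implication $i\join j=1 \Rightarrow (i=1)\vee(j=1)$ and plays no role here. This misattribution does not affect the validity of your argument, and the paper in fact leaves this overlap condition implicit.
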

\begin{proof}
  First note that if we have $\Lhastype{\Gamma}{\LPath{A}{t}{s}}$ and $\Lhastype{}{\gamma}{\Gamma}$, then
  \[
    (\LPath{A}{t}{s})(\gamma) = \LPath{A(\gamma)}{t(\gamma)}{s(\gamma)}
    = \Pi (i:\I). A(\gamma)\left[
      \begin{array}{l@{ }l}
        i=0~&\mapsto t(\gamma) \\
        i=1 &\mapsto s(\gamma)
      \end{array}
    \right].
  \]
  Now let
  \begin{align*}
    &\gamma : \I \to \Gamma \\
    &\phi : \LFace \\
    &u : \Pi(j:\I).[\phi] \to \LPath{A(\gamma\, j)}{t(\gamma\, j)}{s(\gamma\, j)} \\
    &p_0 : (\LPath{A(\gamma\, 0)}{t(\gamma\, 0)}{s(\gamma\, 0)})[\phi \mapsto u0]
  \end{align*}
  be given. Our goal is to find a term $p_1$ such that
  \[
    p_1 : (\LPath{A(\gamma\, 1)}{t(\gamma\, 1)}{s(\gamma\, 1)})[\phi \mapsto u1].
  \]
  We will do this by finding a term $q : \Pi(i : \I).A(\gamma\, 1)[\phi \mapsto u\, 1\, i]$, for
  which we verify that $q\, 0 = t(\gamma\, 1)$ and $q1 = s(\gamma\, 1)$, in other words,
  \[
    q : \Pi (i: \I). A(\gamma\, 1)[\phi \mapsto u\, 1\, i, (1-i) \mapsto t(\gamma\, 1), i \mapsto
    s(\gamma\, 1)]
  \]
  as this will be equivalent to having such a $p_1$.

  Let $i:\I$. By leaving some equality proofs implicit we can define the system
  \[
    r(j) \defeq [\phi u\, j\, i, (1-i)t(\gamma\, j), i s(\gamma\, j)]
    : \Pi(j:\I).[\phi \join (1-i) \join i] \to A(\gamma\, j),
  \]
  which is well-defined because $u\, j\, 0=t(\gamma\, j)$ and $u\, j\, 1 = s(\gamma\, j)$. We also have
  that $p_0\, i : A(\gamma\, 0)[\phi \mapsto u\, 0\, i ]$, and since $p_0\, 0 = t(\gamma\, 0)$ and $p_0
  1 = s(\gamma\, 0)$, we can say that
  \[
    p_0\, i : A(\gamma\, 0)[\phi \mapsto u\, 0\, i, (1-i)\mapsto t(\gamma\, 0), i \mapsto s(\gamma\, 0)]
  \]
  so we can use the fibrancy of $A$ to define the term
  \[
    q(i) \defeq \bfc_A \gamma ~ (\phi \join (1-i) \join i) ~ r ~ (p_0\, i)
    : \Pi (i: \I). A(\gamma\, 1)[\phi \mapsto u\, 1\, i, (1-i) \mapsto t(\gamma\, 1), i \mapsto
    s(\gamma\, 1)],
  \]
  which is what we wanted.
\end{proof}

\subsubsection{Interpreting glue types}
We interpret $\Glue$ from $\ctt$ using $\LGlueEmp$ from $\cL$ along with a composition operator, which we have by the following lemma:
\begin{lemma}
  \label{lem:glueing-has-composition}
  Glueing is fibrant, i.e., if we have
  \begin{align*}
    \Gamma &\vdash A \\
    \Gamma &\vdash \phi : \LFace \\
    \Gamma, [\phi] &\vdash T \\
    \Gamma &\vdash w: [\phi] \to T \to A \\
    \Gamma &\vdash p : \operatorname{isEquiv} w
  \end{align*}
  then there is a term $\bfc : \LCompTy{\Gamma}{\LGlue{\phi}{T}{w}{A}}$.
\end{lemma}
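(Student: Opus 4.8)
The plan is to adapt the construction of compositions for $\LGlueEmp$-types from Cohen et al.~\cite[Sec.~6]{Cubical} to the internal language $\cL$; this is the most intricate composition in the model, and the work is almost entirely bookkeeping of face conditions. I would first note that, as is implicit in the lemma's intended applications, $A$ and $T$ come equipped with composition structures $\bfc_A : \LCompTy{\Gamma}{A}$ and $\bfc_T : \LCompTy{\Gamma,\Lface{\phi}}{T}$. Then I would introduce the data of a composition --- a path $\gamma : \I \to \Gamma$, an extent $\psi : \LFace$, a partial element $u : \Pi(i:\I).\Lface{\psi} \to (\LGlue{\phi}{T}{w}{A})(\gamma(i))$, and $b_0 : (\LGlue{\phi}{T}{w}{A})(\gamma(0))[\psi \mapsto u(0)]$ --- and aim to construct an element of $(\LGlue{\phi}{T}{w}{A})(\gamma(1))[\psi \mapsto u(1)]$, writing $\phi_i \defeq \phi(\gamma(i))$ and similarly for the remaining data.

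First I would push the problem down to $A$ via $\Lunglue$: put $a(i) \defeq \Lunglue(u(i))$, a partial element of $A(\gamma(i))$ of extent $\psi$, and $a_0 \defeq \Lunglue(b_0)$, which extends $a(0)$; composing these in $A$ with $\bfc_A$ along $\gamma$ yields $a_1 : A(\gamma(1))$ extending $a(1)$ on $\Lface{\psi}$. The key use of Assumption~\ref{sec:assumptions:forall} is to form $\delta \defeq \forall(i \mapsto \phi_i) : \LFace$, the extent on which $\phi_i$ holds for \emph{every} $i$; on $\Lface{\delta}$ each type $T(\gamma(i))$ and each $u(i):T(\gamma(i))$ is available, so there I may compose in $T$ with $\bfc_T$ to get $t_1 : T(\gamma(1))$ extending $u(1)$ on $\Lface{\psi}$, and (since $\phi_1$ holds) form $w\,t_1 : A(\gamma(1))$. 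The elements $w\,t_1$ and $a_1$ need not coincide on the nose on $\Lface{\delta}$ --- a map $T\to A$ commutes with composition only up to a path --- but one further filling in $A$ produces a path $\theta$ from $w\,t_1$ to $a_1$ on $\Lface{\delta}$, which by stability of composition restricts to the constant path on $\Lface{\psi}$.

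Next I would use that $p$ makes $w$ an equivalence, so that over $\Lface{\phi_1}$ the fibres of $w$ are contractible. Feeding the contraction the partial fibre element over $a_1$ assembled from $(t_1,\theta)$ on $\Lface{\delta}$ and $(u(1),\Lrefl)$ on $\Lface{\psi \meet \phi_1}$ --- which agree on the overlap precisely because $\theta$ is constant there --- produces a genuine element $\hat t_1 : T(\gamma(1))$ on $\Lface{\phi_1}$ together with a path $\alpha$ from $a_1$ to $w\,\hat t_1$, with $\hat t_1 = u(1)$ and $\alpha$ constant on $\Lface{\psi}$. A final composition in $A$ along $\alpha$ moves $a_1$ to an element $\hat a_1 : A(\gamma(1))$ whose $\phi_1$-restriction equals $w\,\hat t_1$ and whose $\psi$-restriction is still $a(1)$; then $\Lglue{\phi_1}{\hat t_1}{\hat a_1}$ is well-typed, and since on $\Lface{\psi}$ we have $\hat t_1 = u(1)$ and $\hat a_1 = \Lunglue(u(1))$, the equation $\Lglue{\phi}{b}{(\Lunglue b)} = b$ of Assumption~\ref{sec:assumptions:glueing-type} shows it restricts to $u(1)$ there, as required. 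The judgemental equations demanded of compositions are then routine, and (as already noted in this section) we leave them to the reader.

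The main obstacle is not any single construction but the simultaneous coherence of three independent extents --- the predicate $\phi$, which varies along $\gamma$ and is the very reason Assumption~\ref{sec:assumptions:forall} is needed; the extent $\psi$ of the partial element; and the auxiliary extent $\delta$ from the equivalence correction. Every intermediate term must be exhibited as a well-defined partial element whose restrictions to all pairwise overlaps of these extents agree, the systems used (for $\theta$ and for $\hat a_1$) must be checked to be consistent, and the final glued term must be verified to restrict to $u(1)$ on $\Lface{\psi}$; this requires repeated appeal to the disjunction property and to the $\forall$-adjunction when reasoning in the face lattice, together with careful tracking of which equalities hold on which extent. None of the individual steps is conceptually deep, but the combinatorics is precisely what makes glueing the technical heart of the cubical model, here as in \emph{op.\ cit.}
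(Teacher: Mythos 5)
Your proposal follows the same approach as the paper, which for this lemma only points to the construction in Cohen et al.\ and singles out the face $\delta = \forall(\phi\circ\gamma)$; you correctly flesh out exactly that construction (unglueing, composition in $A$, the extent $\delta$, composition in $T$ over $[\delta]$, the presentation path, the equivalence correction, and the final gluing), and you are also right to observe, as the lemma statement leaves implicit, that $A$ and $T$ must be assumed fibrant.
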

The construction of $\bfc$ in the proof of the above lemma is analogous to the construction of the composition operation for glueing in $\ctt$~\cite{Cubical}, but formulated in $\cL$.
A crucial part of the construction is the face $\delta \defeq \forall (\phi \circ \gamma)$, where $\gamma : \I \to \Gamma$, which satisfies that $[\delta]$ implies $[\phi (\gamma \, i)]$ for all $i: \I$.

\subsubsection{Interpreting the universe}
The universe of $\ctt$ is interpreted using the universe of fibrant types $\LUf$.
To define the composition for the universe we follow the construction of Cohen et~al.~\cite{Cubical} in the language $\cL$.

\subsection{A Model of \texorpdfstring{$\cL$}{L} in Cubical Sets}
\label{sec:concrete-model-of-L}
In this section we construct a model of $\cL$ in the category of cubical sets.
Recall that the category of cubes $\cube$ has as objects finite sets of names $i,j,k,\ldots$ and as morphism the functions $J \to \dM{I}$ where $\dM{I}$ is the free De Morgan algebra on $I$.
Alternatively, $\cube$ can be described as the \emph{opposite} of the Kleisli category of the free De Morgan algebra monad on $\finset$.
The category of cubical sets is then the category $\widehat{\cube}$ of presheaves on $\cube$.

In the previous section we showed how to construct a model of $\ctt$ using $\cL$.
Constructing a model of $\cL$ in cubical sets then shows we can give a model of $\ctt$ in cubical sets.
This was shown already by Cohen et~al.~\cite{Cubical}, however we will use results in this section to construct additional models of $\ctt$ in the subsequent section.
In particular, we shall use presheaves over $\cube \times \omega$ to model the full $\gctt$ type theory.

The references in Section~\ref{sec:internal} show how to model dependent predicate logic in any presheaf topos~\cite{phoa1992introduction}, so we omit the verification of this part.
We do however note how the judgements are interpreted since this will be used later in concrete calculations where working in the internal language no longer suffices, e.g., in the definition of the fibrant universe.
\begin{itemize}
\item A context $\Gamma \vdash$ is interpreted as a presheaf.
\item The judgement $\wftype{\Gamma}{A}$ gives a pair of a presheaf $\Gamma$ on $\cube$ and a presheaf $A$ on the category of elements of $\Gamma$.
\item The judgement $\hastype{\Gamma}{t}{A}$ in addition gives a global element of the presheaf $A$.
  Thus for each $I \in \cube$ and $\gamma \in \Gamma(I)$ we have $t_{I,\gamma} \in A(I,\gamma)$ satisfying naturality conditions.
\end{itemize}

Moreover, there is a canonical bijective correspondence between presheaves $\Gamma$ on $\cube$ and interpretations of types $\wftype{\cdot}{\Gamma}$.
This justifies treating contexts as types in $\cL$ when it is convenient to do so.

\subsubsection{The interval type assumption is satisfied}
Take $\II$ to be the functor $y_1$ mapping $I \mapsto \Hom{\cube}{I}{1}=\dM{I}$, where $1$ is the (globally) chosen singleton set.
Since the theory of De Morgan algebras is geometric and for each $I$ we have a De Morgan algebra, together with the fact that the morphisms are De Morgan algebra morphisms, we have that $\II$ is an internal De Morgan algebra, as needed.

Moreover the finitary disjunction property axiom is also geometric, and since it is satisfied by each free De Morgan algebra $\dM{I}$, it also holds internally.

\subsubsection{The glueing assumption is satisfied}\label{sec:assumpt-glue}

We will define glueing internally, apart from a ``strictness'' fix, for which we use the following lemma, which we will also require in Section~\ref{sec:model-gctt}:

\begin{lemma}[Strictification]
  \label{lem:fixing-a-type}
  Let $C$ be a small category and $\top$ a global element\footnote{For a constructive
    meta-theory we add that, for each $c$, equality with $\top_c$ is decidable.} of an object $\KK$ in $\widehat{C}$.
  Denote by $[\phi]$ the identity type $\phi=\top$.

  Let $\Gamma \vdash \phi : \KK$.
  Suppose $\Gamma \vdash T$, $\Gamma, [\phi] \vdash A$ and $\Gamma, [\phi] \vdash T \cong A$ as witnessed by the terms $\alpha, \beta$ satisfying
  \begin{align*}
    \Gamma, [\phi], x : A &\vdash \alpha : T\\
    \Gamma, [\phi], x : T &\vdash \beta : A
  \end{align*}
  plus the equations stating that they are inverses.

  Then there exists a type $\Gamma \vdash \Tl(A,T,\phi)$ such that
  \begin{enumerate}
  \item $\Gamma, [\phi] \vdash \Tl(A,T,\phi) = A$
  \item $\Gamma \vdash T \cong \Tl(A,T,\phi)$ by an isomorphism $\alpha', \beta'$ extending $\alpha$ and $\beta$.
    This means that the following two judgements hold.
    \begin{align*}
      \Gamma, [\phi], x : A &\vdash \alpha = \alpha' : T\\
      \Gamma, [\phi], x : T &\vdash \beta = \beta' : A.
    \end{align*}
    The judgements are well-formed because in context $\Gamma, [\phi]$ the types $\Tl(A, T,\phi)$ and $A$ are equal by the first item of this lemma.
  \item Let $\rho : \Delta \to \Gamma$ be a context morphism. Consider its extension $\Delta, [\phi\rho] \to \Gamma, [\phi]$.
    Then $\Tl(A,T,\phi)\rho = \Tl(A\rho,T\rho,\phi\rho)$.
  \end{enumerate}
\end{lemma}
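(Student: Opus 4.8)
The plan is to use the standard trick for "strictifying" an isomorphism-up-to into an equality: modify the presheaf $A$ only at those stages where $\phi = \top$ holds, gluing in $T$ everywhere else, in such a way that the result agrees with $A$ (not merely up to iso) on $[\phi]$. Concretely, working stagewise in $\widehat{C}$ over the category of elements of $\Gamma$: for an object $c$ and $\gamma \in \Gamma(c)$, if $\phi_c(\gamma) = \top_c$ then set $\Tl(A,T,\phi)(c,\gamma) \defeq A(c,\gamma)$; otherwise set it to $T(c,\gamma)$. This is well-defined because, by hypothesis, equality with $\top_c$ is decidable at each $c$, so the case split is legitimate in the (constructive) meta-theory. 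The restriction maps are defined in the obvious way: along a morphism that stays inside $[\phi]$ use $A$'s restriction, along one that stays outside $[\phi]$ use $T$'s restriction, and for a morphism passing from outside $[\phi]$ to inside $[\phi]$ (note $[\phi]$ is closed under restriction since $\phi$ is natural, so the reverse cannot happen) compose $T$'s restriction with the component of the isomorphism $\beta$ (which is available precisely where $\phi = \top$). Functoriality of these restriction maps is where the inverse equations for $\alpha,\beta$ get used.

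Having defined $\Tl(A,T,\phi)$, I would verify the three clauses in turn. For clause (1), in context $\Gamma,[\phi]$ every stage satisfies $\phi = \top$, so by construction $\Tl(A,T,\phi)$ is literally $A$ stage-by-stage, with matching restrictions; hence $\Gamma,[\phi] \vdash \Tl(A,T,\phi) = A$. For clause (2), define $\alpha' \colon \Tl(A,T,\phi) \to T$ stagewise: at stages in $[\phi]$ it is $\alpha$ (recall $\Tl = A$ there), and at stages outside $[\phi]$ it is the identity on $T$; symmetrically $\beta'$ is $\beta$ on $[\phi]$ and the identity elsewhere. Naturality of $\alpha',\beta'$ again reduces to the inverse equations together with naturality of $\alpha,\beta$; that they are mutually inverse is immediate from the two cases; and restricting to $[\phi]$ recovers $\alpha,\beta$, giving the two required judgemental equalities. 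For clause (3), stability under substitution $\rho \colon \Delta \to \Gamma$: for $\delta \in \Delta(c)$ we have $(\phi\rho)_c(\delta) = \phi_c(\rho_c(\delta))$, so the case split defining $\Tl(A\rho,T\rho,\phi\rho)$ at $(c,\delta)$ coincides with the case split defining $\Tl(A,T,\phi)$ at $(c,\rho_c(\delta))$; since $A\rho, T\rho$ are just the reindexed presheaves, the two constructions agree on the nose, including restriction maps.

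The main obstacle I anticipate is not any single clause but getting the restriction maps of $\Tl(A,T,\phi)$ right so that the presheaf is actually functorial while simultaneously satisfying clause (1) strictly. The subtlety is that a morphism in the category of elements can move from a stage where $\phi \neq \top$ to one where $\phi = \top$, and the glued restriction map there must be the \emph{unique} choice making both $\Tl = A$ on $[\phi]$ and the isomorphism $\alpha',\beta'$ natural; checking that this forced choice (via $\beta$) is compatible with composition of two such morphisms, and with the purely-$A$ and purely-$T$ cases, is the one genuinely fiddly verification. Everything else is bookkeeping. I would also remark that this lemma is a variant of the strictification constructions appearing in the cubical sets literature (e.g.\ the treatment of $\LGlueEmp$ in Cohen et~al.~\cite{Cubical}), specialized so that it can be stated for an arbitrary $\KK$ with a decidable global element $\top$, which is exactly the generality needed both here for glueing and later in Section~\ref{sec:model-gctt}.
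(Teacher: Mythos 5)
Your proposal matches the paper's proof essentially step for step: the same stagewise case split on $\phi_c(\gamma)=\top_c$ (using decidability), the same $\beta$-conjugated restriction map for morphisms crossing into $[\phi]$, the same extension of $\alpha,\beta$ by the identity off $[\phi]$, and the same stagewise argument for substitution stability. One small correction: the paper's functoriality and naturality checks use the \emph{naturality} of $\beta$, not the inverse equations; the inverse equations only enter when verifying that $\alpha'$ and $\beta'$ are mutually inverse.
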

\begin{proof}
  We write $T'$ for $\Tl(A,T,\phi)$ and define it as follows.
  \begin{align*}
    T'(c,\gamma) =
    \begin{cases}
      A(c,(\gamma, \star)) & \text{ if } \phi_{c,\gamma} = \top_c\\
      T(c, \gamma) & \text{ otherwise}
    \end{cases}
  \end{align*}
  Here $\star$ is the unique proof of $[\phi]$.
  The restrictions are important.
  Given $f : (c, \Gamma(f)(\gamma)) \to \left(d, \gamma\right)$ define $T'(f)$ by cases
  \begin{align*}
    T'(f)(x) &=
        \begin{cases}
          A(f)(x) & \text{ if } \phi_d(\gamma) = \top_d(\star)\\
          \beta_{c,\Gamma(f)(\gamma), \star, T(f)(x)} & \text{ if } \phi_{c,\Gamma(f)(\gamma)} = \top_c\\
          T(f)(x) & \text{ otherwise }
        \end{cases}
  \end{align*}
  We need to check that this definition is functorial.
  The fact that $T'(id) = id$ is trivial.
  Given $f : (d, \Gamma(f)(\gamma)) \to \left(c, \gamma\right)$ and $g : \left(e, \Gamma(f\circ g)(\gamma)\right) \to (d, \Gamma(f)(\gamma))$ we have
  \begin{align*}
    T'(f \circ g)(x)&=
        \begin{cases}
          A(f \circ g)(x) & \text{ if } \phi_{c,\gamma} = \top_c\\
          \beta_{e,\Gamma(f \circ g)(\gamma), \star, T(f \circ g)(x)} & \text{ if } \phi_{e,\Gamma(f \circ g)(\gamma)} = \top_e\\
          T(f \circ g)(x) & \text{ otherwise }
        \end{cases}
  \end{align*}
  In the first and third cases this is easily seen to be the same as $T'(g)(T'(f)(x))$, since if
  $\phi_{e,\Gamma(f \circ g)(\gamma)} \neq \top_e$ then also $\phi_{d,\Gamma(f)(\gamma)} \neq \top_d$ by naturality of $\phi$ and the fact that $\top$ is a global element and the terminal object is a constant presheaf.

  So assume the remaining option is the case, that is, $\phi_{e,\Gamma(f \circ g)(\gamma)} = \top_e$ but $\phi_{c,\gamma} \neq \top_c$.

  We split into two further cases.
  \begin{itemize}
  \item Case $\phi_{d,\Gamma(f)(\gamma)} = \top_d$.
    Then $T'(f)(x) = \beta_{d,\Gamma(f)(\gamma), \star, T(f)(x)}$ and so
    \begin{align*}
      T'(g)(T'(f)(x)) = T'(g)\left(\beta_{d, (\Gamma(f)(\gamma), \star, T(f)(x))}\right)
    \end{align*}
    By naturality of $\beta$ the right-hand side is the same as
    \begin{align*}
      \beta_{e,\Gamma(f\circ g)(\gamma), \star, T(f \circ g)(x)}
    \end{align*}
    which is what is needed.
  \item Case $\phi_{d,\Gamma(f)(\gamma)} \neq \top_d$.
    In this case we have
    \begin{align*}
      T'(f)(x) = T(f)(x)
    \end{align*}
    and
    \begin{align*}
      T'(g)(T'(f)(x)) = \beta_{e,\Gamma(f \circ g)(\gamma), \star, T(g)(T(f)(x))}
    \end{align*}
    which is again, as needed by functoriality of $T$.
  \end{itemize}

  Now, directly from the definition we have the equality $\Gamma, [\phi] \vdash T' = A$.

  It is similarly easy to check the last required property, the naturality of the construction.
  \begin{align*}
    T(A, T, \phi)\rho = T(A\rho,T\rho,\phi\rho).
  \end{align*}

  Finally, we extend the isomorphisms $\alpha$ and $\beta$ to $\alpha'$ and $\beta'$.

  Define $\beta'$ satisfying $\Gamma, x : T \vdash \beta' : T'$ as
  \begin{align*}
    \beta'_{c,\gamma, x} &=
       \begin{cases}
         \beta_{c,\gamma, \star, x} &\text{ if } \phi_c(\gamma) = \top_c(\star)\\
         x & \text{ otherwise }
       \end{cases}
  \end{align*}
  And $\alpha'$ analogously. One needs to check that this is a natural transformation, i.e., a global element.
  Finally, $\beta'$ is the inverse to $\alpha'$ by construction.
\end{proof}

\paragraph{Definition of glueing.}
Given the following types and terms
\begin{align*}
  \Lhastype{\Gamma}{\phi}{\LFace}\\
  \Lwftype{\Gamma,\Lface{\phi}}{T}\\
  \Lwftype{\Gamma}{A}\\
  \Lhastype{\Gamma, \Lface{\phi}}{w}{T \to A}
\end{align*}
we define a new type $\wftype{\Gamma}{\LGlue{\phi}{T}{w}{A}}$ in two steps.

First we define the type\footnote{This type is already present in Kapulkin and Lumsdaine~\cite[Theorem\ 3.4.1]{kapulkin2012simplicial}.}
 $Glue'_\Gamma(\phi, T, A, w)$ in context $\Gamma$ as
\begin{align*}
  Glue'_\Gamma(\phi, T, A, w) = \sum_{a : A}\sum_{t : \Lfaceto{T}{\phi}}\prod_{p : [\phi]} w (t p) = a.
\end{align*}
For this type we have the following property (we write $G'$ for $Glue'(\cdots)$)
\begin{align*}
  \Lwftype{\Gamma, [\phi]}{T \cong G'}
\end{align*}
with the isomorphism consisting of the second projection from right to left and from left to right we use $w$ to construct the pair.

Finally, we define $\LGlue{\phi}{T}{w}{A}$ using Lemma~\ref{lem:fixing-a-type} applied to the type $Glue'$.
Let
\begin{align*}
  \beta : \LGlue{\phi}{T}{w}{A} \to Glue'(\phi, T, A, w)
\end{align*}
be the extension of pairing and
\begin{align*}
  \alpha : Glue'(\phi, T, A, w) \to \LGlue{\phi}{T}{w}{A}
\end{align*}
the extension of the projection as per Lemma~\ref{lem:fixing-a-type}.

Define $\Lunglue : \LGlue{\phi}{T}{w}{A} \to A$ be the composition of $\beta$ and the \emph{first} projection $G' \to A$.
Now if $\phi = \top$ then $\beta$ is just pairing and in this case we also have $\LGlue{\phi}{T}{w}{A} = T$.
So by definition of $G'$ we have $\Lunglue(t) = w t$, validating one of the equalities.

Given $\hastype{\Gamma, [\phi]}{t}{T}$ and $\hastype{\Gamma}{a}{A}$ satisfying $a = w t$ on $[\phi]$ define $\hastype{\Gamma}{\Lglue{\phi}{t}{a}}{\LGlue{\phi}{T}{w}{A}}$ to be pairing followed by $\alpha$.
If $\phi = \top$ we have, because $\alpha$ is just the projection in this case, that $\Lglue{1}{t}{a} = t$.

To appreciate the technicalities in this section, we remark that $Glue'$ is a
pullback. The difference between $Glue'$ and $\Glue$ is that the latter is strict when
pulling back along the identity morphism. Such coherence issues have discussed at length for
the simplicial model; see e.g. Kapulkin and Lumsdaine~\cite{kapulkin2012simplicial}.

\subsubsection{The fibrant universe assumption is satisfied}

This will be proved in greater generality in Section~\ref{sec:assumption-3-general}.

\subsubsection{The $\forall$ assumption is satisfied}

\begin{theorem}
$\psh{\cube}$ models an operation $\forall:\Face^\I\to\Face$ which is right-adjoint to the
constant map of posets $\Face\to\Face^\I$.
\end{theorem}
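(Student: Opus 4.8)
The plan is to obtain $\forall$ as a corestriction of the universal quantifier of the subobject classifier. Since $\Omega$ is internally complete, the diagonal $\Delta\colon\Omega\to\Omega^{\I}$ of internal posets has an internal right adjoint $\forall_{\Omega}$, namely $f\mapsto\bigwedge_{i:\I}f(i)$, and this is a morphism of presheaves, hence automatically natural. I would then use that $\Face$ is, by the lemma following the definition of $\LFace$, a full internal sub-poset of $\Omega$ closed under the finite lattice operations, with the constant map $\Face\to\Face^{\I}$ the restriction of $\Delta$. Everything therefore reduces to the single nontrivial point: that $\forall_{\Omega}$ carries $\Face^{\I}$ into $\Face\hookrightarrow\Omega$. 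Once that is established, the corestriction $\forall\colon\Face^{\I}\to\Face$ is again a presheaf morphism, and it is right adjoint to $\Face\to\Face^{\I}$ because the inclusion $\Face\hookrightarrow\Omega$ preserves and reflects the order; this adjointness is exactly Assumption~\ref{sec:assumptions:forall} for the present model.

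To prove $\forall_{\Omega}(\Face^{\I})\subseteq\Face$, I would fix a stage $I\in\cube$. Since $\I=y_{1}$ and products in $\cube$ are given by disjoint union of name sets, Yoneda supplies a natural isomorphism $\Face^{\I}(I)\iso\Face(I\uplus\{j\})$ with $j$ fresh, under which the diagonal becomes weakening $w\colon\Face(I)\to\Face(I\uplus\{j\})$. Externalising the internal adjunction $\forall_{\Omega}\dashv\Delta$ at the representable $y_{I}$ shows that $\forall_{\Omega}(\psi)$ is the largest sub-sieve $\chi$ of $y_{I}$ with $w\chi\le\psi$ in $\Omega(I\uplus\{j\})$; a priori such a $\chi$ need not be a face, so I would identify it explicitly. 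Writing $\psi$ in normal form as a finite join $\bigvee_{k}B_{k}$ of basic faces (partial maps $I\uplus\{j\}\rightharpoonup\{0,1\}$), set
\[
  \textstyle\forall j.\,\psi\ \defeq\ \bigvee\{\,B_{k}\mid j\notin\operatorname{dom}(B_{k})\,\}\ \in\ \Face(I),
\]
which is a $j$-free face below $\psi$, so $w(\forall j.\,\psi)\le\psi$. For maximality, given a sieve $\chi$ on $I$ with $w\chi\le\psi$, I would pick $g\colon J\to I$ in $\chi$ and a fresh name $m$, and extend $g$ to $g'\colon J\uplus\{m\}\to I\uplus\{j\}$ by sending $j$ to $m$; downward closure gives $g'\in w\chi$, hence $g'\in\psi$, so $\bigvee_{k}(B_{k}\,g')=\top$ in $\Face(J\uplus\{m\})$. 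By the finitary disjunction property of $\Face$ (inherited from $\I$) some $B_{k}\,g'=\top$; if that $B_{k}$ sent $j$ to a value $b$ we would have $B_{k}\,g'\le(m=b)\neq\top$, a contradiction, so $B_{k}$ is $j$-free, $B_{k}\,g'=B_{k}\,g$ is $m$-free and equals $\top$ already in $\Face(J)$ (weakening along $\cube$-morphisms reflects $\top$), and therefore $g\in\operatorname{sieve}(B_{k})\subseteq\operatorname{sieve}(\forall j.\,\psi)$. Hence $\chi\le\forall j.\,\psi$ and $\forall_{\Omega}(\psi)=\forall j.\,\psi\in\Face(I)$. This construction coincides with that of Cohen et al.~\cite{Cubical}.

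The remaining points are routine: every face is a finite join of basic faces because the face lattice is the free distributive lattice on the atoms $(i=0),(i=1)$ modulo $(i=0)\meet(i=1)=\ZeroF$; the displayed $\forall j.\,\psi$ is then independent of the chosen normal form, having been characterised as $\forall_{\Omega}(\psi)$; and $\forall_{\Omega}$ is natural as the canonical quantifier, so its corestriction is a presheaf morphism. The hard part is the maximality argument above: it is where one rules out the optimal $j$-free bound being a non-face, and it relies essentially on the finitary form of the disjunction property from Assumption~\ref{assumption:interval-type} — concretely, that a finite join in $\dM{I}$, and hence in $\Face$, equals $\top$ only if one of its summands does.
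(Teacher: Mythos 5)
Your proof is correct, and it takes a genuinely different route from the paper's.

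The paper constructs $\forall$ via Birkhoff duality for finite distributive lattices: it dualises $\dM{I}$ and the face lattice into (even) powers of $\two$ and restricted subsets thereof, observes that the projection dual to weakening has a right adjoint given by concatenation with $11$, and transports this back across the poset-enriched duality to obtain a natural $\forall$ first on $\I^\I\to\I$ and then on $\Face^\I\to\Face$. Your argument is instead topos-internal plus a direct sieve computation: you use that $\Omega$ is internally complete to get $\forall_\Omega:\Omega^\I\to\Omega$ for free, reduce the problem to showing $\forall_\Omega$ restricted to $\Face^\I$ lands in $\Face$, and then compute at a representable using the identification $\Omega^\I(I)\cong\Omega(I\uplus\{j\})$ (which the paper also establishes, in Lemma~\ref{lem:interval-exponential} for the more general $\cube\times\DD$ case). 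Your key step — extending $g:J\to I$ to $g':J\uplus\{m\}\to I\uplus\{j\}$ sending $j$ to a fresh $m$, then using the finitary disjunction property to force the witnessing disjunct to be $j$-free — is where the real content lives, and it is sound: $p\circ g'=g\circ\iota$ (so $g'\in w\chi$ follows from downward closure), $\Face(g')(j=b)=(m=b)\neq\top$ rules out $j$-dependent disjuncts, and injectivity of weakening $\Face(J)\to\Face(J\uplus\{m\})$ (split by setting $m\mapsto 0$) lets you conclude $B_kg=\top$ in $\Face(J)$. Your route is more elementary and exposes explicitly where the disjunction property is used — namely, to guarantee that the largest $j$-free lower bound is again a face rather than an arbitrary sieve. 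The paper's route is slicker and packages the construction as a piece of duality theory, which also yields naturality in one stroke, whereas you argue naturality separately via the canonicity of $\forall_\Omega$. Both proofs are complete; the main thing your version makes visible, and the paper leaves somewhat implicit inside the duality machinery, is the precise role of the finitary disjunction property of Assumption~\ref{assumption:interval-type}.
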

\begin{proof}
We will first give a concrete description of $\I$ and $\Face$. We know that
$\I(I)=\dM{I}$. We use Birkhoff duality~\cite{birkhoff1937rings} between finite distributive lattices and
finite posets. This duality is given by a functor $J=\Hom{\mathrm{fDL}}-\two$ from finite distributive
lattices to the opposite of the category of finite posets. This functor sends a distributive lattice to its
join-irreducible elements. It's inverse is the functor
$\Hom{\mathrm{poset}}-\two$ which sends a poset to its the distributive lattice of lower sets.
This restricts to a duality between free distributive lattices and powers
of $\two$. A free \emph{De Morgan} algebra on $I$ is a free distributive lattice
on $2I$($=I+I$). We obtain a duality with the category of \emph{even} powers of $\two$
and maps preserving the De Morgan involution~\cite{cornish1977coproducts}.
Moreover, this duality is poset enriched: If $\psi\leq \phi:\dM{I}\to \dM{J}$,
then the corresponding maps on even powers of $\two$, which are defined by pre-composition, are
in the same order relation.

The dual of the inclusion map is the projection $p:\two^{2(I+1)}\to\two^{2I}$. This has a right
adjoint: concatenation with 11: $p\alpha\leq\beta$ iff $\alpha\leq \beta\cdot11$. Concatenation with 11 is natural:
\[\begin{tikzcd}
\two^{2I} \arrow[d,"11"] \arrow{r}{f}& \two^{2J} \arrow[d,"11"]\\
\two^{2(I+1)}\arrow[u,bend left,twoheadrightarrow] \arrow{r}{(f,id)}& \two^{2(J+1)} \arrow[u,bend left,twoheadrightarrow]
\end{tikzcd}\] By duality we obtain a natural right adjoint to the poset-inclusion of DM-algebras.
Finally, we recall that in $\psh{\cube}$ we have $\I^\I(I)=\I(I+1)$ and hence we have an internal map $\forall:\I^\I\to\I$ which is right-adjoint to the constant map $\I\to\I^\I$.

The lattice $\Face$ is the quotient of $\I$ by the relation generated by $x\wedge (1-x)=0$ for all $x$; see~\cite[p7,p17]{Cubical}. Duality turns the quotients into
inclusions. So, we have the inclusion $\{01,10,11\}^I\subset \two^{2I}$ as the set of join
irreducible elements. Here $00$ presents $x\wedge-x$ which is now identified with $\bot$ and
hence no longer join-irreducible. This presentation allows us to define $\forall:\Face^\I\to\Face$. Since
$\Face^\I(I)=\Face(I+1)$, the right adjoint is again given by concatenation by
$11$. We just replace $\two^2$ by $\{01,10,11\}$ in the diagram above.
%Incidentally, concatenation with 00 gives a left adjoint, showing that we also have an existential quantification. However, we will not be using this fact.
\end{proof}

\subsubsection{Interpreting base types}

In Section~\ref{sec:internal} we did not provide any means of interpreting base types such as $\Nat$.
In this section we show that the concrete models we are interested in do support that, but we show this (mostly) externally.

A cubical set $A$ is \emph{discrete} if $A \cong \Delta(a)$ for some $a \in \Sets$, where $\Delta : \Sets \to \psh{\cube}$ is the constant presheaf functor.
Equivalently we can characterise discrete types internally, as in Proposition~\ref{prop:internal-discrete-is-external-discrete} below.
This characterisation is useful to define composition for discrete types internally.

\begin{lemma}
  \label{lem:set-of-paths-isomorphism}
  For any cubical set $A$ and any $I \in \cube$ and $i \not\in I$ the function $\beta_I^i : A^\I(I) \to A(I,i)$ defined as
  \begin{align*}
    \beta_I^i(f) = f_{\iota}(i),
  \end{align*}
  where $\iota : I \to I,i$ is the inclusion, is an isomorphism.
  Moreover the family $\beta$ is natural in $I$ and $i$ in the following sense.
  For any $J \in \cube$ and $j \not\in J$ and any $g : I \to J$ we have
  \[
    A(g + (i \mapsto j)) \circ \beta^i_{I} = \beta_J^j \circ A^\I(g).
  \]
\end{lemma}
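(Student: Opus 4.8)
The plan is to exhibit an explicit inverse to $\beta_I^i$ and check naturality by hand, exploiting the fact that $A^\I(I)$ is the set of natural transformations $y_I \times y_1 \to A$, i.e.\ a compatible family of elements of $A$ indexed by morphisms out of $I + 1$. Concretely, an element $f \in A^\I(I)$ assigns to each $g : I + 1 \to K$ (equivalently, a morphism $I \to K$ together with a choice $g(i) \in \I(K) = \dM{K}$) an element $f_g \in A(K)$, naturally in $K$. Given such an $f$, the map $\beta_I^i$ picks out $f_\iota(i) \in A(I + 1)$ where $\iota : I \to I + 1$ is the inclusion and we evaluate at the fresh name $i$ itself. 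For the inverse, I would send $x \in A(I + 1)$ to the transformation whose component at $g : I + 1 \to K$ is $A(g)(x)$; naturality of this assignment is immediate from functoriality of $A$.

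The main verification is that these two maps are mutually inverse. For one composite, starting from $f$, forming $A(g)(f_\iota(i))$, and comparing with $f_g$: here I would use naturality of $f$ applied to the morphism $g : I + 1 \to K$ regarded as a map between the objects $\iota$ (living over $I + 1$) and $g$ (living over $K$) — the point being that $g \circ \iota$ together with $g(i)$ recovers exactly the data of $g$, so $f_g = A(g)(f_\iota(i))$ by the naturality square. For the other composite, starting from $x \in A(I + 1)$, forming the transformation $g \mapsto A(g)(x)$, then evaluating the $\iota$-component at $i$ gives $A(\iota')(x)$ where $\iota'$ is... in fact the identity on $I + 1$ paired with the name $i$, which is the identity morphism, so we get $x$ back. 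This second direction is the one to be slightly careful about: one must check that ``the $\iota$-component evaluated at the name $i$'' corresponds to precomposition with the identity of $I + 1$, not with a nontrivial endomorphism.

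For the naturality statement, given $g : I \to J$ and $j \notin J$, I would unfold both sides: $A^\I(g)(f)$ is the transformation $h \mapsto f_{h \circ (g + \mathrm{id})}$ (restriction along $g$ in the base), so $\beta_J^j(A^\I(g)(f)) = f_{(g + \mathrm{id})}(j) = A(g + (i \mapsto j))(f_\iota(i))$, where the last equality is again an instance of naturality of $f$ applied to the morphism $g + (i \mapsto j) : I + 1 \to J + 1$ sending the old fresh name $i$ to the new fresh name $j$. This is exactly $A(g + (i \mapsto j))(\beta_I^i(f))$, as required.

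The only real obstacle is bookkeeping: keeping straight the distinction between the base-category morphism $g : I \to J$, its extension $g + (i \mapsto j)$ to a morphism $I + 1 \to J + 1$ in $\cube$ (recalling products in $\cube$ are disjoint unions of name sets), and the action of $A^\I$ on morphisms (which restricts in the first coordinate while leaving the ``exponent'' coordinate fixed). Once the identifications $A^\I(I) \cong \Hom{\widehat{\cube}}{y_I \times y_1}{A} \cong A(I+1)$ are set up carefully via Yoneda and the formula $y_I \times y_1 \cong y_{I+1}$, every step reduces to a single application of naturality of $f$ or functoriality of $A$, with no genuine computation. I would therefore present it as: define the candidate inverse, verify the two round-trips by naturality, and read off the naturality square of $\beta$ from the same principle.
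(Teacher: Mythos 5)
Your proposal is correct, and it takes essentially the same route the paper uses for the generalisation, Lemma~\ref{lem:interval-exponential}. The paper gives no stand-alone proof of Lemma~\ref{lem:set-of-paths-isomorphism}; the chain $A^\I(I)\cong\Hom{\psh{\cube}}{y_I\times\I}{A}\cong\Hom{\psh{\cube}}{y_{I,i}}{A}\cong A(I,i)$ via Yoneda, the exponential adjunction, $\I\cong y_{\{i\}}$, and the fact that products in $\cube$ are disjoint unions of name sets, together with the explicit formula for the inverse ($x\mapsto$ the family $(h,\phi)\mapsto A([h,i\mapsto\phi])(x)$), is exactly what appears in the proof of Lemma~\ref{lem:interval-exponential} (and the inverse also resurfaces, in this special case, in the proof of Lemma~\ref{lem:constant-isomorphism-path}). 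Your description of the inverse as ``$g\mapsto A(g)(x)$'' packages the pair $(h,\phi)$ into a single morphism $g$ into $I+1$, which is the same thing after the $y_I\times\I\cong y_{I+1}$ identification, and your two round-trip checks and the naturality square are the right single applications of naturality of $f$ and functoriality of $A$. The only caveat is some looseness in writing morphisms of $\cube$ as functions on name sets (a morphism $K\to I{+}1$ in $\cube$ is a function $I{+}1\to\dM{K}$, not $I{+}1\to K$); your parenthetical ``a morphism $I\to K$ together with a choice $g(i)\in\dM{K}$'' is the accurate version, and the argument goes through once that is fixed.
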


\begin{corollary}
  \label{cor:iso-implies-injections-isos}
  If the constant map $a \mapsto \lambda \_.a$ of type $A \to A^\I$ is an isomorphism, then $A$ is isomorphic to an object of the form $\Delta(a)$ for some $a \in \Sets$.
\end{corollary}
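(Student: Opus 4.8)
The plan is to use Lemma~\ref{lem:set-of-paths-isomorphism} together with the hypothesis to show that all the degeneracy maps of $A$ are isomorphisms, and then to recover $A$ as the constant presheaf on $A(\emptyset)$.

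Write $c\colon A\to A^\I$ for the constant map. The first and main step is to identify the composite $A(I)\xrightarrow{c_I}A^\I(I)\xrightarrow{\beta^i_I}A(I,i)$ for $i\notin I$. Since $\I=y_1$ and products in $\cube$ are disjoint unions, we have $y_I\times\I\cong y_{I,i}$, so an element of $A^\I(I)$ is a map $y_{I,i}\to A$, i.e.\ (by Yoneda) an element of $A(I,i)$; this identification is exactly $\beta^i_I$. Under it, $c_I$ --- being the transpose of the projection $A\times\I\to A$ --- sends $a\in A(I)$ to $A(\pi)(a)$, where $\pi\colon (I,i)\to I$ is the projection morphism of $\cube$. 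Thus $\beta^i_I\circ c_I$ is the weakening (degeneracy) map $A(\pi)\colon A(I)\to A(I,i)$. By Lemma~\ref{lem:set-of-paths-isomorphism} the map $\beta^i_I$ is an isomorphism, and $c_I$ is an isomorphism by hypothesis, so every such weakening map of $A$ is an isomorphism.

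For the second step, note that $\emptyset$ is the terminal object of $\cube$ (it is the empty product of name sets), so for each $I$ there is a unique morphism $!_I\colon I\to\emptyset$, and $!_I$ factors as a finite composite of the one-variable projections $\pi$ above. Applying $A$ and using the first step, $A(!_I)\colon A(\emptyset)\to A(I)$ is a finite composite of isomorphisms, hence an isomorphism. Finally put $a\defeq A(\emptyset)$. For any $g\colon I\to J$ in $\cube$ we have $!_J\circ g=!_I$ by terminality, hence $A(g)\circ A(!_J)=A(!_I)$; this is precisely the naturality condition making the family $(A(!_I))_I$ a natural transformation $\Delta(a)\to A$. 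Since each component is an isomorphism, this is an isomorphism $\Delta(a)\cong A$, which is what we wanted.

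The point that needs care is the identification in the first step: one must check that the isomorphism $\beta^i_I\colon A^\I(I)\to A(I,i)$ of Lemma~\ref{lem:set-of-paths-isomorphism} agrees with the ``evaluate at the generic point'' description used above, so that the constant element $\lambda \_.a$ really is sent to the degenerate cube $A(\pi)(a)$ rather than to some other degeneracy. This is a routine unwinding of the definitions of the exponential $A^\I$ and of $\beta$; once it is in place the rest of the argument is purely formal.
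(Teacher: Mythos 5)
Your proof is correct and follows the same approach as the paper's: use Lemma~\ref{lem:set-of-paths-isomorphism} together with the hypothesis to deduce that the one-variable degeneracy maps $A(I)\to A(I,i)$ are isomorphisms, conclude that $A(\emptyset)\to A(I)$ is an isomorphism for each $I$, and then verify that these assemble into a natural isomorphism $\Delta(A(\emptyset))\cong A$ via terminality of $\emptyset$ in $\cube$. You are slightly more explicit than the paper in two places that it leaves implicit: the identification of the composite $\beta^i_I\circ c_I$ with the degeneracy map, and the factorisation of the terminal map through one-variable projections.
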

\begin{proof}  
  Using Lemma~\ref{lem:set-of-paths-isomorphism} we have that for each $I$ and $i\not\in I$, $A(\iota) : A(I) \to A(I,i)$ is an isomorphism, where, again, $\iota$ is the inclusion.
  From this we have that for all $I$, the inclusion $A(\iota_I) : A(\emptyset) \to A(I)$ is an isomorphism.

  Define $a = A(\emptyset)$ and $\alpha : \Delta(a) \to A$ as
  \begin{align*}
    \alpha_I = A(\iota_I).
  \end{align*}

  We then have for any $f : I \to J$ the following
  \begin{align*}
    A(f) \circ \alpha_I = A(f \circ \iota_I) = A(\iota_J).
  \end{align*}
  The latter because $f \circ \iota_I$ and $\iota_J$ are both maps from the empty set, hence they are equal.

  By the previous lemma each $\alpha_I$ is an isomorphism and by the preceding calculation $\alpha$ is a natural transformation.
  Hence $\alpha$ is a natural isomorphism.
\end{proof}

\begin{lemma}
  \label{lem:constant-isomorphism-path}
  If $A$ is isomorphic to $\Delta(a)$ for some $a \in \Sets$ then the obvious morphism $A \to A^\I$ is an isomorphism.
\end{lemma}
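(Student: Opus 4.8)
The plan is to reduce to the case $A = \Delta(a)$ and there identify the constant map with the inverse of the isomorphism $\beta$ supplied by Lemma~\ref{lem:set-of-paths-isomorphism}. First I would observe that the obvious morphism $A \to A^\I$ — write $c_A$ for it — is the exponential transpose of the projection $A \times \I \to A$, so at stage $I$ it sends $x \in A(I)$ to the constant path at $x$, and it is natural in $A$: for any morphism $f \colon A \to B$ we have $f^\I \circ c_A = c_B \circ f$. Since $(-)^\I$ is a functor and "being an isomorphism" is stable under pre- and post-composition with isomorphisms, $c_A$ is an isomorphism if and only if $c_{\Delta(a)}$ is. Hence it suffices to treat $A = \Delta(a)$.

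Next, fix $I \in \cube$ and a fresh name $i \notin I$, and consider the isomorphism $\beta^i_I \colon \Delta(a)^\I(I) \to \Delta(a)(I,i)$ of Lemma~\ref{lem:set-of-paths-isomorphism}, noting that $\Delta(a)(I,i) = a = \Delta(a)(I)$. I would check that the composite
\[
  a = \Delta(a)(I) \xrightarrow{\ c_{\Delta(a),I}\ } \Delta(a)^\I(I) \xrightarrow{\ \beta^i_I\ } \Delta(a)(I,i) = a
\]
is the identity on $a$: by definition $\beta^i_I$ evaluates a path at the inclusion $\iota$ and the name $i$, and the constant path at $x$ evaluates, at any object, to a restriction of $x$; but in the discrete presheaf $\Delta(a)$ every restriction map is the identity, so the composite returns $x$. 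Therefore $c_{\Delta(a),I} = (\beta^i_I)^{-1}$ is a bijection for every $I$, and being a natural transformation it is a natural isomorphism; transporting back along the given isomorphism $A \cong \Delta(a)$ then finishes the proof.

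The only delicate point is the bookkeeping in that triangle identity: one must be careful about exactly which map "the obvious morphism $A \to A^\I$" denotes and how $\beta$ is defined, and one should invoke the naturality of $\beta$ recorded in Lemma~\ref{lem:set-of-paths-isomorphism} so that the verification reduces to the single evaluation above rather than a stagewise argument. Everything else is formal.
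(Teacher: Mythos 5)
Your argument is correct and takes essentially the same route as the paper: both proofs reduce the claim to the isomorphism $\beta^i_I$ of Lemma~\ref{lem:set-of-paths-isomorphism}. The only cosmetic difference is that the paper works with $A$ directly and factors the canonical map as $\alpha^i_I \circ A(\iota)$ with both factors isos (using that $A(\iota)$ is an iso because $A \cong \Delta(a)$), whereas you conjugate by the iso $A \cong \Delta(a)$ first and then observe that on the constant presheaf $\Delta(a)(\iota)$ is the identity, so $\beta^i_I \circ c_{\Delta(a),I} = \mathrm{id}$. The two computations are the same up to that conjugation, so there is nothing substantively new or missing.
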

\begin{proof}
  The inverse to the isomorphism $\beta$ in Lemma~\ref{lem:set-of-paths-isomorphism} is the morphism $\alpha_I^i$
  \begin{align*}
    \alpha_I^i(a)_f(j) = A([f,(i\mapsto j)])(a).
  \end{align*}
  By assumption $A(\iota)$ for any inclusion $\iota : I \to I,i$ is an isomorphism.
  It is easy to compute that the canonical morphism $A \to A^\I$ arises as the composition of $A(\iota)$ and $\alpha_I^i$.
\end{proof}

\begin{proposition}
  \label{prop:internal-discrete-is-external-discrete}
  Let $A$ be a cubical set.
  The formula
  \begin{align*}
    i : \I, j : \I, f : (\I \to A) \mid \cdot \vdash f(i) = f(j)
  \end{align*}
  holds in the internal language if and only if $A$ is isomorphic to $\Delta(a)$ for some $a \in \Sets$.
\end{proposition}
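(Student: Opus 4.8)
The plan is to reduce the biconditional to a single intermediate condition: that the canonical \emph{constant map} $c : A \to A^{\I}$, which sends $a$ to $\lambda\_.\,a$, is an isomorphism of cubical sets. This map is exactly the hypothesis of Corollary~\ref{cor:iso-implies-injections-isos} and the conclusion of Lemma~\ref{lem:constant-isomorphism-path}, so once the equivalence ``the displayed formula holds $\iff$ $c$ is an isomorphism'' is established, the proposition follows by chaining those two results.

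For the ``only if'' direction, I would assume the displayed formula, i.e.\ internally $\forall (i : \I)(j : \I)(f : \I \to A).\ f(i) = f(j)$, and exhibit a two-sided inverse to $c$. Define $d : A^{\I} \to A$ internally by $d(f) \defeq f(0)$, using that $0 : \I$ is available by Assumption~\ref{assumption:interval-type}. Then $d \circ c = \mathrm{id}_A$ holds by $\beta$-reduction, and $c \circ d = \mathrm{id}_{A^{\I}}$ because for every $f : \I \to A$ and $i : \I$ we have $(c(d(f)))(i) = f(0) = f(i)$, the last step being the hypothesis instantiated with $j \defeq 0$; function extensionality, which holds in $\cL$, then gives $c(d(f)) = f$. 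Hence $c$ is an isomorphism, and Corollary~\ref{cor:iso-implies-injections-isos} produces some $a \in \Sets$ with $A \cong \Delta(a)$. For the ``if'' direction, I would assume $A \cong \Delta(a)$; Lemma~\ref{lem:constant-isomorphism-path} states precisely that $c : A \to A^{\I}$ is then an isomorphism, so every $f : \I \to A$ satisfies $f = c(c^{-1}(f)) = \lambda\_.\,(c^{-1}(f))$, whence $f(i) = c^{-1}(f) = f(j)$ for all $i,j : \I$, which is the displayed formula.

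The only step requiring genuine care is the bookkeeping between the internal and external readings: one must note that validity of the displayed sequent is by definition the internal truth of $\forall i\,j\,f.\, f(i) = f(j)$, and that an inverse for $c$ constructed inside $\cL$ externalises (by soundness, via equality reflection) to an actual natural transformation witnessing that $c$ is an isomorphism of presheaves, so that the externally-stated Corollary~\ref{cor:iso-implies-injections-isos} and Lemma~\ref{lem:constant-isomorphism-path} may be invoked. Everything else is a routine $\beta$-computation, and in particular no composition structure or fibrancy of $A$ plays any role; the statement is purely about the underlying presheaf.
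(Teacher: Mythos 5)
Your proposal is correct and follows essentially the same route as the paper: in both directions you pass through the intermediate statement that the constant map $A \to A^{\I}$ is an isomorphism, inverting it by evaluation at $0$ for the forward direction and invoking Corollary~\ref{cor:iso-implies-injections-isos} and Lemma~\ref{lem:constant-isomorphism-path} exactly as the paper does. The paper's converse argument only uses that $\operatorname{const}$ is internally surjective rather than the full inverse, but this is a negligible difference.
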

\begin{proof}
  Suppose the formula holds.
  Then it is easy to see that the constant map from $A$ to $A^\I$ is an isomorphism (the inverse is given, for instance, by evaluation at $0$).
  Corollary~\ref{cor:iso-implies-injections-isos} implies the result.

  Conversely assume $A \cong \Delta(a)$ for some $a \in \Sets$.
  Then by Lemma~\ref{lem:constant-isomorphism-path} the canonical map $\operatorname{const} : A \to A^\I$ is an isomorphism.
  Hence it is internally surjective.
  Thus for any $f : \I \to A$ there is an $a$ in $A$, such that $\operatorname{const} a = f$.
  From this we immediately have $f(i) = f(j)$ for any $i$ and $j$ in $\I$.
\end{proof}

\begin{lemma}
  \label{lem:discrete-types-have-comp}
  Every discrete type $\Lwftype{}{A}$ is fibrant, i.e., it has a composition operator $\bfc_A : \LCompTy{\cdot}{A}$.
\end{lemma}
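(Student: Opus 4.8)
The plan is to reduce everything to the internal characterisation of discreteness established in Proposition~\ref{prop:internal-discrete-is-external-discrete}. Since $A$ is discrete, the internal language of the model proves that every path $f : \I \to A$ is constant, i.e.\ $f(i) = f(j)$ for all $i,j : \I$. Given this, a composition operator for $A$ can be taken to be essentially the identity, the point being that transporting along a constant path does nothing.

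Concretely, the first step is to observe that since $A$ is a closed type it does not depend on its context, so $A(\gamma(i)) = A$ for the (unique) $\gamma : \I \to \cdot$ and every $i : \I$. Unfolding $\LCompTy{\cdot}{A}$ accordingly, it suffices, given $\phi : \LFace$, a partial path $u : \Pi(i:\I).\Lface{\phi} \to A$, and $a_0 : A[\phi \mapsto u(0)]$, to produce a term of $A[\phi \mapsto u(1)]$. I would then simply set $\bfc_A \defeq \lambda \gamma,\phi,u,a_0.\, \pi_1 a_0$.

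The remaining step is to check that $\pi_1 a_0$ really lies in $A[\phi \mapsto u(1)]$, i.e.\ that it agrees with $u(1)$ on extent $\phi$. Working under an assumption $p : \Lface{\phi}$: the type of $a_0$ gives $\pi_1 a_0 = u(0)$, while $\lambda i.\, u(i)$ is now a genuine path in $A$, so discreteness yields $u(0) = u(1)$; combining these equalities gives $\pi_1 a_0 = u(1)$, as needed. Finally, the judgemental equalities that a composition operator must satisfy hold immediately for this definition, as already remarked in this section. I do not anticipate any real obstacle: the entire substance of the lemma is carried by Proposition~\ref{prop:internal-discrete-is-external-discrete}, and the only point needing a moment's thought is that the partial element $u$, once restricted along a proof of $\Lface{\phi}$, becomes an honest path $\I \to A$ to which the discreteness principle applies.
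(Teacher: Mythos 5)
Your proof is correct and matches the paper's: both reduce to the internal characterisation that paths in a discrete type are constant (so $u(0)=u(1)$ as partial elements of extent $\phi$), and both take the composition to be essentially the identity on $a_0$. The only cosmetic difference is that you explicitly extract $\pi_1 a_0$ and re-verify the side condition, whereas the paper notes $A[\phi\mapsto u(0)] = A[\phi\mapsto u(1)]$ and returns $a_0$ directly.
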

\begin{proof}
  Since $A$ is discrete, we have that $u(0) = u(1)$ for any $u: \Pi(i:\I).[\phi] \to
  A$. Therefore $A[\phi \mapsto u(0)] = A[\phi \mapsto u(1)]$, so we can choose the
  constant function $\lambda \gamma, \phi, u, a . a$ to be $\bfc_A$, since this will be of
  type $\Phi(\cdot, A)$.
\end{proof}

If we have a composition operator $\bfc_A : \LCompTy{\cdot}{A}$ then we can always construct a weakened version $\bfc_A' : \LCompTy{\Gamma}{A}$ for any $\Gamma$, since $A$ does not depend on $\Gamma$.

Therefore we can interpret the natural number type:
\[
  \interpret{\wftype{\Gamma}{\Nat}} \defeq (\LNat, \bfc_{\LNat}),
\]
where $\bfc_{\LNat}$ is the composition that we get from Lemma~\ref{lem:discrete-types-have-comp}.

\subsection{More General Models of $\cL$}\label{sec:model-of-L}

The type theory $\gctt$ is an extension of $\ctt$, and we intend to model it in the category of presheaves over $\cube \times \omega$.
We first need to establish that we can model $\ctt$ in this category.
This section shows how to do this by demonstrating that we can lift all constructions of $\ctt$ from the category of cubical sets to $\cube \times \DD$, for \emph{any} small category $\DD$ with an initial object.

We first prove some general lemmas.
\begin{lemma}
  \label{lem:inclusion-is-open} Let $\CC,\DD$ be small categories and let $\pi : \CC \times \DD \to \CC$ be the projection functor.
  Then the geometric morphism $\pi^* \dashv \pi_*$ is open.
  If $\DD$ is inhabited then it is also surjective.
\end{lemma}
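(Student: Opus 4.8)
The plan is to establish openness by showing the geometric morphism is \emph{locally connected} and invoking the standard implication that locally connected geometric morphisms are open~\cite[C3.3]{elephant}, and to establish surjectivity by exhibiting a section of the geometric morphism.

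Write $p\colon \psh{\CC\times\DD}\to\psh{\CC}$ for the morphism in question, so its inverse image $p^{*}=\pi^{*}$ is precomposition with $\pi\op$. I would first use the standard equivalence $\psh{\CC\times\DD}\simeq[\DD\op,\psh{\CC}]$, under which $\pi^{*}$ becomes the constant-diagram functor $\psh{\CC}\to[\DD\op,\psh{\CC}]$, its right adjoint $\pi_{*}$ becomes $\lim_{\DD\op}$, and the left Kan extension of $\pi^{*}$ along $\pi\op$ (which exists because $\psh{\CC}$ is cocomplete) becomes $\operatorname{colim}_{\DD\op}$. Hence $p$ is essential with $p_{!}=\operatorname{colim}_{\DD\op}$, and to conclude it is locally connected it remains to check the Frobenius isomorphism $p_{!}(F\times\pi^{*}X)\cong(p_{!}F)\times X$, natural in $F\in[\DD\op,\psh{\CC}]$ and $X\in\psh{\CC}$. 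This holds because $\pi^{*}X$ is the constant diagram at $X$, so $F\times\pi^{*}X$ is the pointwise application of $(-)\times X\colon\psh{\CC}\to\psh{\CC}$ to $F$; as $\psh{\CC}$ is cartesian closed, $(-)\times X$ has a right adjoint and so preserves the colimit $\operatorname{colim}_{\DD\op}$, which is exactly the required isomorphism. (Alternatively: $p$ is the pullback of the unique geometric morphism $\psh{\DD}\to\Sets$ along $\psh{\CC}\to\Sets$; the former is locally connected since $\psh{\DD}$ is a presheaf topos, and local connectedness is stable under pullback.) Openness of $p$ then follows.

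For surjectivity, assume $\DD$ has an object $d$, and let $s_{d}\colon\CC\to\CC\times\DD$ be $c\mapsto(c,d)$, so that $\pi\circ s_{d}=\mathrm{id}_{\CC}$. Then the induced geometric morphism $\widehat{s_{d}}\colon\psh{\CC}\to\psh{\CC\times\DD}$ satisfies $p\circ\widehat{s_{d}}\cong\mathrm{id}_{\psh{\CC}}$, since on inverse images $(s_{d})^{*}\circ\pi^{*}=(\pi\circ s_{d})^{*}=(\mathrm{id}_{\CC})^{*}=\mathrm{id}_{\psh{\CC}}$. Thus $(s_{d})^{*}$ is a retraction of $\pi^{*}$, hence $\pi^{*}$ is faithful, and a geometric morphism whose inverse image is faithful is a surjection.

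The only genuine ingredient is the cited fact that locally connected geometric morphisms are open; the rest is routine bookkeeping, and I do not anticipate a real obstacle. The one point demanding care is variance: one must fix the convention that $\pi^{*}$ is precomposition with $\pi\op$ (so that $p$ points from $\psh{\CC\times\DD}$ to $\psh{\CC}$ and the relevant extra left adjoint is $\operatorname{colim}$, not $\lim$) and keep straight which of $p_{!}\dashv p^{*}\dashv p_{*}$ is the direct image.
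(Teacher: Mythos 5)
Your proof is correct, and it takes a genuinely different route from the paper for the openness claim. The paper invokes the characterisation (Elephant C3.1.7) that the geometric morphism induced by a functor between small categories is open iff its inverse image is sub-logical, and then directly verifies the combinatorial lifting condition of C3.1.2: given a morphism $b : \pi(I,n) \to J$ in $\CC$, one takes $U' = (J,n)$, $a = (b,\mathrm{id}_n)$, and $r = i = \mathrm{id}_J$. You instead prove the strictly stronger property that $p$ is \emph{locally connected}, via the Frobenius isomorphism $p_!(F\times\pi^*X) \cong p_!F \times X$ with $p_! = \operatorname{colim}_{\DD\op}$, and then cite that locally connected implies open. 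Both are valid. Your route is more conceptual and buys the extra fact of local connectedness — equivalently, cartesian closedness of $\pi^*$ — which the paper proves separately in Lemma~\ref{lem:inclusion-is-cartesian-closed}, but there only under the additional hypothesis that $\DD$ has an initial object; your Frobenius argument needs no such hypothesis (though it does not give the fullness of $\pi^*$ that Lemma~\ref{lem:inclusion-is-cartesian-closed} also asserts). Your alternative via pullback stability of openness along $\psh{\CC}\to\Sets$, using $\psh{\CC\times\DD}\simeq\psh{\CC}\times_{\Sets}\psh{\DD}$, is likewise sound. For surjectivity you construct an explicit section $s_d$ and deduce that $(s_d)^* \circ \pi^* = \mathrm{id}$ makes $\pi^*$ faithful; the paper instead cites directly that functors surjective on objects induce surjections (A4.2.7b). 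These are essentially the same observation, with yours making the mechanism explicit. The paper's proof of openness is more elementary and self-contained given the cited criterion; yours is more structural and informative.
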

\begin{proof}
  By Theorem~C.$3.1.7$ of Johnstone~\cite{elephant} it suffices to show that $\pi^*$ is sub-logical.
  To show this we use Lemma~C.$3.1.2$ of \emph{op. cit} (we use notation introduced in that lemma).

  Let $b : \pi(I,n) \to J$ be a morphism in $\CC$.
  Let $U' = (J,n)$, $a = (b, id_n) : (I,n) \to (J,n)$, $r = id_{J} : \pi U' \to J$ and $i = id_J : J \to \pi U'$.
  Then we have $ r \circ i = id_J$ and $i \circ b = \pi a$ as required by Lemma~C.$3.1.2$.

  If $\DD$ is inhabited the projection $\pi$ is surjective on objects, so the corresponding geometric morphism is surjective; see Johnstone~\cite[A4.2.7b]{elephant}
\end{proof}

The above lemma may be read as stating that $\widehat{\CC \times \DD}$ is a conservative extension of $\hat{\CC}$, provided that $\DD$ is inhabited.

\begin{lemma}
  \label{lem:inclusion-is-cartesian-closed}
  If $\DD$ has an initial object 0, then $\pi^*$ is full, faithful, and cartesian closed.
\end{lemma}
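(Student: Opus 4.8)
The plan is to exploit the fact that, since $\DD$ has an initial object $0$, the projection $\pi : \CC \times \DD \to \CC$ has a section $s : \CC \to \CC \times \DD$ given on objects by $s(I) = (I, 0)$ and on morphisms by $s(b) = (b, \mathrm{id}_0)$; note that the unique arrows out of $0$ make this functorial, and $\pi \circ s = \mathrm{id}_\CC$. The induced geometric morphism $s^* \dashv s_*$ then gives, by general nonsense, that $s^* \circ \pi^* \cong \mathrm{id}$ on presheaves, which already shows $\pi^*$ is faithful; for fullness and cartesian closedness I would identify $\pi^*$ more concretely.

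First I would compute $\pi^*$ explicitly. For a presheaf $X$ on $\CC$, the inverse image $\pi^* X$ is precomposition with $\pi^{\mathrm{op}}$, so $(\pi^* X)(I,n) = X(I)$ with the action of a morphism $(b,g):(I,m)\to(J,n)$ being $X(b)$, independent of $g$. Next I would establish fullness and faithfulness by exhibiting, for presheaves $X,Y$ on $\CC$, a natural bijection between $\CC$-presheaf maps $X \to Y$ and $\CC\times\DD$-presheaf maps $\pi^* X \to \pi^* Y$: given $\theta : \pi^* X \to \pi^* Y$, its components $\theta_{(I,n)} : X(I) \to Y(I)$ are forced by naturality (applied to the morphism $(\mathrm{id}_I, !): (I,n) \to (I,0)$, where $!$ is the unique map $n \to 0$ — wait, the initial object gives a map \emph{from} $0$, so rather I use $(\mathrm{id}_I, !_n) : (I,0) \to (I,n)$) to be independent of $n$ and to assemble into a $\CC$-presheaf map; this gives the inverse to $\pi^*$ on hom-sets. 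I would take care here that naturality in the $\DD$-variable pins down exactly one underlying map, which is the crux of fullness.

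For cartesian closedness, the key point is that $\pi^*$ preserves finite products automatically (it is a left adjoint, and inverse image functors preserve finite limits), so it remains to show $\pi^*(Y^X) \cong (\pi^* Y)^{(\pi^* X)}$. I would compute both sides pointwise using the Yoneda description of exponentials in a presheaf category: $(\pi^* Y)^{(\pi^* X)}(I,n) \cong \mathrm{Nat}(\mathsf{y}_{(I,n)} \times \pi^* X, \pi^* Y)$, and then observe that because $\pi^* X$ and $\pi^* Y$ are constant in the $\DD$-direction, a natural transformation out of $\mathsf{y}_{(I,n)} \times \pi^* X$ factors through $\pi^*$ of $\mathsf{y}_I \times X$ — here the section $s$ and the initial object of $\DD$ are used again to collapse the $\DD$-coordinate — yielding $\mathrm{Nat}(\mathsf{y}_I \times X, Y) = Y^X(I) = \pi^*(Y^X)(I,n)$. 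I would then check this isomorphism is natural in $(I,n)$ and compatible with evaluation, so that it is an isomorphism of exponentials.

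The main obstacle I anticipate is the bookkeeping in the exponential computation: one must verify that the hom-set manipulation using the projection $\pi$ and the section $s$ really does give a two-sided inverse, i.e. that every natural transformation $\mathsf{y}_{(I,n)}\times \pi^* X \to \pi^* Y$ arises uniquely from one over $\CC$. This is where the initial object of $\DD$ is essential — it guarantees that the $\DD$-component of the representable $\mathsf{y}_{(I,n)}$ has a chosen point at $0$ which forces all higher data — and getting the naturality squares to commute is the one genuinely fiddly step; the rest is formal consequences of $\pi^*$ being an inverse image functor with a section.
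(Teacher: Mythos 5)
Your proposal is correct, and it arrives at the right facts, but it takes a more computational route than the paper. The paper's proof is shorter because it observes a stronger structural fact than you use: your functor $s$ (the paper's $\iota : I \mapsto (I,0)$) is not merely a \emph{section} of $\pi$ but a \emph{left adjoint} of $\pi$ --- initiality of $0$ gives $\Hom{\CC\times\DD}{(I,0)}{(J,m)} \cong \Hom{\CC}{I}{J}$ naturally. From $\iota \dashv \pi$ and $\pi\iota = \mathrm{id}$ one gets, at the level of presheaves, $\iota^* \dashv \pi^*$ with $\iota^*\pi^* = \mathrm{id}$ and identity counit. Full faithfulness of $\pi^*$ is then immediate from the counit being an isomorphism (Mac Lane, Thm.\ IV.3.1), and cartesian closedness follows from the general fact that a right adjoint whose left adjoint preserves finite limits is cartesian closed (Johnstone, Cor.\ A.1.5.9) --- here $\iota^*$, being a precomposition functor, preserves all limits.

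Your explicit unwinding via Yoneda is sound: naturality with respect to $(\mathrm{id}_I, !_n) : (I,0) \to (I,n)$ (you correct the direction appropriately mid-sentence) does force the components $\theta_{(I,n)}$ to be independent of $n$, and the bijection $\mathrm{Nat}(\mathsf{y}_{(I,n)} \times \pi^*X, \pi^*Y) \cong \mathrm{Nat}(\mathsf{y}_I \times X, Y)$ can indeed be established by the same naturality-along-$!_n$ argument. What your approach does not make visible is \emph{why} this works so cleanly; the adjunction $\iota \dashv \pi$ packages exactly the ``$\DD$-coordinate collapses to $0$'' phenomenon you repeatedly invoke (in particular $\iota^*\mathsf{y}_{(I,n)} \cong \mathsf{y}_I$ is just the adjunction isomorphism, so the $\mathrm{Nat}$-set computation becomes $\mathrm{Nat}(\iota^*(\mathsf{y}_{(I,n)} \times \pi^*X), Y)$ by adjunction, then $\iota^*$ preserving products finishes it). The elementary computation buys self-containedness; the adjunction-theoretic argument buys brevity and a sharper explanation of the role of the initial object. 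A small caveat: $s^*\pi^* = \mathrm{id}$ follows simply from $\pi s = \mathrm{id}$ and functoriality of precomposition, not from the adjunction $s^* \dashv s_*$ as your phrasing suggests.
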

\begin{proof}
  The functor $\pi$ has a \emph{left} adjoint, which is the functor
  \begin{align*}
    \iota &: \CC \to \CC \times \DD\\
    \iota(I) &= (I,0)
  \end{align*}
  Trivially we have $\pi \circ \iota = id_\CC$.
  Thus we have that $\iota^*$ is \emph{left} adjoint to $\pi^*$ and because $\pi \circ \iota = id_\CC$ we also have $\iota^* \circ \pi^* = id$ and moreover the counit of the adjunction is the identity.
  Hence the functor $\pi^*$ is full and faithful \cite[Theorem IV.$3.1$]{MacLane:CWM} and by Johnstone~\cite[Corollary A.$1.5.9$]{elephant}, since $\iota^*$ preserves all limits, we have that $\pi^*$ cartesian closed.
\end{proof}

Let $\Omega^\DD$ be the subobject classifier of $\psh{\CC \times \DD}$ and $\Omega$ be
the subobject classifier of $\widehat{\CC}$.

\begin{lemma}
  \label{lem:inclusion-of-subobject-classifiers}
  There is a \emph{monomorphism} $\upsilon : \pi^*\left(\Omega\right) \to \Omega^\DD$ which fits into the pullback
  \begin{displaymath}
    \begin{diagram}
      \pullbacktip \pi^*(1) \ar{r}{\cong} \ar{d}[swap]{\pi^*(\LTrue)} & 1 \ar{d}{\LTrue}\\
      \pi^*\left(\Omega\right) \ar{r}{\upsilon} & \Omega^\DD
    \end{diagram}
  \end{displaymath}
\end{lemma}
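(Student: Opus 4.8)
The plan is to obtain $\upsilon$ as the classifying map of a canonical subobject and then check, separately, that it is monic. First I would observe that since $\pi^* \dashv \pi_*$ is a geometric morphism (indeed $\pi^*$ is restriction along the functor $\pi$, computed pointwise), the functor $\pi^*$ preserves finite limits; hence it preserves the terminal object and monomorphisms. In particular $\pi^*(1) \cong 1$ in $\widehat{\CC \times \DD}$, and $\pi^*(\LTrue) : \pi^*(1) \to \pi^*(\Omega)$ is a monomorphism, i.e.\ a subobject of $\pi^*(\Omega)$. I then take $\upsilon : \pi^*(\Omega) \to \Omega^\DD$ to be the unique morphism classifying this subobject. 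By the very definition of the classifying map, the square in the statement — with top edge the canonical isomorphism $\pi^*(1) \cong 1$, left edge $\pi^*(\LTrue)$, and right edge $\LTrue : 1 \to \Omega^\DD$ — is then a pullback, so the pullback property requires no extra work.

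It remains only to see that $\upsilon$ is monic, which I would do by unwinding the definitions concretely. Recall that $\Omega(I)$ is the set of sieves on $I$ in $\CC$, that $\Omega^\DD(I,d)$ is the set of sieves on $(I,d)$ in $\CC \times \DD$, and that $\pi^*(\Omega)(I,d) = \Omega(I)$. Tracing through the construction of the classifying map, the component $\upsilon_{(I,d)}$ sends a sieve $S$ on $I$ to the sieve $\{(f,g) : (J,e) \to (I,d) \mid f \in S\}$ on $(I,d)$, i.e.\ to the pullback sieve $\pi^{-1}(S)$. This assignment is injective, since $S$ is recovered from $\pi^{-1}(S)$ as $\{f : J \to I \mid (f,\mathrm{id}_d) \in \pi^{-1}(S)\}$ (equivalently, using the initial object $0$ of $\DD$, via the morphisms $(f, !) : (J,0) \to (I,d)$). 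Hence every component $\upsilon_{(I,d)}$ is injective and $\upsilon$ is a monomorphism of presheaves.

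I expect no genuine obstacle here: the only thing requiring care is the bookkeeping that identifies $\upsilon$ with the sieve-pullback map $S \mapsto \pi^{-1}(S)$, and the recognition that the pullback square is forced by the universal property of $\Omega^\DD$ rather than needing an independent verification. As an alternative to the concrete computation, one may note that $\upsilon$ is by construction precisely the canonical comparison map $\pi^*\Omega \to \Omega^\DD$, which is monic exactly because $\pi^*$ is sub-logical — a fact already established in the proof of Lemma~\ref{lem:inclusion-is-open}.
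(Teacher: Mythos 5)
Your proposal matches the paper's proof: both observe that $\pi^*(\LTrue)$ is a mono because $\pi^*$ is an inverse image functor, take $\upsilon$ to be its classifying map so that the pullback square holds by definition, compute the explicit formula $\upsilon_{(I,d)}(S) = \{(f,g) \mid f \in S\}$, and conclude that $\upsilon$ is a mono. The paper dispenses with the last step as ``clearly a mono,'' whereas you spell out the injectivity by recovering $S$ from $\pi^{-1}(S)$ via the morphisms $(f,\mathrm{id}_d)$, which is a welcome elaboration but not a different argument.
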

\begin{proof}
  As an inverse image, $\pi^*$ preserves monos. So, $\pi^*(\LTrue)$ is a
  mono. Its characteristic map is:
  \begin{align*}
    \upsilon_{I,c}(S) = \left\{ (f,g) \isetsep f \in S \right\}.
  \end{align*}
  This is clearly a mono.
\end{proof}

\begin{corollary}
  \label{cor:equality-factors-through-inclusion}
  If $X = \pi^*(Y)$ then the equality predicate $\chi_\delta : X \times X \to \Omega^\DD$ factors uniquely through $\upsilon$ and the inclusion of the equality predicate of $Y$.
\end{corollary}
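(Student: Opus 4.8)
The plan is to show that the equality predicate $\chi_\delta : X \times X \to \Omega^\DD$ equals the composite $\upsilon \circ \pi^*(\chi_{\delta_Y})$, where $\chi_{\delta_Y} : Y \times Y \to \Omega$ is the equality predicate of $Y$ (the characteristic map of the diagonal $\delta_Y : Y \to Y \times Y$ in $\psh{\CC}$) and $\pi^*(\chi_{\delta_Y}) : X \times X \to \pi^*(\Omega)$ is its image under the embedding $\pi^*$ of Lemma~\ref{lem:inclusion-is-cartesian-closed}. This is the asserted factorisation ``through $\upsilon$ and the inclusion of the equality predicate of $Y$'', and uniqueness of the factorisation through $\upsilon$ is then immediate, since $\upsilon$ is a monomorphism by Lemma~\ref{lem:inclusion-of-subobject-classifiers}.

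First I would use that $\pi^*$, being an inverse-image functor, is left exact: it preserves the terminal object and finite products, so that $\pi^*(Y\times Y)\cong \pi^*(Y)\times\pi^*(Y)=X\times X$ and, under this identification, $\pi^*(\delta_Y)=\delta_X$. Applying $\pi^*$ to the classifying pullback square of $\chi_{\delta_Y}$ and using that $\pi^*$ preserves pullbacks then yields a pullback square exhibiting $\delta_X$ as the pullback of $\pi^*(\LTrue):\pi^*(1)\to\pi^*(\Omega)$ along $\pi^*(\chi_{\delta_Y})$. I would then paste this square horizontally onto the pullback square of Lemma~\ref{lem:inclusion-of-subobject-classifiers}, which exhibits $\pi^*(\LTrue)$ as the pullback of $\LTrue:1\to\Omega^\DD$ along $\upsilon$ (modulo the isomorphism $\pi^*(1)\cong 1$). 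By the pasting lemma for pullbacks, the outer rectangle exhibits $\delta_X$ as the pullback of $\LTrue:1\to\Omega^\DD$ along $\upsilon\circ\pi^*(\chi_{\delta_Y})$, so by uniqueness of characteristic maps in the topos $\psh{\CC\times\DD}$ we get $\upsilon\circ\pi^*(\chi_{\delta_Y})=\chi_\delta$, as desired.

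The argument is entirely formal, so there is no serious obstacle; the only point needing a little care is tracking the canonical isomorphisms $\pi^*(Y\times Y)\cong X\times X$ and $\pi^*(1)\cong 1$ so that $\pi^*$ of the subobject-classifier square for $Y$ composes on the nose with the square of Lemma~\ref{lem:inclusion-of-subobject-classifiers}. This coherence is routine given left-exactness of $\pi^*$.
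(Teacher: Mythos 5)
Your proof is correct and follows essentially the same route as the paper: apply $\pi^*$ to the classifying square of $\delta_Y$ (using left-exactness of $\pi^*$), paste it onto the pullback square of Lemma~\ref{lem:inclusion-of-subobject-classifiers}, conclude by uniqueness of characteristic maps, and deduce uniqueness of the factorisation from $\upsilon$ being a mono.
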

\begin{proof}
  The equality predicate is by definition the characteristic map of the diagonal $\delta : X \to X \times X$.
  Let $\delta' : Y \to Y \times Y$ be the diagonal.
  Because $\pi^*$ preserves finite limits the following square is a pullback.
  \begin{displaymath}
    \begin{diagram}
      \pullbacktip
      X \ar{d}[swap]{\delta = \pi^*(\delta')} \ar{r} & \pi^*(1) \ar{d}{\pi^*(\LTrue)} \ar{r}{\cong} \pullbacktip & 1 \ar{d}{\LTrue}\\
      X \times X \ar{r}{\pi^*\left(\chi_{\delta'}\right)} & \pi^*\left(\Omega\right) \ar{r}{\upsilon} & \Omega^\DD
    \end{diagram}
  \end{displaymath}
  and by uniqueness of characteristic maps we have $\upsilon \circ \pi^*\left(\chi_{\delta'}\right) = \chi_{\delta}$.
  Uniqueness of the factorisation follows from the fact that $\upsilon$ is a mono.
\end{proof}
Let $\DD$ be a small category with an initial object.
We show that $\psh{\cube\times\DD}$ models $\cL$.

\subsubsection{The interval type assumption is satisfied}\label{subsub:assum1}
Let $\IDD = \pi^*(\II)$.
Since $\pi^*$ preserves products we can lift all the De Morgan algebra operations of $\II$ to operations on $\IDD$.
The theory of a De Morgan algebra with the finitary disjunction property is geometric~\cite[Section $X.3$]{maclanemoerdijk92}.
Thus the geometric morphism $\pi^* \dashv \pi_*$ preserves validity of all the axioms,
which means that $\IDD$ is an internal De Morgan algebra with the finitary disjunction property.

\paragraph{Faces}
\begin{lemma}
  \label{lem:faces-are-constant}
  Let $\LFaceDD \in \psh{\cube\times\DD}$ and $\LFace \in \widehat{\cube}$ be defined as in Section~\ref{sec:definable-concepts} from $\IDD$ and $\I$.
  Then $\LFaceDD \cong \pi^*(\LFace)$.
\end{lemma}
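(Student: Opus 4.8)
The plan is to use the fact, recalled in Section~\ref{sec:definable-concepts}, that both $\LFace$ and $\LFaceDD$ are defined as images: $\LFace \hookrightarrow \Omega$ is the image of the map $(\cdot = 1) : \I \to \Omega$ (the internal comprehension $\{p : \Omega \mid \exists(i:\I),\, p = (i=1)\}$ being, externally, exactly this image), and likewise $\LFaceDD \hookrightarrow \Omega^\DD$ is the image of $(\cdot = 1) : \IDD \to \Omega^\DD$. Since $\pi^*$ is the inverse image of a geometric morphism it is left exact and preserves all colimits, hence preserves regular-epi/mono factorisations; so $\pi^*(\LFace)$, with its monomorphism into $\pi^*(\Omega)$, is the image of $\pi^*\bigl((\cdot=1):\I\to\Omega\bigr) : \IDD \to \pi^*(\Omega)$, using $\IDD = \pi^*(\I)$.

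First I would show that $(\cdot = 1) : \IDD \to \Omega^\DD$ equals $\upsilon \circ \pi^*\bigl((\cdot=1):\I\to\Omega\bigr)$. The point $1 : 1 \to \IDD$ is $\pi^*$ applied to $1 : 1 \to \I$, because $\pi^*$ preserves the terminal object and the De Morgan operations out of which $\IDD$ is built from $\I$; hence $(\cdot=1):\IDD\to\Omega^\DD$ is $\chi_\delta \circ (\mathrm{id}\times 1)$, where $\chi_\delta$ is the equality predicate on $\IDD$. By Corollary~\ref{cor:equality-factors-through-inclusion}, which applies since $\IDD = \pi^*(\I)$, we have $\chi_\delta = \upsilon \circ \pi^*(\chi_{\delta'})$ for $\chi_{\delta'}$ the equality predicate on $\I$; composing with $\mathrm{id}\times 1$ and pulling $\pi^*$ out gives the claimed factorisation.

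Then I would finish with the general observation that for a monomorphism $\upsilon$ and any map $h$, the image of $\upsilon \circ h$ is the image of $h$ transported along $\upsilon$ (if $h = m \circ e$ with $e$ a regular epi and $m$ a mono, then $\upsilon\circ h = (\upsilon\circ m)\circ e$ is again a regular-epi/mono factorisation). Taking $h = \pi^*\bigl((\cdot=1):\I\to\Omega\bigr)$, whose image is $\pi^*(\LFace)$ by the first paragraph, and recalling that $\upsilon$ is mono by Lemma~\ref{lem:inclusion-of-subobject-classifiers}, we conclude $\LFaceDD \cong \pi^*(\LFace)$, with the inclusion $\LFaceDD \hookrightarrow \Omega^\DD$ identified with $\upsilon$ restricted to $\pi^*(\LFace)$. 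The only genuinely fiddly points are checking that $\pi^*$ preserves the epi--mono factorisation (standard for inverse image functors) and that the internal definition of $\LFace$ via an existential quantifier corresponds externally to the image; I expect the main nuisance to be keeping careful track of whether each subobject is regarded inside $\pi^*(\Omega)$ or inside $\Omega^\DD$, rather than any real mathematical difficulty.
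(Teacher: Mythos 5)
Your proposal is correct and matches the paper's proof: both proceed by noting that $\LFace$ and $\LFaceDD$ are images of the respective maps $(\cdot = 1)$, invoking Corollary~\ref{cor:equality-factors-through-inclusion} to factor $(\cdot = 1) : \IDD \to \Omega^\DD$ as $\upsilon \circ \pi^*((\cdot=1) : \I \to \Omega)$, using that inverse image functors preserve image factorisations, and concluding via uniqueness of the epi--mono factorisation since $\upsilon$ is monic.
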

\begin{proof}
  Let $e : \IDD \to \Omega^\DD$ be the composition $\chi_\delta \circ \langle id, 1 \rangle$ where $\delta$ is the diagonal $\IDD \to \IDD \times \IDD$.
  By definition $\LFaceDD$ is the image of $e$.
  By Corollary~\ref{cor:equality-factors-through-inclusion} and the way we have defined $\IDD$, and all the operations on it, we have that $e = \upsilon \circ \pi^*(e')$ where $e' : \I \to \Omega$ is defined analogously to $e$ above.
  By definition $\LFace$ is the image of $e'$.
  Because inverse images of geometric morphisms preserve image factorisations~\cite[Remark
  1.34]{vickers2007locales}, $\pi^*(\LFace)$ is the image of $\pi^*(e')$. So,
%  Finally, because $\upsilon$ is a mono the image of $\upsilon \circ \pi^*(e')$ is
%  canonically isomorphic to the image of $\pi^*(e')$, which is what the lemma claims.
\[
\pi^*\I\twoheadrightarrow\pi^*\Face\rightarrowtail\pi^*\Omega\stackrel\upsilon\rightarrowtail\Omega^\DD\] 
is the unique factorization of the map $[\cdot=1]:\II^\DD\to \Omega^\DD$.
\end{proof}

\subsubsection{The glueing assumption is satisfied}\label{subsub:assum2}

This proceeds exactly as in Section~\ref{sec:assumpt-glue}.

\subsubsection{The fibrant universe assumption is satisfied}
\label{sec:assumption-3-general}

To define the fibrant universe it appears necessary to describe compositions externally.
The following two lemmas aid in this description because they allow us to simplify the exponential $\Gamma^\I$, i.e., the denotation of paths.

\begin{lemma}
  \label{lem:yoneda-and-products}
  Let $\CC$ and $\DD$ be small categories and assume $\CC$ has products.
  Let $k_1 : \CC \to \widehat{\CC}$ and $k_2 : \DD\times\CC \to \widehat{\DD\times\CC}$ be the Yoneda embeddings.
  Let $\pi^* : \widehat{\CC} \to \widehat{\DD\times\CC}$ be the constant presheaf functor.

  For any $d, e \in \CC$ and any $c \in \DD$ there is an isomorphism
  \begin{align*}
    k_2(c, d) \times \pi^*(k_1 e) \cong k_2(c, d \times e)
  \end{align*}
  in $\widehat{\DD\times\CC}$ which is natural in $c$, $d$ and $e$.
\end{lemma}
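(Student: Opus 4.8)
The plan is to unfold all three presheaves pointwise and reduce the claim to the universal property of the binary product $d \times e$ in $\CC$. Fix an evaluation object $(c',d') \in \DD\times\CC$. A representable presheaf on a product category splits as a product of hom-sets, so $k_2(c,d)(c',d') = \Hom{\DD}{c'}{c} \times \Hom{\CC}{d'}{d}$; and since $\pi : \DD\times\CC\to\CC$ is the projection onto $\CC$, we have $\pi^*(k_1 e)(c',d') = (k_1 e)(d') = \Hom{\CC}{d'}{e}$. Hence
\[
\bigl(k_2(c,d) \times \pi^*(k_1 e)\bigr)(c',d') = \Hom{\DD}{c'}{c} \times \Hom{\CC}{d'}{d} \times \Hom{\CC}{d'}{e},
\]
whereas $k_2(c, d\times e)(c',d') = \Hom{\DD}{c'}{c} \times \Hom{\CC}{d'}{d\times e}$.

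Next I would invoke the universal property: since $\CC$ has products, pairing gives a bijection $\Hom{\CC}{d'}{d} \times \Hom{\CC}{d'}{e} \cong \Hom{\CC}{d'}{d\times e}$, sending $(g,h)$ to $\langle g,h\rangle$, with inverse given by post-composition with the two projections. Tensoring this with the identity on $\Hom{\DD}{c'}{c}$ yields the required pointwise bijection, and I take it as the component at $(c',d')$ of the candidate presheaf isomorphism. To see it is a morphism of presheaves, note that restriction along any $(u,v) : (c'',d'') \to (c',d')$ acts by pre-composition on every hom-factor, and $\langle g,h\rangle \circ v = \langle g\circ v, h\circ v\rangle$, so the naturality square for the restriction maps commutes; since each component is a bijection, we obtain an isomorphism $k_2(c,d) \times \pi^*(k_1 e) \cong k_2(c, d\times e)$ in $\widehat{\DD\times\CC}$.

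Finally, for naturality in $c$, $d$ and $e$: a morphism $a : c\to\bar c$ acts on $k_2(c,d)$ by post-composition in the first coordinate, a morphism $b : d\to\bar d$ acts by post-composition in the second coordinate (and on $k_2(c,d\times e)$ via $b\times\mathrm{id}$), and $\ell : e\to\bar e$ acts on $\pi^*(k_1 e)$ by post-composition (and on $k_2(c,d\times e)$ via $\mathrm{id}\times\ell$); the identity $(b\times\ell)\circ\langle g,h\rangle = \langle b\circ g, \ell\circ h\rangle$ makes all of these commute with the isomorphism. Alternatively, the whole argument can be made point-free: writing $\pi^*(G) \cong 1_{\widehat{\DD}} \boxtimes G$ for the external product and $k_2(c,d) \cong (\Hom{\DD}{-}{c}) \boxtimes k_1 d$, one uses that external products commute with finite products computed coordinatewise and that the Yoneda embedding $k_1$ preserves the product $d\times e$. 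There is no genuine obstacle here; the only care needed is bookkeeping — keeping the (contravariant) evaluation object $(c',d')$ distinct from the (covariant) parameters $c,d,e$, and checking that $\langle-,-\rangle$ is natural in each of them, which is precisely the content of $\CC$ having products.
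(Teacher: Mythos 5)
Your proof is correct and follows essentially the same route as the paper: evaluate all three presheaves pointwise at $(c',d')$, split the representable on the product category into a product of hom-sets, and invoke that $\Hom{\CC}{d'}{-}$ preserves the product $d\times e$. The paper records only the chain of isomorphisms and leaves naturality implicit, whereas you spell out the naturality checks, but the underlying argument is identical.
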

\begin{proof}
  For any $(c', d') \in \DD \times \CC$
  \begin{align*}
    (k_2(c, d) \times \pi^*(k_1 e))(c', d')
    &= \Hom{\DD\times\CC}{(c', d')}{(c,d)} \times \Hom{\CC}{d'}{e}\\
    &\cong \Hom{\CC}{d'}{d} \times \Hom{\DD}{c'}{c} \times \Hom{\CC}{d'}{e}
      \intertext{and because the hom functor preserves products we have}
    &\cong \Hom{\CC}{d'}{d\times e} \times \Hom{\DD}{c'}{c}\\
    &\cong \Hom{\DD\times\CC}{(c',d')}{(c,d\times e)}\\
    &= k_2(c,d\times e)(c',d')
  \end{align*}
  as required.
\end{proof}

\begin{lemma}
  \label{lem:interval-exponential}
  Let $\DD$ be a small category.
  Let $\IDD \in \psh{\DD\times \cube}$ be the inclusion $\pi^*(\I)$ of $\I \in \psh{\cube}$.
  Let $X \in \psh{\cube \times \DD}$.
  Then for any $c \in \DD$, any $I \in \cube$ and any $i \not\in I$ we have
  \begin{align*}
    X^{\IDD}(I, c) \cong X(I \cup \{i\},c)
  \end{align*}
  naturally in $c$, $I$ and $i$.
\end{lemma}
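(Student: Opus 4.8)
The plan is to compute the exponential $X^{\IDD}$ objectwise using the Yoneda lemma together with the currying (exponential) adjunction, thereby reducing the statement to an identification of a product of representable presheaves, which is exactly what Lemma~\ref{lem:yoneda-and-products} supplies. Recall from Section~\ref{sec:concrete-model-of-L} that $\I$ is the representable presheaf $y_{\cube}(\{i\})$ on a one-element set of names; since any two one-element sets are canonically isomorphic in $\cube$, for the given $i \not\in I$ I may take this representing object to be precisely $\{i\}$, so that $\IDD = \pi^*(\I) \cong \pi^*\bigl(y_{\cube}(\{i\})\bigr)$.

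Write $y(J,d)$ for the representable presheaf on $(J,d) \in \cube\times\DD$, and identify $\cube\times\DD$ with $\DD\times\cube$ wherever convenient (as is already done in the statement of the lemma), so that $y(J,d)$ corresponds to the object $k_2(d,J)$ of Lemma~\ref{lem:yoneda-and-products}. Then the Yoneda lemma and the exponential adjunction give, for each $I \in \cube$ and $c \in \DD$,
\[
  X^{\IDD}(I,c) \;\cong\; \Hom{\psh{\cube\times\DD}}{y(I,c)}{X^{\IDD}} \;\cong\; \Hom{\psh{\cube\times\DD}}{y(I,c)\times\IDD}{X}.
\]
Now apply Lemma~\ref{lem:yoneda-and-products} with $\CC \defeq \cube$ — which has products, given by disjoint union — taking its $d$-argument to be $I$, its $e$-argument to be $\{i\}$, and its $\pi^*$ to be ours. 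Since $k_1(\{i\}) = y_{\cube}(\{i\}) \cong \I$ and $I \times \{i\} = I \sqcup \{i\} = I \cup \{i\}$ in $\cube$ (using $i \not\in I$), the lemma yields a natural isomorphism $y(I,c)\times\IDD \cong y(I\cup\{i\},c)$. Substituting this into the display above and applying the Yoneda lemma once more gives
\[
  X^{\IDD}(I,c) \;\cong\; \Hom{\psh{\cube\times\DD}}{y(I\cup\{i\},c)}{X} \;\cong\; X(I\cup\{i\},c),
\]
which is the desired isomorphism.

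Naturality in $c$, $I$ and $i$ — in the same sense as for the analogous Lemma~\ref{lem:set-of-paths-isomorphism}, i.e.\ with respect to $\cube$-morphisms $g : I \to J$ and their extensions $I\cup\{i\} \to J\cup\{j\}$ along a chosen new name — is then inherited, because every isomorphism above is natural: the two Yoneda isomorphisms and the currying adjunction are natural transformations, and the isomorphism of Lemma~\ref{lem:yoneda-and-products} is natural in each of its three arguments. Naturality in its $c$-argument gives naturality in $c$, while its joint naturality in the $d$- and $e$-arguments — applied to the morphism $(g,\,\{i\}\to\{j\})$, whose product is exactly the extension map $I\cup\{i\}\to J\cup\{j\}$ — gives naturality in $I$ and in $i$ simultaneously. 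I expect the only real care needed to be the bookkeeping that keeps the cartesian product $\times$ of presheaves separate from the product $\sqcup$ on $\cube$, together with the harmless identification $\cube\times\DD \cong \DD\times\cube$; this is precisely the bookkeeping that Lemma~\ref{lem:yoneda-and-products} was set up to absorb, so no genuine obstacle remains.
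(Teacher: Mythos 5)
Your proof is correct and takes essentially the same route as the paper's: both compute $X^{\IDD}(I,c)$ via the Yoneda lemma and the exponential adjunction, then identify $y(I,c)\times\IDD$ with $y(I\cup\{i\},c)$ by Lemma~\ref{lem:yoneda-and-products} together with $\I\cong y\{i\}$ and the fact that disjoint union is the product in $\cube$. The only cosmetic difference is that the paper also writes out the resulting isomorphism and its inverse in explicit coordinates, whereas you argue naturality abstractly from the naturality of the Yoneda and adjunction isomorphisms; both are fine.
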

\begin{proof}
  Using the Yoneda lemma and the defining property of exponents we have
  \begin{align*}
    X^{\IDD}(I, c) &\cong \Hom{\psh{\cube\times\DD}}{y(I,c)}{X^{\IDD}}\\
                  &\cong \Hom{\psh{\cube\times\DD}}{y(I,c)\times \pi^*(\I)}{X}\\
    \intertext{which by Lemma~\ref{lem:yoneda-and-products}, together with the fact that $\I$ is isomorphic to $y\{i\}$, is isomorphic to}
                  &\cong \Hom{\psh{\cube\times\DD}}{y(I\cup\{i\}, c)}{X}\\
                  &\cong X(I \cup \{i\},c).
  \end{align*}
  recalling that disjoint union is the coproduct in the Kleisli category of the free De Morgan algebra monad, and so
  disjoint union defines the product in $\cube$.

  Concretely, the isomorphism $\alpha_{I,i}^c$ maps $\xi \in X^{\IDD}(I, c)$ to $\xi_{(\iota_{I,i},id_{c}, )}(i)$, where $\iota_{I,i} : I \to I,i$ (in $\cube\op$) is the inclusion.
  Its inverse $\beta_{I,i}^c$ maps $x \in X(I \cup \{i\},c)$ to the family of functions $\xi_{(f,g)} : \I(J) \to X(J,d)$ indexed by morphisms $(f,g) : (J,d) \to (I,c)$ (in $(\cube \times \DD)$).
  This family is defined as
  \begin{align*}
    \xi_{(f,g)}(\phi) = X([g, i \mapsto \phi],g)(x)
  \end{align*}
  where $[g, i \mapsto \phi]$ is the map $I, i \to J$ (in $\cube\op$) which maps $i$ to $\phi$ and otherwise acts as $g$.
  This map is well-defined because disjoint union is the coproduct in $\cube\op$.
\end{proof}

\paragraph{Definition of the universe.}
We can now define the universe $\LUfDD$.
For this we assume a Grothendieck universe $\UG$ in our ambient set theory.
First, recall that the Hofmann-Streicher universe $\LUDD$ in $\psh{\cube\times\DD}$ maps $(I,c)$ to the set of functors valued in $\UG$ on the category of elements of $y(I,c)$.
It acts on morphisms $(I,c) \to (J,d)$ by composition (in the same way as substitution in types is modelled).

The elements operation
\begin{align*}
  \frac{\hastype{\Gamma}{a}{\LUDD}}{\wftype{\Gamma}{\Elem(a)}}
\end{align*}
is interpreted as
\begin{align*}
  \Elem(a)((I,c),\gamma) = a_{(I,c),\gamma}(\star)\left(id_{I,c}\right),
\end{align*}
recalling that terms are interpreted as global elements, and $\star$ is the unique inhabitant of the chosen singleton set.

We define $\LUfDD$ analogously to the way it is defined in Section~\ref{sec:concrete-model-of-L}, that is
\begin{align*}
  \LUfDD(I,c) = \Ty(y(I,c)).
\end{align*}
We first look at the following rule.
\begin{mathpar}
  \inferrule{%
    \Lhastype{\Gamma}{a}{\LUDD} \\
    \Lhastype{}{\bfc}{\LCompTy{\Gamma}{\LEl{a}}}}{%
    \Lhastype{\Gamma}{\llparenthesis a, \bfc \rrparenthesis}{\LUfDD}}
\end{mathpar}
Let us write $b = \llparenthesis a, \bfc \rrparenthesis$.
We need to give for each $I \in \cube$, $c \in \DD$ and $\gamma \in \Gamma(I,c)$ a pair
$(b_0, b_1)$ where
\begin{align*}
  &\Lhastype{y(I,c)}{b_0}{\LUDD}\\
  &\Lhastype{\cdot}{b_1}{\LCompTy{y(I,c)}{\LEl{b_0}}}
\end{align*}
Now $b_0$ is easy.
It is simply $a_{(I,c),\gamma}$.
Composition is also conceptually simple, but somewhat difficult to write down precisely.
Elements $\gamma \in \Gamma(I,c)$ are in bijective correspondence (by Yoneda and exponential transpose) to terms $\overline{\gamma}$
\begin{align*}
  \Lhastype{\cdot}{\overline{\gamma}}{{y(I,c)} \to \Gamma}.
\end{align*}
Thus we define
\begin{align*}
  b_1 = \lambda \rho . \bfc \left(\overline{\gamma} \circ \rho\right).
\end{align*}
One checks that this is well-defined and natural by a tedious computation, which we omit here.

We now look at the converse rule in $\mathcal{L}$
\begin{mathpar}
  \inferrule{%
    \Lhastype{\Gamma}{a}{\LUf}}{%
    %%%
    \Lwftype{\Gamma}{\LEl{a}}} \and \inferrule{%
    \Lhastype{\Gamma}{a}{\LUf}}{%
    %%%
    \Lhastype{}{\LElComp{a}}{\LCompTy{\Gamma}{\LEl{a}}}}.
\end{mathpar}
To interpret this rule with $\LUfDD$, we interpret for any $a$ and $\bfc$, $\LEl{\llparenthesis a, \bfc
  \rrparenthesis}$ by $\LEl{a}$, where the latter is $\Elem$ map of the Hofmann-Streicher universe.

We need to define $\LElComp{a}$ which we abbreviate to $c$.
We need to give for each $I \in \cube$ and $c \in \DD$ an element $c_{I,c} \in \LCompTy{\Gamma}{\LEl{a}}(I,c)$, and this family needs to be natural in $I$ and $c$.
Given $\gamma \in (\Gamma^{\IDD})(I,c)$ and a fresh $i\not\in I$ we get by Lemma~\ref{lem:interval-exponential} an element $\gamma' \in \Gamma\left((I, i), c\right)$.
Let $\overline{\gamma'} : y((I,i),c) \to \Gamma$ be the morphism corresponding to $\gamma'$ by the Yoneda lemma.
Thus we get from $a$ the term $c'_{I,i,c,\gamma}$
\begin{align*}
  \Lhastype{\cdot}{c'_{I,i,c,\gamma}}{\LCompTy{y((I,i),c)}{\LEl{a}\overline{\gamma'}}}
\end{align*}
and hence by weakening a term
\begin{align*}
  \Lhastype{y(I,c)}{c'_{I,i,c,\gamma}}{\LCompTy{y((I,i),c)}{\LEl{a}\overline{\gamma'}}}
\end{align*}
By Lemma~\ref{lem:yoneda-and-products} and the way $\IDD$ is defined we have a canonical isomorphism $y((I,i),c) \cong y(I,c) \times \IDD$.
We now apply $c'_{I,i,c,\gamma}$ to the path $\delta = \lambda (i : \IDD).(\rho, i)$ to get the element
\begin{align*}
  \Lhastype{\rho : y(I,c)}{c'_{I,i,c,\gamma}\delta}
  {\Pi (\phi : \LFace)(u : \Pi (i:\I). \Lface{\phi} \to B(\delta(i))).B(\delta(0))[ \phi \mapsto u(0) ] \to B(\delta(1))[ \phi \mapsto u(1) ]}
\end{align*}
Where $B = \LEl{a}\overline{\gamma'}$.

From this element we can define $c_{I,c}$ by using the Yoneda lemma again to get the element $\overline{c'_{I,i,c,\gamma}}$ of type
\begin{align*}
  \Pi (\phi : \LFace)(u : \Pi (i:\I). \Lface{\phi} \to B(\delta(i))).B(\delta(0))[ \phi \mapsto u(0) ] \to B(\delta(1))[ \phi \mapsto u(1) ],
\end{align*}
which is a type in context $y(I,c)$, at $(I, c), id_{I,c}$.
To recap, the composition $c$ will map $\gamma \in (\Gamma^{\IDD})(I,c)$ to the element $\overline{c'_{I,i,c,\gamma}}$.

\begin{lemma}
  \label{lem:universe-operations-are-inverses}
  For any $a$ and $\bfc$ of correct types we have
  \begin{align*}
    \LElComp{\llparenthesis a, \bfc \rrparenthesis} &= \bfc\\
    \LEl{\llparenthesis a, \bfc \rrparenthesis} &= \LEl{a}\\
    \llparenthesis \LEl{a}, \LElComp{a} \rrparenthesis &= a
  \end{align*}
\end{lemma}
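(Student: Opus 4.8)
The plan is to verify the three equations by unfolding the definitions of $\LEl{}$, $\LElComp{}$, and $\llparenthesis\cdot,\cdot\rrparenthesis$ on $\LUfDD$ given immediately above. The equation $\LEl{\llparenthesis a,\bfc\rrparenthesis}=\LEl{a}$ holds by construction, since we \emph{defined} $\LEl{\llparenthesis a,\bfc\rrparenthesis}$ to be $\LEl{a}$, the $\Elem$ map of the Hofmann--Streicher universe applied to the first component; it may then be used freely in the other two equations, as it guarantees that the relevant $\LCompTy{\Gamma}{-}$ types on both sides agree. The remaining two equations I would reduce to the statement that a suitable composite of canonical bijections --- Yoneda, the exponential transpose $\Gamma(I,c)\cong\Hom{\psh{\cube\times\DD}}{y(I,c)}{\Gamma}$, the isomorphism $X^{\IDD}(I,c)\cong X(I\cup\{i\},c)$ of Lemma~\ref{lem:interval-exponential}, and the product decomposition $y((I,i),c)\cong y(I,c)\times\IDD$ of Lemma~\ref{lem:yoneda-and-products} --- is the identity.

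For $\LElComp{\llparenthesis a,\bfc\rrparenthesis}=\bfc$, I would fix $I\in\cube$, $c\in\DD$, and trace an element through both sides. Writing $b=\llparenthesis a,\bfc\rrparenthesis$, its composition component at $((I,i),c)$ and $\gamma''\in\Gamma((I,i),c)$ is $\lambda\rho.\,\bfc\,(\overline{\gamma''}\circ\rho)$, where $\overline{\gamma''}:y((I,i),c)\to\Gamma$ is the Yoneda transpose of $\gamma''$. The value of $\LElComp{b}$ at $(I,c)$ on $\gamma\in(\Gamma^{\IDD})(I,c)$ is obtained by passing $\gamma$ through Lemma~\ref{lem:interval-exponential} to an element $\gamma''\in\Gamma((I,i),c)$, reading off the composition component of $b$ there, applying it to the path $\delta=\lambda(i:\IDD).(\rho,i)$, and transposing back along Yoneda. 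The equation asserts precisely that this round trip is the identity on $\LCompTy{\Gamma}{\LEl a}(I,c)$, which follows from the triangle identities for the transpose adjunctions together with the explicit descriptions of the inverse isomorphisms $\alpha_{I,i}^c$, $\beta_{I,i}^c$ recorded in Lemma~\ref{lem:interval-exponential}.

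For $\llparenthesis\LEl{a},\LElComp{a}\rrparenthesis=a$, I would run essentially the same chase in the opposite direction. A term $a$ of type $\LUfDD$ has, at each $(I,c)$ and $\gamma$, a value in $\Ty(y(I,c))$, namely a pair of a code and a composition structure; $\LEl{a}$ extracts the code (via the $\Elem$ map of the Hofmann--Streicher universe), $\LElComp{a}$ extracts the composition structure by the transpose-and-restrict procedure used above, and $\llparenthesis\cdot,\cdot\rrparenthesis$ reassembles the pair. What must be checked is that the Yoneda and exponential identifications used in defining $\Elem$ and $\LElComp{a}$ are inverse to those used in the $\llparenthesis\cdot,\cdot\rrparenthesis$ direction, and --- crucially --- that the equality holds as global elements, i.e.\ with all naturality squares commuting, not merely pointwise.

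The main obstacle is therefore not conceptual but the sheer bookkeeping: keeping several Yoneda isomorphisms, two exponential transposes, and the interval-exponential and product decompositions of Lemmas~\ref{lem:interval-exponential} and~\ref{lem:yoneda-and-products} straight, and confirming that in each of the two non-trivial equations they compose to an identity with all naturality conditions met. Since every ingredient is an isomorphism whose inverse is given explicitly in the cited lemmas, the verification is routine if lengthy; I would present only the shape of each round trip and leave the full diagram chase to the reader, exactly as was done for the well-definedness and naturality of $b_1$.
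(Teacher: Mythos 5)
The paper states this lemma without a proof; the preceding paragraph already disposes of the naturality of $b_1$ with ``One checks that this is well-defined and natural by a tedious computation, which we omit here,'' and the same attitude silently carries over to the lemma itself. Your sketch is a correct rendering of the omitted computation: the equation $\LEl{\llparenthesis a,\bfc\rrparenthesis}=\LEl{a}$ is definitional, and the other two equations reduce to checking that the composite of the Yoneda transposes, the exponential transpose, and the isomorphisms of Lemmas~\ref{lem:yoneda-and-products} and~\ref{lem:interval-exponential} used to define $\llparenthesis\cdot,\cdot\rrparenthesis$ and $\LElCompEmp$ collapse to the identity, with naturality supplied by the explicit inverses given there.
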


\subsubsection{The $\forall$ assumption is satisfied}\label{subsub:assum4}

Using Lemmas~\ref{lem:inclusion-is-cartesian-closed} and \ref{lem:faces-are-constant} we can define $\forall$ in $\psh{\cube\times\DD}$ as the inclusion of the $\forall$ from $\widehat{\cube}$.
Lemma~\ref{lem:inclusion-is-open} can then be used to show that the new $\forall$ is the right adjoint to the map $\phi \mapsto \lambda \_.\phi$.

\subsection{A model of \gctt}
\label{sec:model-gctt}

Our construction of a model for $\gctt$ again proceeds via a dependent predicate logic, extending the language $\cL$ used above with counterparts of the later, delayed substitutions, and fixed-point constructs introduced in Sections~\ref{sec:later} and~\ref{sec:fix}.
We call this new language $\cL'$.
One difference between $\gctt$ and $\cL'$ is that in the latter our fixed-point combinator $\Lfix{x}{t}$ has a \emph{judgemental} equality
\begin{align*}
  \Leq{\Gamma}{\Lfix{x}{t}}{t\subst{\Lnext \Lfix{x}{t}}{x}}.
\end{align*}
The $\gctt$ term $\dfix[r]{x}{t}$ is interpreted as $\Lnext(\fix{x}{t})$, forgetting $r$.
This is consistent with the motivation for annotating $\dfix[r]{x}{t}$ with an interval element $r$: it is needed to ensure termination of fixed-point unfolding, but it is semantically irrelevant.

Since $\cL'$ is an extension of $\cL$ we can use it to construct a model of $\ctt$.
The interpretation of delayed fixed point combinator and delayed substitutions of $\gctt$ is straightforward in terms of corresponding constructs of $\cL'$.
The most difficult part is showing that the $\later$ type-former, with delayed substitutions, has compositions, which we do in Section~\ref{sec:interp_later}.
The rest of the section is devoted to providing a model of $\cL'$ in the presheaf category $\psh{\cube\times\omega}$.
Because of the results of the previous subsection this is immediately a model of $\cL$; we need only show that the category $\psh{\cube\times\omega}$ also models the constructs of guarded recursive types.
The constructions are straightforward modifications of constructions used to model guarded recursive types in the topos of trees~\cite{Birkedal+:topos-of-trees,Bizjak-et-al:GDTT}, which is the category $\psh{\omega}$.
%To show that later types have compositions we need certain type isomorphisms, see Proposition~\ref{prop:type-iso-pi-later-constant} and Corollary~\ref{cor:later-pi-and-faces}.

\subsubsection{The functor \texorpdfstring{$\Llater$}{``later''}}

In this section we sketch how to model the later type, delayed substitutions, and the fixed-point operator of $\cL'$.
Since the constructions are straightforward modifications of constructions explained in previous work we omit most proofs.
They are, \emph{mutatis mutandis}, as in previous work.

The $\Llater$ functor on the topos of trees $\widehat{\omega}$ was defined by Birkedal et al.~\cite{Birkedal+:topos-of-trees}.
It is straightforward to extend this to the category $\widehat{\cube \times \omega}$, simply ignoring the cube component: given $X \in \widehat{\cube \times \omega}$ define
\begin{align*}
  \Llater{X}(I,n) =
  \begin{cases}
    \{\star\} & \text{ if } n = 0\\
    X(I,m) & \text{ if } n = m + 1
  \end{cases}
\end{align*}
with restrictions inherited from $X$; i.e. if $(f, n \leq m) : (I,n) \to (J,m)$ then
\begin{align*}
  \Llater{X}(f,n \leq m) &: X(J,m) \to X(I,n)\\
  \Llater{X}(f,n \leq m) &=
  \begin{cases}
    ! & \text{ if } n = 1\\
    X(f, k \leq m - 1) & \text{ if } n = k + 1
  \end{cases}
\end{align*}
where $n \leq m$ is the unique morphism $n \to m$ (and similarly $k \leq m - 1$), and $!$ is the unique morphism into $\{\star\}$, the chosen singleton set.

Less concretely, the $\later$ functor on $\widehat{\omega}$ arises via a geometric morphism induced by the successor functor on $\omega$~\cite[Section 2.2]{Birkedal+:topos-of-trees}; the functor above arises similarly from the successor functor on $\cube\times\omega$ which is the identity on the cube component.

There is a natural transformation
\begin{align*}
  \Lnext &: id_{\widehat{\cube \times \omega}} \to \Llater\\
  \left(\Lnext_X\right)_{I,0} &=~!\\
  \left(\Lnext_X\right)_{I,n+1} &= X\left(id_I,(n \leq n+1)\right)
\end{align*}
and a natural family of morphisms $\circledast : \later(Y^X) \times \later X \to \later Y$ making the triple $(\later,\Lnext, \circledast)$ an applicative functor~\cite{McBride:Applicative}.

\begin{lemma}
  \label{lem:model-of-guarded-recursive-terms}
  For any $X$ and any morphism $\alpha : \Llater X \to X$ there exists a unique global element $\beta : 1 \to X$ such that
  \begin{align*}
    \alpha \circ \Lnext \circ \beta = \beta.
  \end{align*}

  Hence the triple $(\widehat{\cube \times \omega},\Llater, \Lnext)$ is a model of guarded recursive terms~\cite[Definition $6.1$]{Birkedal+:topos-of-trees}.
\end{lemma}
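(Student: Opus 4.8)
The plan is to build $\beta$ by recursion on the $\omega$-component of the index, exactly as one constructs guarded fixed points in the topos of trees, and to read off uniqueness directly from the defining equation. Recall that a global element $\beta : 1 \to X$ is the same thing as a family of elements $\beta_{I,n} \in X(I,n)$ compatible with all restriction maps $X(f, n \le m)$. Unwinding the equation $\alpha \circ \Lnext \circ \beta = \beta$ using the explicit descriptions of $\Llater$ and $\Lnext$ above, together with naturality of $\beta$ (which gives $(\Lnext_X \circ \beta)_{I,n+1} = X(\mathrm{id}_I, n \le n+1)(\beta_{I,n+1}) = \beta_{I,n}$), the equation becomes the two clauses $\beta_{I,0} = \alpha_{I,0}(\star)$ and $\beta_{I,n+1} = \alpha_{I,n+1}(\beta_{I,n})$. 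These clauses determine the whole family, so any $\beta$ satisfying the equation is unique; what remains for existence is to check that the family defined by these clauses is genuinely natural, i.e. a bona fide global element.

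For existence I would take $\beta_{I,n}$ defined by the two clauses and verify naturality by induction on $n$. It suffices to treat the generating morphisms $(f,\mathrm{id}_n) : (I,n) \to (J,n)$ and $(\mathrm{id}_I, n \le n+1) : (I,n) \to (I,n+1)$, since every morphism of $\cube\times\omega$ factors through maps of these two kinds. Each case is a short diagram chase using naturality of $\alpha : \Llater X \to X$ and the inductive hypothesis: the base case $n=0$ uses that $\Llater X(f,\mathrm{id}_0)$ is the identity on the singleton, and the successor cases use that $\Llater X$ acts on a morphism by the corresponding restriction of $X$ with the $\omega$-index shifted down by one. The delicate point is that the two naturality statements — in the cube direction and in the $\omega$ direction — should be proved simultaneously, since at the successor stage each is used to establish the other.

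For the final assertion I would note that combining the unique-fixed-point property just established with the applicative-functor structure $(\Llater, \Lnext, \circledast)$ recorded immediately above furnishes exactly the data required by Definition~$6.1$ of Birkedal et al.~\cite{Birkedal+:topos-of-trees}; since none of this structure interacts with the cube component, the argument is the same as the one for $\widehat{\omega}$. The only real care needed anywhere in the proof — and it is bookkeeping rather than a genuine obstacle — is keeping straight how $\Llater$ re-indexes morphisms, in particular the off-by-one on the $\omega$-component and the collapse to the singleton at stage $0$.
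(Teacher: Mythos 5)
Your proof is correct and follows essentially the same route as the paper's: unwind $\alpha\circ\Lnext\circ\beta=\beta$ at each stage $(I,n)$ to obtain the recursion $\beta_{I,0}(\star)=\alpha_{I,0}(\star)$, $\beta_{I,n+1}(\star)=\alpha_{I,n+1}(\beta_{I,n}(\star))$, read off uniqueness, define $\beta$ by that recursion, and verify naturality by induction on $n$. One small inaccuracy in an otherwise sound sketch: the two naturality checks (along cube morphisms $(f,\mathrm{id}_n)$ and along $\omega$-morphisms $(\mathrm{id}_I,n\le n+1)$) are not actually mutually dependent at the successor step — each reduces to its own lower instance via naturality of $\alpha$ — so a simultaneous induction is harmless but not forced.
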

\begin{proof}
  Any global element $\beta$ satisfying the fixed-point equation must satisfy the following two equations
  \begin{align*}
    \beta_{I,0}(\star) &= \alpha_{I,0}(\star)\\
    \beta_{I,n+1}(\star) &= \alpha_{I,n+1}\left(\beta_{I,n}(\star)\right).
  \end{align*}
  Hence define $\beta$ recursively on $n$.
  It is then easy to see that $\beta$ is a global element and that it satisfies the fixed-point equation and that it is unique such.
\end{proof}

By Lemma~\ref{lem:model-of-guarded-recursive-terms} and Birkedal et
al.~\cite[Theorem $6.3$]{Birkedal+:topos-of-trees}, $\Llater$ extends to all slices of $\widehat{\cube \times \omega}$, and contractive morphisms on slices have unique fixed-points.

The above translations from $\widehat{\omega}$ to $\widehat{\cube \times \omega}$ are straightforward, but are not
sufficient. First, we need to consider coherence issues, which are ignored by Birkedal et
al.~\cite{Birkedal+:topos-of-trees}. Second, we need to consider delayed substitutions, which we do below,
following the development for $\gdtt$~\cite{Bizjak-et-al:GDTT}. Third, we need to show that the later types are fibrant, i.e.
support the notion of composition, which we do in Section~\ref{sec:interp_later}.

\paragraph{Delayed substitutions}
Semantically a \emph{delayed} substitution of $\cL'$
\begin{align*}
  \dsubst{\xi}{\Gamma}{\Gamma'}
\end{align*}
will be interpreted~\cite{Bizjak-et-al:GDTT} as a morphism $\den{\xi} : \den{\Gamma} \to \Llater\den{\Gamma,\Gamma'}$ making the following diagram commute
\begin{displaymath}
  \begin{diagram}
    & \Llater\den{\Gamma,\Gamma'} \ar{d}{\Llater{\pi}}\\
    \den{\Gamma} \ar{r}[swap]{\Lnext} \ar{ur}{\den{\xi}} & \Llater\den{\Gamma}.
  \end{diagram}
\end{displaymath}
Here $\pi : \den{\Gamma,\Gamma'} \to \den{\Gamma}$ is the composition of projections of the form $\den{\Gamma,\Gamma'',x : A} \to \den{\Gamma,\Gamma''}$.

In particular, if $\Gamma'$ is the empty context then $\pi = id_{\den{\Gamma}}$ and so $\den{\cdot} = \Lnext$, where $\cdot$ is the empty delayed substitution.

Thus given a delayed substitution $\dsubst{\xi}{\Gamma}{\Gamma'}$ and a type
\begin{align*}
  \Lwftype{\Gamma,\Gamma'}{A}
\end{align*}
define
\begin{align*}
  \Lwftype{\Gamma}{\later[\xi]{A}}
\end{align*}
to be
\begin{align*}
  \left(\later[\xi]{A}\right)(I,n,\gamma) =
  \begin{cases}
    1 & \text{ if } n = 0\\
    A\left(I, m, \den{\xi}_{I,n}(\gamma)\right) & \text{ if } n = m + 1
  \end{cases}
\end{align*}
Note that this is exactly like substitution $A\xi$, except in the case where $n = 0$.

In turn, we interpret the rules
\begin{mathpar}
  \inferrule{%
    \wfcxt{\Gamma}}{%
    \dsubst{\cdot}{\Gamma}{\cdot}}
  \and
  \inferrule{%
    \dsubst{\xi}{\Gamma}{\Gamma'} \\
    \hastype{\Gamma}{t}{\later[\xi]{A}}}{%
    \dsubst{\xi\hrt{x \gets t}}{\Gamma}{\Gamma', x:A}}
\end{mathpar}
as follows.
First, the empty delayed substitution is interpreted as $\Lnext$, as we already remarked above.
Given $\dsubst{\xi}{\Gamma}{\Gamma'}$ and $\hastype{\Gamma}{t}{\later[\xi]{A}}$ define
\begin{align*}
  \den{\dsubst{\xi\hrt{x \gets t}}{\Gamma}{\Gamma', x:A}}_{I,n}(\gamma) =
  \begin{cases}
    \star & \text{ if } n = 0\\
    \left(\xi_{I,n}(\gamma), t_{I,n,\gamma}(\star)\right) & \text{ otherwise }
  \end{cases}
\end{align*}

\paragraph{Next} The term-level counterpart is interpreted similarly.
To interpret the rule
\begin{mathpar}
  \inferrule{%
    \hastype{\Gamma,\Gamma'}{t}{A} \\
    \dsubst{\xi}{\Gamma}{\Gamma'}}{%
    \hastype{\Gamma}{\pure[\xi]{t}}{\later[\xi]{A}}}
\end{mathpar}
we proceed as follows.
Given a term $t$ and a delayed substitution $\xi$ we define the interpretation of
\begin{align*}
  \den{\pure[\xi]{t}}_{I,n,\gamma}(\star) =
  \begin{cases}
    \star & \text{ if } n = 0\\
    t_{I,m,\den{\xi}_{I,n}(\gamma)}(\star) & \text{ if } n = m + 1
  \end{cases}
\end{align*}
The type and term equalities for delayed substitutions then follow as in previous work.

\subsubsection{Dependent products, later, and ``constant'' types.}
To define composition for the $\Llater$ type we will need type isomorphisms commuting $\Llater$ and dependent products in certain cases.
We start with a definition.

\begin{definition}
  \label{def:constant-types}
  A type $\Lwftype{\Gamma}{A}$ is \emph{constant with respect to $\omega$} if for all $I \in \cube, n \in \omega, \gamma \in \Gamma(I,n)$ and for all $m \leq n$ the restriction
  \begin{align*}
    A(I,n,\gamma) \to A\left(I,m,\Gamma(id_I,m \leq n)(\gamma)\right)
  \end{align*}
  is the \emph{identity function}\footnote{A perhaps more natural definition would require this function to be a bijection.
    However since this is a technical definition used only in this section we state it only in the generality we need.}
  (in particular, the two sets are equal).
\end{definition}
Note that this is a direct generalisation of ``being constant'' (being in the image of $\pi^*$) for presheaves (i.e., closed types).
Below we will use the shorter notation $\restr{\gamma}{m}$ for $\Gamma(id_I, m\leq n)(\gamma)$.
We have the following easy, but important, lemma.
\begin{lemma}
  \label{lem:constant-closed-under-substitution}
  Being constant with respect to $\omega$ is closed under substitution.
  If $\Lwftype{\Gamma}{A}$ is constant and $\rho : \Gamma' \to \Gamma$ is a context morphism then
  $\Lwftype{\Gamma'}{A\rho}$ is constant.
\end{lemma}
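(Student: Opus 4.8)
The plan is to reduce the statement directly to the constancy of $A$, using only the pointwise definition of the substituted type $A\rho$ and the naturality of the context morphism $\rho$. Recall that, as a presheaf on the category of elements of $\Gamma'$, the type $A\rho$ is given by $(A\rho)(I,n,\gamma') = A(I,n,\rho_{I,n}(\gamma'))$ for $\gamma' \in \Gamma'(I,n)$, with restriction maps inherited from those of $A$ by reindexing along $\rho$. So the whole argument will amount to checking that the restriction of $A\rho$ along $(id_I, m \le n)$ \emph{is} a restriction of $A$, sitting over the element $\rho_{I,n}(\gamma')$ of $\Gamma$.

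First I would fix $I \in \cube$, $n \in \omega$, $\gamma' \in \Gamma'(I,n)$, and $m \le n$, and spell out the source and target of the restriction map of $A\rho$ that must be shown to be the identity: these are $(A\rho)(I,n,\gamma') = A(I,n,\rho_{I,n}(\gamma'))$ and $(A\rho)(I,m,\restr{\gamma'}{m}) = A\bigl(I,m,\rho_{I,m}(\restr{\gamma'}{m})\bigr)$. Next I would apply naturality of $\rho : \Gamma' \to \Gamma$ to the morphism $(id_I, m \le n)$, which yields $\rho_{I,m}(\restr{\gamma'}{m}) = \restr{(\rho_{I,n}(\gamma'))}{m}$. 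Hence the target above is exactly $A(I,m,\restr{(\rho_{I,n}(\gamma'))}{m})$, and the restriction map of $A\rho$ under consideration is, by the definition of $A\rho$ (whose restrictions are inherited from those of $A$), literally the restriction map $A(id_I, m \le n)$ of $A$ at the element $\gamma \defeq \rho_{I,n}(\gamma') \in \Gamma(I,n)$.

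Finally I would invoke the constancy of $A$, instantiated at this $\gamma$: it tells us precisely that $A(I,n,\gamma) = A(I,m,\restr{\gamma}{m})$ as sets and that $A(id_I, m \le n)$ is the identity function on that set. Transporting this back through the identifications of the previous step gives that $(A\rho)(I,n,\gamma') = (A\rho)(I,m,\restr{\gamma'}{m})$ and that the corresponding restriction of $A\rho$ is the identity, which is the definition of $A\rho$ being constant with respect to $\omega$. The argument is routine bookkeeping; the only point that needs care is applying naturality of $\rho$ in the correct direction, so that the restriction along $(id_I, m \le n)$ on the $\Gamma'$-side is matched with the restriction along $(id_I, m \le n)$ on the $\Gamma$-side — once that identification is in place, constancy of $A$ applies verbatim, so there is no genuine obstacle.
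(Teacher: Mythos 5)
Your proof is correct and is the natural argument the paper leaves implicit (the paper states this lemma without proof, calling it ``easy''). Unfolding $(A\rho)(I,n,\gamma') = A(I,n,\rho_{I,n}(\gamma'))$, using naturality of $\rho$ to identify the two restriction maps, and then invoking constancy of $A$ at $\rho_{I,n}(\gamma')$ is exactly the intended bookkeeping.
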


\begin{lemma}
  \label{lem:identity-on-constant-type-is-constant}
  Let $X$ be a presheaf in the \emph{essential image} of $\pi^*$.
  The identity type $\Lwftype{x : X, y : X}{\LId{X}{x}{y}}$ is \emph{constant} with respect to $\omega$.
\end{lemma}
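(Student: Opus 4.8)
The plan is to unwind the presheaf interpretation of the extensional identity type and exploit the fact that all its fibres are subsingletons, so that ``being constant with respect to $\omega$'' reduces to injectivity of the $\omega$-restrictions of $X$. Note that $X$ itself, lying only in the \emph{essential} image of $\pi^*$, need not be constant in the strict sense of Definition~\ref{def:constant-types} (its $\omega$-restrictions are isomorphisms but not necessarily identities); the point is that the identity type is nonetheless constant.

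First I would recall how $\Lwftype{x:X,y:X}{\LId{X}{x}{y}}$ is interpreted in $\psh{\cube\times\omega}$. The context $(x:X,y:X)$ is the presheaf $X\times X$, so an element of it over $(I,n)$ is a pair $(x,y)\in X(I,n)\times X(I,n)$, and the two terms are the projections. Interpreted, as in dependent predicate logic, as the subobject of the terminal object over the context classified by the proposition $x=y$, the type $\LId{X}{x}{y}$ takes the value $\{\star\}$ over $((I,n),(x,y))$ when $x=y$ in $X(I,n)$, and $\emptyset$ otherwise, with $\star$ the chosen global singleton inhabitant; its restriction maps are the induced (and, as we note below, essentially forced) ones. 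In particular every fibre is a subsingleton, and there is at most one function from a subsingleton to itself.

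Second I would use the hypothesis. Fixing an isomorphism $\theta\colon X\cong\pi^*(Y)$ for some presheaf $Y$ on $\cube$, and using that $\pi^*(Y)(\mathrm{id}_I,m\le n)=\mathrm{id}_{Y(I)}$, naturality of $\theta$ gives $X(\mathrm{id}_I,m\le n)=\theta_{I,m}^{-1}\circ\theta_{I,n}$, a composite of isomorphisms and hence injective. Consequently, for $(x,y)\in X(I,n)\times X(I,n)$ and $m\le n$ we have $x=y$ if and only if $\restr{x}{m}=\restr{y}{m}$. Therefore the source $\LId{X}{x}{y}((I,n),(x,y))$ and the target $\LId{X}{x}{y}((I,m),(\restr{x}{m},\restr{y}{m}))$ of the $\omega$-restriction are the same subsingleton (both $\{\star\}$ or both $\emptyset$), and the restriction between them, being the unique function of a subsingleton to itself, is the identity function. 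By Definition~\ref{def:constant-types} this is exactly the statement that $\LId{X}{x}{y}$ is constant with respect to $\omega$.

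There is no genuinely hard step; the only points requiring care are that the essential image of $\pi^*$ (rather than the literal image) suffices, which is why the argument runs through injectivity of the restrictions rather than their being identities, and that the subsingleton nature of the fibres is precisely what forces ``the two sets are equal'' and ``the restriction is the identity'' to coincide. It is worth recording for later use in Section~\ref{sec:model-gctt} that the same argument shows $\LId{X}{x}{y}$ is constant with respect to $\omega$ whenever the $\omega$-restrictions of $X$ are merely injective, not necessarily isomorphisms.
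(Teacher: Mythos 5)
Your proof is correct and takes essentially the same approach as the paper's: write out the identity type's fibres as the subsingleton $\{\star\}$ or $\emptyset$, observe that the $\omega$-restriction maps of $X$ are injective because $X$ lies in the essential image of $\pi^*$, and conclude that source and target fibres of each $\omega$-restriction coincide and the restriction is forced to be the identity. You are merely more explicit than the paper about why the $\omega$-restrictions of $X$ are isomorphisms (the paper states it without the $\theta_{I,m}^{-1}\circ\theta_{I,n}$ computation) and you add the observation that injectivity alone suffices, but the argument is the same.
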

\begin{proof}
  Recall that we have for $\gamma, \gamma' \in X(I,n)$.
  \begin{align*}
    (\LId{X}{x}{y})(I,n,\gamma,\gamma') &=
    \begin{cases}
      \{\star\} & \text{ if } \gamma = \gamma'\\
      \emptyset & \text{ otherwise }
    \end{cases}
  \end{align*}
  Thus for any $m \leq n$
  \begin{align*}
  (\LId{X}{x}{y})(I,m,\restr{\gamma}{m},\restr{\gamma'}{m}) &=
    \begin{cases}
      \{\star\} & \text{ if } \restr{\gamma}{m} = \restr{\gamma'}{m}\\
      \emptyset & \text{ otherwise }
    \end{cases}
  \end{align*}
  But since $\restr{\cdot}{m}$ is an isomorphism we have $\restr{\gamma}{m} = \restr{\gamma'}{m}$ if and only if $\gamma = \gamma'$, which concludes the proof.
  Since all the sets are chosen singletons or the empty set the relevant restrictions are then trivially identity functions.
\end{proof}

Using the assumptions stated above we have the following proposition.
\begin{proposition}
  \label{prop:type-iso-pi-later-constant}
  Assume
  \begin{align*}
    &\Lwftype{\Gamma}{A}\\
    &\Lwftype{\Gamma,\Gamma',x : A}{B}\\
    &\dsubst{\xi}{\Gamma}{\Gamma'}
  \end{align*}
  and further that $A$ is \emph{constant with respect to $\omega$}.

  The canonical morphism from left to right in
  \begin{align}
    \label{eq:type-iso-pi-later-constant}
    \Gamma \vdash \later[\xi]{\Pi(x : A).B} \cong \Pi(x : A).\later[\xi]B
  \end{align}
  is an isomorphism.
  The canonical morphism is derived from the term $\lambda f . \lambda x . \pure[\hrt{\xi,f' \gets f}](f'\,x)$.
\end{proposition}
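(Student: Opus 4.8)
The plan is to verify the isomorphism \emph{externally}, stage by stage, in the presheaf category $\psh{\cube\times\omega}$. Since the map in question is given by the term $\lambda f.\lambda x.\pure[\xi\hrt{f'\gets f}]{f'\,x}$, it is in particular a morphism of presheaves, and such a morphism is an isomorphism iff its component at every object $(I,n)$ and every $\gamma\in\Gamma(I,n)$ is a bijection; I would establish the latter. First I would unfold both sides: the left-hand side using the concrete descriptions of the type $\later[\xi]{B}$ and of the functor $\Llater$ recalled above, and the right-hand side using the standard presheaf semantics of $\Pi$-types, under which an element of $\Pi(x:A).C$ over $(I,n,\gamma)$ is a family assigning, to each morphism $g=(g_0,k\le n):(J,k)\to(I,n)$ and each $a\in A(J,k,\gamma\cdot g)$, an element of $C(J,k,(\gamma\cdot g,a))$, subject to the usual compatibility with restriction.

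The case $n=0$ is immediate: the left-hand side is the terminal value $1$ by definition of $\Llater$, while on the right-hand side any $g:(J,k)\to(I,0)$ forces $k=0$ and $(\later[\xi]{B})(J,0,-)=1$, so there is exactly one compatible family; the component of the canonical map is then the unique map $1\to1$. For $n=m+1$ I would construct the inverse explicitly. Given $g=(g_0,k\le m):(J,k)\to(I,m)$, write $\bar g=(g_0,k{+}1\le m{+}1):(J,k{+}1)\to(I,n)$ for the one-step shift of $g$ in the $\omega$-coordinate. Naturality of $\den{\xi}:\den{\Gamma}\to\Llater\den{\Gamma,\Gamma'}$, together with the defining triangle $\Llater\pi\circ\den{\xi}=\Lnext$ and the fact that $\Llater\den{\Gamma,\Gamma'}$ sends $\bar g$ to $\den{\Gamma,\Gamma'}(g)$, gives both $\den{\xi}_{J,k+1}(\gamma\cdot\bar g)=\den{\xi}_{I,n}(\gamma)\cdot g$ and that the $\den{\Gamma}$-component of this element is the restriction $\restr{(\gamma\cdot\bar g)}{k}$. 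Since $A$ is constant with respect to $\omega$ (Definition~\ref{def:constant-types}), and remains so after substitution (Lemma~\ref{lem:constant-closed-under-substitution}), the restriction $A(J,k{+}1,\gamma\cdot\bar g)\to A(J,k,\restr{(\gamma\cdot\bar g)}{k})$ is the identity map of a single set, which moreover equals $A(J,k,\den{\xi}_{I,n}(\gamma)\cdot g)$; hence one and the same element $b$ can serve both as the argument of a left-hand family over $\den{\xi}_{I,n}(\gamma)\cdot g$ and as the argument fed to a right-hand family at $(\bar g,b)$.

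Putting this together, for $F$ on the right-hand side I would define the left-hand family by $G(g,b):=F(\bar g,b)$; unwinding the description of $(\later[\xi]{B})$ at a successor stage and the identifications just made shows that $F(\bar g,b)$ indeed lies in $B(J,k,(\den{\xi}_{I,n}(\gamma)\cdot g,b))$, as a left-hand family requires, and compatibility of $G$ with restriction follows from that of $F$. Finally I would check, by reading off $\lambda f.\lambda x.\pure[\xi\hrt{f'\gets f}]{f'\,x}$ at a successor stage, that the canonical map sends $G$ to the family $(\bar g,b)\mapsto G((g_0,k\le m),b)$, so that $F\mapsto G$ is a genuine two-sided inverse. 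The main obstacle is precisely this bookkeeping: matching the stage at which the canonical map evaluates $f$ and the argument $x$ with the stage $k$ at which $G$ must deliver its value, and seeing that the single hypothesis that $A$ is constant with respect to $\omega$ is exactly what permits reusing the argument $b$ across the one-step restriction --- without it the shifted argument would land in a properly restricted subset and no canonical inverse would exist. The remaining naturality and inverse-law checks are routine diagram chases.
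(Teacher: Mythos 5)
Your proposal is correct and follows essentially the same approach as the paper's own proof: both argue externally and pointwise in the presheaf category, dispose of the $n=0$ case by terminality, and construct the inverse at a successor stage $n=m+1$ by the one-step shift in the $\omega$-coordinate (your $G(g,b):=F(\bar g,b)$ is exactly the paper's $\beta_{f,k}:=\alpha_{f,k+1}$), with the hypothesis that $A$ is constant with respect to $\omega$ supplying the identifications of argument sets needed for the shift to type-check. Both accounts then defer the naturality and two-sided-inverse verification to routine computation.
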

\begin{proof}
  We need to establish an isomorphism of two presheaves on the category of elements of $\Gamma$.
  Since we already have one of the directions we will first define the other direction explicitly.
  We define
  \begin{align*}
    F : \Pi(x : A).\later[\xi]B \to \later[\xi]{\Pi(x : A).B}.
  \end{align*}
  Let $I \in \cube$, $n \in \omega$ and $\gamma \in \Gamma(I,n)$.
  Take $\alpha \in \left(\Pi(x : A).\later[\xi]{B}\right)(I,n,\gamma)$.
  If $n = 0$ then we have only one choice.
  \begin{align*}
    F_{I,0,\gamma}(\alpha) = \star
  \end{align*}
  So assume that $n = m + 1$.
  Then we need to provide an element of
  \begin{align*}
    F_{I,n,\gamma}(\alpha) \in \left(\Pi(x : A).B\right)\left(I,m,\xi_{I,n}(\gamma)\right).
  \end{align*}
  Which means that for each $f : J \to I$ and each $k \leq m$ we need to give a dependent function
  \begin{align*}
    \beta_{f,k} : (a \in A\left(J,k,(\Gamma,\Gamma')(f,k\leq m)\left(\xi_{I,n}(\gamma)\right)\right)) \to
                 B\left(J,k,(\Gamma,\Gamma')(f,k\leq m)\left(\xi_{I,n}(\gamma)\right), a\right)
  \end{align*}
  Because $\Lwftype{\Gamma}{A}$ we have
  \begin{align*}
    A\left(J,k,(\Gamma,\Gamma')(f,k\leq m)\left(\xi_{I,n}(\gamma)\right)\right)
    =
    A\left(J,k,\pi_{J,k}\left((\Gamma,\Gamma')(f,k\leq m)\left(\xi_{I,n}(\gamma)\right)\right)\right)
  \end{align*}
  where $\pi : \Gamma,\Gamma' \to \Gamma$ is the composition of projections.
  By naturality we have
  \begin{align*}
    \pi_{J,k}\left((\Gamma,\Gamma')(f,k\leq m)\left(\xi_{I,n}(\gamma)\right)\right)
    =
    \Gamma(f,k\leq m)\left(\pi_{I,m}\left(\xi_{I,n}(\gamma)\right)\right).
  \end{align*}
  Now $\pi_{I,m} = \Llater(\pi)_{I,n}$ and so we have (because $\xi$ is a delayed substitution)
  \begin{align*}
    \pi_{I,m}\left(\xi_{I,n}(\gamma)\right)
    =
    \Lnext(\gamma)_{I,n} = \Gamma(id_I,m \leq n)(\gamma).
  \end{align*}
  Hence we have
  \begin{align*}
    A\left(J,k,(\Gamma,\Gamma')(f,k\leq m)\left(\xi_{I,n}(\gamma)\right)\right)
    =
    A\left(J,k,\Gamma(f, k \leq n)(\gamma)\right).
  \end{align*}
  And because $A$ is \emph{constant} we further have
  \begin{align*}
    A\left(J,k,\Gamma(f, k \leq n)(\gamma)\right) = A(J,k+1,\Gamma(f,k+1 \leq n)(\gamma))
  \end{align*}
  (by assumption $k \leq m$ and $n = m + 1$.

  Now $\alpha_{f,k+1}$ is a dependent function
  \begin{align*}
    (a \in A(J,k+1,\Gamma(f,k+1\leq n)(\gamma))) \to
    (\later[\xi]{B})(J,k+1,\Gamma(f,k+1\leq n)(\gamma), a)
  \end{align*}
  And we have
  \begin{align*}
    (\later[\xi]{B})\left(J,k+1,\Gamma(f,k+1\leq n)(\gamma), a\right)
    = B\left(J,k,\xi_{J,k+1}(\Gamma(f,k+1\leq n)(\gamma)), a\right)
  \end{align*}
  (because the relevant restriction of $A$ is the identity).
  Now
  \begin{align*}
    \xi_{J,k+1}(\Gamma(f,k+1\leq n)) &= (\Llater(\Gamma,\Gamma'))(f,k+1\leq n)(\xi_{I,n}(\gamma))\\
                                     &= (\Gamma,\Gamma')(f,k\leq m)(\xi_{I,n}(\gamma)).
  \end{align*}

  Thus, we can define
  \begin{align*}
    \beta_{f,k} = \alpha_{f,k+1}.
  \end{align*}

  The fact that $\beta$ is a natural family follows from the fact that $\alpha$ is a natural family.
  Naturality of $F$ follows easily by the fact that restrictions for $\Pi$ types are defined by precomposition.

  The fact that it is the inverse to the canonical morphism follows by a tedious computation.
\end{proof}

\begin{corollary}
  \label{cor:later-pi-and-faces}
  If $\Lhastype{\Gamma}{\phi}{\LFace}$ then we have an isomorphism of types
  \begin{align}
    \Gamma \vdash \later[\xi]{\Pi(p : \Lface{\phi}).B} \cong \Pi(x : \Lface{\phi}).\later[\xi]{B}.
  \end{align}
\end{corollary}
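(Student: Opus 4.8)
The plan is to obtain this as a direct instance of Proposition~\ref{prop:type-iso-pi-later-constant}. That proposition gives exactly the desired isomorphism $\later[\xi]{\Pi(x : A).B} \cong \Pi(x : A).\later[\xi]B$ whenever $A$ is a type in context $\Gamma$ that is \emph{constant with respect to $\omega$} in the sense of Definition~\ref{def:constant-types}. So I would set $A := \Lface{\phi}$, which is a well-formed type in context $\Gamma$ since $\Lhastype{\Gamma}{\phi}{\LFace}$, and it then only remains to verify that $\Lface{\phi}$ is constant with respect to $\omega$.

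To check this, recall that by definition $\Lface{\phi} = \LId{\LFace}{\phi}{\top}$. By Lemma~\ref{lem:faces-are-constant} (specialised to $\DD = \omega$), the presheaf of faces $\LFace$ in $\psh{\cube\times\omega}$ is isomorphic to $\pi^*$ of the face lattice over $\widehat{\cube}$, and hence lies in the essential image of $\pi^*$. Lemma~\ref{lem:identity-on-constant-type-is-constant} therefore applies to $X = \LFace$ and tells us that the generic identity type $\Lwftype{x : \LFace, \, y : \LFace}{\LId{\LFace}{x}{y}}$ is constant with respect to $\omega$. Now $\Lface{\phi}$ is precisely the substitution of this generic identity type along the context morphism $\Gamma \to (x : \LFace,\, y : \LFace)$ determined by $\phi$ and the global element $\top$; so by Lemma~\ref{lem:constant-closed-under-substitution}, which states that constancy with respect to $\omega$ is preserved under substitution, $\Lface{\phi}$ is constant with respect to $\omega$, as required.

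Applying Proposition~\ref{prop:type-iso-pi-later-constant} with $A := \Lface{\phi}$ now yields the claimed isomorphism, with the left-to-right direction being the canonical morphism induced by $\lambda f.\lambda p. \pure[\hrt{\xi, f' \gets f}]{f'\, p}$. There is essentially no obstacle here: the argument is bookkeeping assembling three earlier results. The only point that warrants a moment's care is recognising $\Lface{\phi}$ as a substitution instance of the identity type on the constant presheaf $\LFace$, which is immediate once one unfolds the definition $\Lface{\phi} = \LId{\LFace}{\phi}{\top}$.
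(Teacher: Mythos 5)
Your proof is correct and follows exactly the same route as the paper: reduce to Proposition~\ref{prop:type-iso-pi-later-constant}, then establish constancy of $\Lface{\phi}$ by combining Lemma~\ref{lem:constant-closed-under-substitution}, Lemma~\ref{lem:identity-on-constant-type-is-constant}, and Lemma~\ref{lem:faces-are-constant}. You simply spell out the substitution step a bit more explicitly than the paper does.
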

\begin{proof}
  Using Proposition~\ref{prop:type-iso-pi-later-constant} it suffices to show that $\Lwftype{\Gamma}{\Lface{\phi}}$ is constant with respect to $\omega$.
  Using Lemmas~\ref{lem:constant-closed-under-substitution} and~\ref{lem:identity-on-constant-type-is-constant} it further suffices to show that the presheaf $\LFace$ is in the essential image of $\pi^*$, which is exactly what Lemma~\ref{lem:faces-are-constant} states.
\end{proof}

Finally we need the following technical construction, allowing us to view delayed substitutions as terms in a certain way.
This is needed in showing that later types have compositions in the following section.
\paragraph{Delayed substitutions and later.}
As we mentioned above a delayed substitution $\xi$ is a morphism
\begin{align*}
  \Gamma \to \Llater(\Gamma,\Gamma').
\end{align*}
Hence we can treat it as a term of type $\Llater(\Gamma,\Gamma')$ in context $\Gamma$.
Further given a morphism $\gamma : \Iw \to \Gamma$ we can form the morphism
\begin{align*}
  \xi \circ \gamma : \Iw \to \Llater(\Gamma,\Gamma').
\end{align*}
Finally by using Proposition~\ref{prop:type-iso-pi-later-constant} we can transport $\xi \circ \gamma : \Iw \to \Llater(\Gamma,\Gamma')$ to a term
\begin{align*}
  \overline{\xi \circ \gamma} : \Llater(\Iw \to \Gamma,\Gamma')
\end{align*}
in the empty context.
For this term we have the following equality.

\begin{lemma}
  \label{lem:later-shenanigans}
  Given $\gamma$ and $\xi$ as above then for any type $\Lwftype{\Gamma,\Gamma'}{A}$ we have the equality of types
  \begin{align*}
    i : \Iw \vdash \later[\hrt{\gamma' \gets \overline{\xi \circ \gamma}}]A\left(\gamma'(i)\right) = \later[\xi\gamma(i)]A\left(\gamma(i)\right).
  \end{align*}
  Here $\xi\gamma(i)$ is the \emph{delayed substitution} $\dsubst{}{\Iw}{\Gamma,\Gamma'}$ obtained by substitution in terms of $\xi$.
\end{lemma}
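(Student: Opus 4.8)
The plan is to establish this as an equality of presheaves on the category of elements of $\Iw$, checked fibrewise. Since $\Iw = \pi^*(\I)$, a point of $\Iw$ over $(I,n)$ is simply an element $r \in \dM{I}$, and the restriction along the $\omega$-component is the identity on it; so it suffices to verify that over every $(I,n)$ and every such $r$ the two types have the same underlying set, compatibly with all restriction maps. The case $n=0$ is immediate: by the defining clause of $\later[\xi]{A}$ at stage $0$, both sides take the value $1$ (the chosen singleton) over every point above $(I,0)$, and the only restriction maps out of these fibres are the canonical ones, so the two types agree there.

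For $n = m+1$ I would unfold the defining equation $(\later[\eta]{C})(I,n,\delta) = C\bigl(I,m,\den{\eta}_{I,n}(\delta)\bigr)$ on both sides. On the right this yields $A(\gamma(i))$ evaluated at the point of $(i:\Iw),\Gamma'$ built from $r$ and $\den{\xi\gamma(i)}_{I,m+1}(r)$; by the definition of the interpretation of a delayed substitution, $\den{\xi\gamma(i)}_{I,m+1}(r)$ splits into its $\Iw$-part — which, since the delayed-substitution condition forces the $\Iw$-projection to be $\Lnext$, returns $r$ again — and its $(\Gamma,\Gamma')$-part, which is $\den{\xi}_{I,m+1}$ applied to $\gamma_{I,m+1}(r)$. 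On the left the same unfolding produces $A(\gamma'(i))$ evaluated using $\den{\hrt{\gamma'\gets\overline{\xi\circ\gamma}}}_{I,m+1}(r) = \bigl(r,\,(\overline{\xi\circ\gamma})_{I,m+1,\star}(\star)\bigr)$, where $\gamma'(i)$ then extracts the $(\Iw\to\Gamma)$-component of $\gamma'$, applies it at $i := r$, and re-pairs it with the $\Gamma'$-component. So the whole statement comes down to a single identity: extracting the path component of $\overline{\xi\circ\gamma}$ at stage $m+1$ and evaluating it at $r$ recovers $\den{\xi}_{I,m+1}(\gamma_{I,m+1}(r))$.

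That identity is where I would use the \emph{explicit} description of the isomorphism in the proof of Proposition~\ref{prop:type-iso-pi-later-constant}. Instantiated with the (constant with respect to $\omega$) type $\Iw$ and the empty delayed substitution, the comparison morphism there is given by $\beta_{f,k} := \alpha_{f,k+1}$: it merely re-indexes the $\omega$-coordinate by one and is otherwise literally the identity on the underlying data. Hence transposing $\xi\circ\gamma : \Iw \to \Llater(\Gamma,\Gamma')$ through this isomorphism and then evaluating the resulting path at $r$ returns exactly $(\xi\circ\gamma)_{I,m+1}(r) = \den{\xi}_{I,m+1}(\gamma_{I,m+1}(r))$. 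Substituting this back, both fibres equal $A$ evaluated at one and the same element of the category of elements of $\Gamma,\Gamma'$, hence coincide as sets; and the restriction maps on both sides are those inherited from $A$ in precisely the same manner, hence coincide too. This proves the equality of types.

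I expect the only genuine difficulty to be bookkeeping: keeping track of the one-step shift in the $\omega$-index introduced both by $\later$ and by the isomorphism of Proposition~\ref{prop:type-iso-pi-later-constant}, together with how substitution acts on a delayed substitution (the object written $\xi\gamma(i)$), so that $A(\gamma'(i))$ under $\hrt{\gamma'\gets\overline{\xi\circ\gamma}}$ really does unfold to $A(\gamma(i))$ under $\xi\gamma(i)$. No new idea beyond the concrete formula for the isomorphism recorded in that proof and the defining clauses of $\later$ and of delayed substitutions stated above should be needed.
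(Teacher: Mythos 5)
Your proposal is correct and follows exactly the approach the paper indicates: the paper's proof is a single sentence (``Proof by computation; we require the unfolding of the definition of the isomorphism in Proposition~\ref{prop:type-iso-pi-later-constant}''), and you have carried out precisely that computation, splitting on $n=0$ versus $n=m+1$ and reducing the latter to the identity recovered from the explicit $\beta_{f,k} = \alpha_{f,k+1}$ formula in the proof of that proposition.
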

\begin{proof}
  Proof by computation; we
  require the unfolding of the definition of the isomorphism in Proposition~\ref{prop:type-iso-pi-later-constant}.
\end{proof}

\subsubsection{Interpreting later types}\label{sec:interp_later}
The type part of the delayed substitution type is interpreted using delayed substitutions in the language $\cL'$.
In this section we show that we can also construct a composition term for this type.
\begin{lemma}
  Formation of $\later{\xi}$-types preserves compositions.
  More precisely, if $\later[\xi]{A}$ is a well-formed type in context $\Gamma$ and we have a composition term $\bfc_A : \LCompTy{\Gamma,\Gamma'}{A}$, then there is a composition term $\bfc : \LCompTy{\Gamma}{\later[\xi]{A}}$.
\end{lemma}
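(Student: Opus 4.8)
The plan is to construct $\bfc$ by applying $\bfc_A$ ``under $\later$'', shuttling the composition data back and forth across the isomorphisms that commute $\later$ past $\Pi$-types with $\omega$-constant domain. Unwinding the goal, we must work in the context $\gamma : \I \to \Gamma$, $\phi : \LFace$, $u : \Pi(i:\I).\Lface{\phi}\to(\later[\xi]{A})(\gamma(i))$, $a_0 : (\later[\xi]{A})(\gamma(0))[\phi\mapsto u(0)]$ and produce a term of $(\later[\xi]{A})(\gamma(1))[\phi\mapsto u(1)]$, naturally in these data; at the degenerate $\omega$-stage the target type is a subsingleton of $1$, so essentially all the content sits at later stages.

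First I would replay the ``delayed substitution as a term'' construction from the paragraph preceding this lemma: from $\gamma$ and $\xi$ form the later path $\overline{\xi\circ\gamma}$ in $\Gamma,\Gamma'$, and set $\zeta \defeq \hrt{\gamma'\gets\overline{\xi\circ\gamma}}$. By Lemma~\ref{lem:later-shenanigans} we then have, in context $i:\I$, the type equality $\later[\zeta]{A(\gamma'(i))} = (\later[\xi]{A})(\gamma(i))$. Now Proposition~\ref{prop:type-iso-pi-later-constant} supplies an isomorphism commuting $\later[\zeta]$ past $\Pi(i:\I).(-)$ (since $\I$, being $\pi^*(\II)$, is constant with respect to $\omega$), and Corollary~\ref{cor:later-pi-and-faces} one commuting it past $\Lface{\phi}\to(-)$. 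Transposing $u$ across these, together with the identification above, yields a term $\hat u$ of $\later[\zeta]{\big(\Pi(i:\I).\Lface{\phi}\to A(\gamma'(i))\big)}$; similarly the underlying element of $a_0$ transposes to a term $\hat a_0$ of $\later[\zeta]{A(\gamma'(0))}$, which still lies on extent $\phi$ over $\hat u(0)$ by equality reflection.

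Next I would apply $\bfc_A$ later: since $\bfc_A : \LCompTy{\Gamma,\Gamma'}{A}$ is a closed term, $\pure{\bfc_A} : \later{\LCompTy{\Gamma,\Gamma'}{A}}$, and the single delayed substitution $\pure[\hrt{\gamma'\gets\overline{\xi\circ\gamma},\, c\gets\pure{\bfc_A},\, v\gets\hat u,\, a\gets\hat a_0}]{c\,\gamma'\,\phi\,v\,a}$ is a well-formed term of $\later[\zeta]{A(\gamma'(1))}$ (using that $\phi$, being $\omega$-constant, can be supplied directly, and that entries a later-type does not depend on may be dropped). Transposing this term back across Corollary~\ref{cor:later-pi-and-faces} and Lemma~\ref{lem:later-shenanigans}, and checking in the logic that it agrees with $u(1)$ on extent $\phi$ (which it does, since $\bfc_A$ satisfies its own boundary equation and the isomorphisms respect it), gives the required element of $(\later[\xi]{A})(\gamma(1))[\phi\mapsto u(1)]$. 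Naturality of all the isomorphisms involved and of $\bfc_A$ makes this assignment natural in $\gamma,\phi,u,a_0$, so it defines the term $\bfc$; no judgemental equation relating $\later$ and composition is asserted, so existence is all that is needed.

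The hard part will be the bookkeeping in the middle step: matching the shape of composition data for $\later[\xi]{A}$ --- a single $\later$ whose delayed substitution $\xi$ is reindexed along $\gamma$ --- against the shape obtained by pushing $\bfc_A$ under $\later$, namely a $\later$ wrapping a $\Pi$ over $\I$ and $\Lface{\phi}$ relative to the later path $\overline{\xi\circ\gamma}$. Lemma~\ref{lem:later-shenanigans}, Proposition~\ref{prop:type-iso-pi-later-constant} and Corollary~\ref{cor:later-pi-and-faces} are precisely the bridges engineered for this; once the identifications are set up, the construction itself and the verification of the $\phi$-boundary equation are routine calculations.
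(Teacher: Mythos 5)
Your proposal is essentially the paper's proof: both rewrite the composition data via Lemma~\ref{lem:later-shenanigans}, transpose $u$ (and $a_0$) across Proposition~\ref{prop:type-iso-pi-later-constant} and Corollary~\ref{cor:later-pi-and-faces} into a single $\later$ over the reified delayed substitution $\overline{\xi\circ\gamma}$, apply $\bfc_A$ under $\pure$, and convert back, checking the boundary equation on extent $\phi$. The only cosmetic difference is that you bind $c\gets\pure{\bfc_A}$ in the delayed substitution whereas the paper uses the closed $\bfc_A$ directly in the body of $\pure$; these are interchangeable by the $\pure$-substitution rule.
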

\begin{proof}
  We introduce the following variables:
  \begin{align*}
    \gamma &: \I \to \Gamma \\
    \phi &: \Face \\
    u &: \Pi(i:\I).\left((\later[\xi]{A}){(\gamma\, i)}\right)^{\phi} \\
    a_0 &: (\later[\xi]{A})(\gamma\, 0)[\phi \mapsto u\, 0].
  \end{align*}
Using Lemma~\ref{lem:later-shenanigans} we can rewrite the types of $u$ and $a_0$:
\begin{align*}
  u &: \Pi(i : \I).\left(\later[\hrt{\gamma' \gets \overline{\xi\circ\gamma}}]{A(\gamma' \, i)}\right)^{\phi} \\
  a_0 &: \later[\hrt{\gamma' \gets \overline{\xi\circ\gamma}}]{A(\gamma'\, 0)}.
\end{align*}
Furthermore, we have the following type isomorphisms:
\begin{align*}
  \Pi(i : \I).\left(\later[\hrt{\gamma' \gets \overline{\xi\circ\gamma}}]{A(\gamma' \, i)}\right)^{\phi}
  &\cong
    \Pi(i : \I). \later[\hrt{\gamma' \gets \overline{\xi\circ\gamma}}]{\left(A(\gamma' \, i)\right)^{\phi}}
    \tag{Corollary~\ref{cor:later-pi-and-faces}} \\
  &\cong
    \later[\hrt{\gamma' \gets \overline{\xi\circ\gamma}}]{\Pi(i : \I). \left(A(\gamma' \, i)\right)^{\phi}},
    \tag{Proposition~\ref{prop:type-iso-pi-later-constant}}
\end{align*}
which means that we have a term
\[
  \tilde{u} : \later[\hrt{\gamma' \gets \overline{\xi\circ\gamma}}]{\Pi(i : \I). \left(A(\gamma' \, i)\right)^{\phi}}.
\]
We can now -- almost -- form the term
\begin{equation}
  \tag{$*$}
  \label{eq:comp-for-later}
  \pure[\vrt{\gamma' \gets \overline{\xi\circ\gamma} \\ u' \gets \tilde{u} \\ a_0' \gets a_0}]{%
    \bfc_A \, \gamma' \, \phi \, u' \, a_0'}
  ~:~ \later[\hrt{\gamma' \gets \overline{\xi\circ\gamma}}]{A(\gamma' \, 1)}.
\end{equation}
In order for the composition sub-term to be well-typed, we need that $a_0' = u\, 0$ under the assumption $\phi$. This is equivalent to saying that the type
\[
  \later[\vrt{\gamma' \gets \overline{\xi\circ\gamma} \\ u' \gets \tilde{u} \\ a_0' \gets a_0}]{%
    (\LId{}{a_0'}{u' \, 0})^{\phi}}
\]
is inhabited. We transform the type as follows:
\begin{align*}
  \later[\vrt{\gamma' \gets \overline{\xi\circ\gamma} \\ u' \gets \tilde{u} \\ a_0' \gets a_0}]{%
  (\LId{}{a_0'}{u' \, 0})^{\phi}}
& \cong
  \left( \later[\vrt{\gamma' \gets \overline{\xi\circ\gamma} \\ u' \gets \tilde{u} \\ a_0' \gets a_0}]{%
  \LId{}{a_0'}{u' \, 0}}\right)^{\phi}
  \tag{Corollary~\ref{cor:later-pi-and-faces}}  \\
& =
  \left( \LId{}{\pure[\vrt{\gamma' \gets \overline{\xi\circ\gamma} \\ u' \gets \tilde{u} \\ a_0' \gets a_0}]{a_0'}}{%
  \pure[\vrt{\gamma' \gets \overline{\xi\circ\gamma} \\ u' \gets \tilde{u} \\ a_0' \gets a_0}]{u'\, 0}} \right)^{\phi} \\
& =
  \left(\LId{}{a_0}{u\, 0}\right)^{\phi},
\end{align*}
where the last equality uses that $\tilde{u}$ is defined using the inverse of $\lambda f \lambda x . \pure[\xi\hrt{f' \gets f}]{f' \, x}$ (Proposition~\ref{prop:type-iso-pi-later-constant}).
By assumption it is the case that $\left(\LId{}{a_0}{u\, 0}\right)^{\phi}$ is inhabited, and therefore (\ref{eq:comp-for-later}) is well-defined.
This concludes the existence part of the proof, as
\[
  \later[\hrt{\gamma' \gets \overline{\xi\circ\gamma}}]{A(\gamma' \, 1)}
  =
  (\later[\xi]{A})(\gamma \, 1),
\]
by Lemma~\ref{lem:later-shenanigans}.

We now have to show that the term \eqref{eq:comp-for-later} is equal to $u\, 1$ under the assumption of $\phi$.
Assuming $\phi$, we get by the properties of $\bfc_A$ that
\[
  \pure[\vrt{\gamma' \gets \overline{\xi\circ\gamma} \\ u' \gets \tilde{u} \\ a_0' \gets a_0}]{%
    \bfc_A \, \gamma' \, \phi \, u' \, a_0'}
  =
  \pure[\vrt{\gamma' \gets \overline{\xi\circ\gamma} \\ u' \gets \tilde{u} \\ a_0' \gets a_0}]{%
    u' \, 1},
\]
and by the definition of $\tilde{u}$ (Proposition~\ref{prop:type-iso-pi-later-constant}) we have that
\[
  \pure[\vrt{\gamma' \gets \overline{\xi\circ\gamma} \\ u' \gets \tilde{u} \\ a_0' \gets a_0}]{u'\,1}
  = u \, 1
\]
as desired.
\end{proof}

Note that in the lemma above we do not require that the types in $\Gamma'$ are fibrant.

\subsection{Summary of the semantics of \gctt}
\label{sec:summary-of-the-model}

The interpretation of the syntax of $\gctt$ follows the pattern for interpreting dependent type theory outlined in
Cohen et al.~\cite[sec~8.2]{Cubical}.
In summary, the following judgements need to be interpreted.
\begin{itemize}
\item $\den{\wfcxt{\Gamma}}$
\item $\den{\wftype{\Gamma}{A}}$
\item $\den{\hastype{\Gamma}{t}{A}}$
\item $\den{\dsubst{\xi}{\Gamma}{\Gamma'}}$
\item $\den{\rho : \Gamma \to \Gamma'}$
\end{itemize}
where the last one is a context morphism.
We have shown the constructions needed to interpret these judgements, but we do not show the details of their interpretations and the verification of the equations.
These follow straightforwardly from the properties of semantic objects we have established.

In summary, the interpretations of the judgements are constructed in three stages.
\begin{enumerate}
\item Every presheaf topos with a non-trivial internal De Morgan algebra $\I$ satisfying
  the disjunction property can be used to give semantics to the subset of the cubical type
  theory $\ctt$ without glueing and the universe. 
  Further, for any category $\DD$, the category of presheaves on $\cube
  \times \DD$ has an interval $\I$, which is the inclusion of the interval in presheaves
  over the category of cubes $\cube$. This was done in Sections~\ref{subsub:assum1} and~\ref{subsub:assum2}.
\item The topos of presheaves $\cube \times \DD$ for any small category $\DD$ with an
  initial object gives a semantics of the entire $\ctt$. This was
  done in Sections~\ref{sec:assumption-3-general} and~\ref{subsub:assum4}.
\item In Section~\ref{sec:model-gctt}, we showed that the category of presheaves on $\cube
  \times \omega$ gives semantics for $\gctt$.
\end{enumerate}%
For all these three cases we have:
\begin{theorem}[Soundness and consistency]
  \label{thm:soundness-consistency}
  The interpretation in particular satisfies the following properties.
  If 
  \begin{align*}
    \eqjudg{\Gamma}{A}{B}
  \end{align*}
  is derivable then the types $\den{\wftype{\Gamma}{A}}$ and $\den{\wftype{\Gamma}{B}}$ are interpreted as the same object.

  If 
  \begin{align*}
    \eqjudg{\Gamma}{t}{s}[A]
  \end{align*}
  is derivable then the terms $\den{\hastype{\Gamma}{t}{A}}$ and $\den{\hastype{\Gamma}{s}{A}}$ are interpreted as equal.

  As a consequence, the judgement $\hastype{}{t}{\Path~\Nat~0~1}$ is not derivable for any closed term $t$.%
\end{theorem}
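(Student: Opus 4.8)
The plan is to prove the two soundness statements by induction on derivations, reusing every semantic construction assembled in Sections~\ref{sec:model-ctt}, \ref{sec:model-of-L} and~\ref{sec:model-gctt}, and then to read off the consistency consequence from the fact that $\Nat$ is interpreted by a discrete presheaf. For soundness I would follow the recipe of Cohen et al.~\cite[Section~8.2]{Cubical}, proceeding by mutual induction over the five judgement forms listed at the start of Section~\ref{sec:summary-of-the-model}. For each typing or formation rule of $\gctt$ one checks that the corresponding operation of $\cL'$ (or of $\cL$) produces a well-formed type or term, equipped with a composition structure where one is required; this is precisely the content of the preceding sections, the only genuinely new ingredient being that the later-type former preserves compositions (Section~\ref{sec:interp_later}). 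For each judgemental equality rule one checks that the two sides are interpreted by the same object or term: this holds because every operation used in the interpretation is invariant under judgemental equality of $\cL'$, and because the composition operators attached to the $\Pi$-, $\Sigma$-, $\Path$-, $\Glue$-, universe- and later-type formers were constructed so as to validate the equations of~\cite[Section~4.5]{Cubical} (whose routine verification was, as remarked earlier, left to the reader). The delayed-substitution and fixed-point equalities are handled as for $\gdtt$~\cite{Bizjak-et-al:GDTT}, using that $\dfix[r]{x}{t}$ is interpreted as $\Lnext(\fix{x}{t})$, discarding the semantically irrelevant $r$, and that $\fix{x}{t}$ satisfies its unfolding equation in $\cL'$ on the nose.

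For the consistency consequence I specialise to the model in $\psh{\cube\times\omega}$; for $\ctt$ on its own any $\psh{\cube\times\DD}$ with $\DD$ having an initial object works verbatim. Suppose, towards a contradiction, that $\hastype{}{t}{\Path~\Nat~0~1}$ is derivable. Interpreting this judgement yields a global element $\den{t}$ of $\den{\wftype{}{\Path~\Nat~0~1}} = \LPath{\LNat}{0}{1}$, and hence a path $p : \I \to \LNat$ with $p(0) = 0$ and $p(1) = 1$. Now $\Nat$ is interpreted by the discrete presheaf $\Delta(\mathbb{N})$, which lies in the essential image of the constant-presheaf functor, so by Lemma~\ref{lem:constant-isomorphism-path} the canonical map $\LNat \to \LNat^\I$ is an isomorphism and every path in $\LNat$ is constant (alternatively, apply the internal characterisation of Proposition~\ref{prop:internal-discrete-is-external-discrete}, which yields $p(0) = p(1)$ at once). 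Therefore $0 = p(0) = p(1) = 1$ as global elements of $\Delta(\mathbb{N})$, i.e.\ $0 = 1$ in $\mathbb{N}$, which is impossible. Hence no such $t$ exists.

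I expect essentially all of the work to sit in the inductive soundness step, and in particular in the bookkeeping needed to verify the composition equations of~\cite[Section~4.5]{Cubical} in $\cL'$ for each type former; once those are in hand the consistency argument is immediate. One small point worth flagging is that the composition operator carried by $\Nat$ --- the trivial one from Lemma~\ref{lem:discrete-types-have-comp} --- is irrelevant here, since closed terms of type $\Path~\Nat~0~1$ are global elements of the underlying presheaf only and the extra fibrancy data plays no part in the argument.
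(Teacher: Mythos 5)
Your proposal is correct and follows the same route the paper intends: soundness by induction on derivations, using the constructions of Sections~\ref{sec:model-ctt}--\ref{sec:model-gctt} to interpret each rule, and consistency via discreteness of $\LNat$. (The paper itself states the theorem without a displayed proof, remarking only that the verification is routine given the preceding machinery, so your proposal is precisely the argument the authors elide.)

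One small point to be careful about: Lemma~\ref{lem:constant-isomorphism-path} and Proposition~\ref{prop:internal-discrete-is-external-discrete} are stated in Section~\ref{sec:concrete-model-of-L} only for $\psh{\cube}$, and the internal discreteness formula is not geometric, so one cannot just appeal to preservation along $\pi^*$. You therefore need to re-derive that $\Delta(\mathbb{N})\to\Delta(\mathbb{N})^{\IDD}$ is an isomorphism in $\psh{\cube\times\omega}$. This is immediate from Lemma~\ref{lem:interval-exponential} (which is stated for $\psh{\cube\times\DD}$): $\Delta(\mathbb{N})^{\IDD}(I,n)\cong\Delta(\mathbb{N})(I\cup\{i\},n)=\mathbb{N}=\Delta(\mathbb{N})(I,n)$, and the constant map realizes this identification. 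With that remark the consistency argument goes through exactly as you wrote it, and the rest of the proposal is a faithful rendering of the paper's intended proof.
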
%
This completes the construction of a model of \gctt, as outlined in the beginning of Section~\ref{sec:semantics}.

\section{Conclusion}\label{sec:conclusion-future-work}
In this paper we have made the following contributions:
\begin{itemize}
  \item
We introduce guarded cubical type theory ($\gctt$), which combines features of cubical type theory
($\ctt$) and guarded dependent type theory ($\gdtt$).
The path equality of $\ctt$ is shown to support reasoning about extensional properties of guarded recursive operations,
and we use the interval of $\ctt$ to constrain the unfolding of fixed-points.
\item
We show that $\ctt$ can be modelled in any presheaf topos with an internal non-trivial De Morgan algebra with the disjunction property, glueing, a universe of fibrant types, and an operator $\forall$.
Most of these constructions are done via the internal logic.
We then show that a
class of presheaf models of the form $\psh{\cube \times \DD}$, for any small category $\DD$ with an initial object,
satisfy the above axioms and hence gives rise to a model of $\ctt$.
\item
We give semantics to $\gctt$ in the topos of presheaves over $\cube \times \omega$.
\end{itemize}

\paragraph{Further work.}%
We wish to establish key \emph{syntactic properties} of $\gctt$, namely decidable type-checking and canonicity for base
types. Our prototype implementation establishes some confidence in these properties.

We wish to further extend $\gctt$ with \emph{clock quantification}~\cite{Atkey:Productive}, such as is present in $\gdtt$.
Clock quantification allows for the controlled elimination of the later type-former, and hence the encoding of first-class coinductive types via guarded recursive types.
The generality of our approach to semantics in this paper should allow us to build a model by combining cubical sets with the presheaf model of $\gdtt$ with multiple clocks~\cite{Bizjak-Mogelberg:clock-sync}.
The main challenges lie in ensuring decidable type checking ($\gdtt$ relies on certain rules involving clock quantifiers which seem difficult to implement), and solving the \emph{coherence problem} for clock substitution.

The cubical model is constructive, as indicated, for example, by the forthcoming formalization in
NuPrl\footnote{\url{http://www.nuprl.org/wip/Mathematics/cubical!type!theory/}},
so it is tempting to consider our construction as the interpretation of this model in the internal logic of the topos of trees. 
One technical obstacle
to this is the absence of a constructive development of universes in presheaf
toposes. Hofmann and Streicher~\cite{Hofmann-Streicher:lifting} started from a
Grothendieck universe in a classical set theory, instead of working in the internal logic
of an ambient topos. Moreover, if $\DD$ is an internal category in $\hat\CC$, then
$\hat\CC^\DD\equiv\widehat{\CC\times\DD}$; c.f. Johnstone~\cite[Lem.\ 2.5.3]{elephant}. However,
this is not an isomorphism of categories, so we need to deal
with the usual coherence issues when interpreting type theory. Such obstacles are part of
active research. For example, see work by Voevodsky on building a new theory of models of type
theory~\cite{Voevodsky:MLIdinC}.
Our present theory centers around the geometric morphism $\hat{\pi_1}:\widehat{\cube\times \omega}\to
\hat{\cube}$. This suggests interpreting the topos of trees in the topos of cubical
sets. However, this would not complete the construction of the model, as we would
still need to add the compositions operations.

A related question is how $\gctt$ relates to the model of simplicial
presheaves over $\omega$ in Birkedal et al.~\cite{Mogelberg:2013}. However to answer this, one would
probably first need to understand the precise relation between the (non-guarded) cubical model and the
simplicial model.

Finally, some higher inductive types, like the truncation, can be added to $\ctt$.
We would like to understand how these interact with $\later$.

\paragraph{Related work.}%
Another type theory with a computational interpretation of functional extensionality, but without equality reflection, is
observational type theory ($\ott$)~\cite{Altenkirch:Observational}.
We found $\ctt$'s prototype implementation, its presheaf semantics, and its interval as a tool for controlling unfoldings,
most convenient for developing our combination with $\gdtt$, but extending $\ott$ similarly
would provide an interesting comparison.

Spitters~\cite{Spitters:TYPES} used the interval of the internal logic of cubical sets to
model identity types. Coquand~\cite{Cubical:internal} defined the composition operation internally
to obtain a model of type theory. We have extended both these ideas to a full model of
$\ctt$.
Recent independent work by Orton and Pitts~\cite{Orton:Axioms} axiomatises a model
for $\ctt$ without a universe, again building on Coquand~\cite{Cubical:internal}.
With the exception of the absence of the universe, their development is more general than ours.
Our semantic developments are sufficiently general to support the sound addition of guarded recursive types to $\ctt$.
\paragraph*{Acknowledgements.}

We gratefully acknowledge our discussions with Thierry Coquand, and the comments of our
reviewers of the conference version of this article~\cite{CSL}, and of this article.
This research was supported in part by the ModuRes Sapere Aude Advanced Grant from The Danish Council for Independent Research for the Natural Sciences (FNU),
and in part by the Guarded homotopy type theory project, funded by the Villum Foundation.
Ale\v{s} Bizjak was supported in part by a Microsoft Research PhD grant.

\bibliography{gCTT}

\newpage
\appendix
\section{\texorpdfstring{$\term{zipWith}$}{zipWith} Preserves Commutativity}
\label{app:zipwith}
We provide a formalisation of Section~\ref{sec:streams} which can be verified by our type-checker.
This file, among other examples, is available in the \texttt{gctt-examples} folder in the type-checker repository.
\lstset{%
literate=     {\\ }{{$\lambda$}}1
              {<}{{$\langle$}}1
              {>}{{$\rangle$}}1
              {|>}{{$\triangleright$}}1
              {<-}{{$\leftarrow$}}1
              {->}{{$\rightarrow$}}1
}
\lstset{
% Default style for listings
basicstyle=\footnotesize\ttfamily,
morekeywords=[1]{module, data, where},
%https://en.wikibooks.org/wiki/LaTeX/Colors#The_68_standard_colors_known_to_dvips
keywordstyle=[1]{\ttfamily\color{ForestGreen}},
morekeywords=[2]{Id,StrF,Str,StrUnfoldPath,unfoldStr,foldStr,cons,head,tail,zipWithF,zipWith,zipWithUnfoldPath,comm,zipWith_preserves_comm},
keywordstyle=[2]{\ttfamily\color{MidnightBlue}},
comment=[l]{--},
commentstyle={\ttfamily\color{RedOrange}},
}
\lstinputlisting{zipWith_preserves_comm.ctt}

\end{document}